\newcommand{\isPreprint}{1}
\newcommand{\isNotPreprint}{0}
\newcommand{\isNotPreprint}{1}
\newcommand{\AAAIdoesNotKnowHowToUseLaTeX}{\isNotPreprint}
\newcommand{\showAppendix}{\isPreprint}%
\title{Machine Learning-Powered Combinatorial Clock Auction}
\author{Ermis Nikiforos Soumalias\textsuperscript{\rm 1,\rm 3}\footnotemark[2]{\normalfont,}
	Jakob Weissteiner\textsuperscript{\rm 1,\rm 3}\footnotemark[2]{\normalfont,}
	Jakob Heiss\textsuperscript{\rm 2,\rm 3}{\normalfont,}
	Sven Seuken\textsuperscript{\rm 1,\rm 3}
}
\begin{document}

\maketitle

\begin{abstract}
We study the design of \emph{iterative combinatorial auctions (ICAs)}.
The main challenge in this domain is that the bundle space grows exponentially in the number of items. To address this, several papers have recently proposed machine learning (ML)-based preference elicitation algorithms that aim to elicit only the most important information from bidders. However, from a practical point of view, the main shortcoming of this prior work is that those designs elicit bidders' preferences via \emph{value queries} (i.e., ``What is your value for the bundle $\{A,B\}$?''). In most real-world ICA domains, value queries are considered impractical, since they impose an unrealistically high cognitive burden on bidders, which is why they are not used in practice. In this paper, we address this shortcoming by designing an \emph{ML-powered combinatorial clock auction} that elicits information from the bidders only via \emph{demand queries} (i.e., ``At prices $p$, what is your most preferred bundle of items?''). We make two key technical contributions: First, we present a novel method for training an ML model on demand queries. Second, based on those trained ML models, we introduce an efficient method for determining the demand query with the highest clearing potential, for which we also provide a theoretical foundation. We experimentally evaluate our ML-based demand query mechanism in several spectrum auction domains and compare it against the most established real-world ICA: the \emph{combinatorial clock auction (CCA)}.
Our mechanism significantly outperforms the CCA in terms of efficiency in all domains, it achieves higher efficiency in a significantly reduced number of rounds, and, using linear prices, it exhibits vastly higher clearing potential. Thus, with this paper we bridge the gap between research and practice and propose the first practical ML-powered ICA. 
\end{abstract}

\renewcommand{\thefootnote}{\fnsymbol{footnote}}
\footnotetext[1]{\if\isPreprint1
This paper is the full version of \cited{Soumalias2024MLCCA} published at AAAI'24 including the appendix.
\else
The full paper including appendix is available on arXiv via: \url{https://arxiv.org/abs/2308.10226}.%
\fi}
\footnotetext[2]{These authors contributed equally.}
\renewcommand{\thefootnote}{\arabic{footnote}}

\section{Introduction}\label{sec:Introduction}

\textit{Combinatorial auctions (CAs)} are used to allocate multiple items among several bidders who may view those items as complements or substitutes. In a CA, bidders are allowed to submit bids over \textit{bundles} of items. CAs have enjoyed widespread adoption in practice, with their applications ranging from allocating spectrum licences \citep{cramton2013spectrumauctions} to TV ad slots \citep{tvadauctions} and airport landing/take-off slots \citep{rassenti1982combinatorial}. 

One of the key challenges in CAs is that the bundle space grows exponentially in the number of items, making it infeasible for bidders to report their full value function in all but the smallest domains. 
Moreover, \citet{nisan2006communication} showed that for general value functions, CAs require an exponential number of bids in order to achieve full efficiency in the worst case. 
Thus, practical CA designs cannot provide efficiency guarantees in real world settings with more than a modest number of items. 
Instead, the focus has shifted towards \emph{iterative combinatorial auctions (ICAs)}, where bidders interact with the auctioneer over a series of rounds, providing a limited amount of information, and the aim of the auctioneer is to find a highly efficient allocation. 

The most established mechanism following this interaction paradigm is the \emph{combinatorial clock auction (CCA)} \citep{ausubel2006clock}. The CCA has been used extensively for spectrum allocation, generating over $\$ 20$ Billion in revenue between $2012$ and $2014$ alone \citep{ausubel2017practical}. 
Speed of convergence is a critical consideration for any ICA since each round can entail costly computations and business modelling for the bidders 
\citep{kwasnica2005iterative, milgrom2017Designing, bichler2017coalition}. 
Large spectrum auctions following the CCA format can take more than $100$ bidding rounds. In order to decrease the number of rounds, many CAs in practice use aggressive price update rules (e.g., increasing prices by up to $10$\% each round), which can harm efficiency \cite{ausubel2017practical}. 
Thus, it remains a challenging problem to design a practical ICA that elicits information via demand queries, is  efficient, and converges in a small number of rounds. Specifically, given the value of resources allocated in such real-world ICAs, increasing their efficiency by even one percentage point already translates into monetary gains of hundreds of millions of dollars. 

\subsection{ML-Powered Preference Elicitation}\label{subsec:Machine Learning-powered Assignemt Mechanims}

To address this challenge, researchers have proposed various ways of using machine learning (ML) to improve the efficiency of CAs. 
The seminal works by \citet{blum2004preference} and \citet{lahaie2004applying} were the first to frame preference elicitation in CAs as a learning problem.
In the same strand of research, \citet{brero2018combinatorial, brero2021workingpaper}, \citet{weissteiner2020deep} and \citet{weissteiner2022fourier} proposed ML-powered ICAs. 
At the heart of those approaches lies an ML-powered preference elicitation algorithm that uses an ML model to approximate each bidder's value function and to generate the next value query, which in turn refines that bidder's model.  
\citet{weissteiner2022monotone} designed a special network architecture for this framework while \citet{weissteiner2023bayesian} incorporated a notion of uncertainty \citep{heiss2022nomu} into the framework, further increasing its efficiency. 
Despite their great efficiency gains compared to traditional CA designs, those approaches suffer from one common limitation: they fundamentally rely on value queries of the form ``What is your value for bundle $\{A,B\}$''. 
Prior research in auction design has identified demand queries (DQs) as the best way to run an auction \cite{cramton2013spectrumauctions}. 
Their advantages compared to value queries include elimination of tacit collusion and bid signaling, as well as simplified bidder decision-making that keeps the bidders focused on what is most relevant: the relationship between prices and aggregate demand.
Additionally, value queries are cognitively more complex, and thus typically impractical for real-world ICAs.
For these reasons, DQs are the most prominent interaction paradigm for auctions in practice.

Despite the prominence of DQs in real-world applications, the only prior work on ML-based DQs that we are aware of is that of \citet{brero2018bayesian} and \citet{brero2019fast}, who proposed integrating ML in a price-based ICA to generate the next price vector in order to achieve faster convergence. Similar to our design, in these works the auctioneer maintains a model of each agent's value function, which are updated as the agents bid in the auction and reveal more information about their values. 
Then, those models are used in each round to compute new prices and drive the bidding process. 
Unlike our approach, the design of this prior work focuses solely on clearing potential, as the authors do not report efficiency results. 
Additionally, their design suffers from some significant limitations: 
(i) it does not exploit any notion of similarity between bundles that contain overlapping items, 
(ii) it only incorporates a fraction of the information revealed by the agents' bidding. Specifically, it only makes use of the fact that for the bundle an agent bids on, her value for that bundle must be larger than its price, and  
(iii) their approach is computationally intractable already in medium-sized auction domains,
as their price update rule requires a large number $l$ of posterior samples 
for \emph{expectation maximization} and then solving a linear program whose number of constraints for each bidder is proportional to $l$ times the number of bids by that agent.
These limitations are significant, as they can lead to large efficiency decreases in complex combinatorial domains. Moreover, their design cannot be easily modified to alleviate these limitations. 
In contrast, our approach effectively addresses all of these limitations. 

\subsection{Our Contributions}\label{subsec:Contribution}
In this paper, we address the main shortcomings of prior work by designing an ML-powered combinatorial clock auction. Our auction elicits information from bidders via \emph{demand queries  (DQs)} instead of value queries, while simultaneously, unlike prior work on ML-based DQs, being computationally feasible for large domains and incorporating the \textit{complete information} the demand query observations provide into the training of our ML models.
Concretely, we use \emph{Monotone-Value Neural Networks (MVNNs)} \cite{weissteiner2022monotone} as ML models, which are tailored to model monotone and non-linear combinatorial value functions in CAs.

The main two technical challenges are (i) training those MVNNs only on \emph{demand query} observations and (ii) efficiently determining the next demand query that is most likely to clear the market based on the trained MVNNs.
In detail, we make the following contributions: 
\begin{enumerate}[leftmargin=*,topsep=0pt,partopsep=0pt, parsep=0pt]
    \item We first propose an adjusted MVNN architecture, which we call \emph{multiset MVNNs (mMVNNs)}. mMVNNs can be used more generally in \textit{multiset domains} (i.e., if multiple indistinguishable copies of the same good exist)  (\Cref{subsec:mvnns}) and we prove the universality property of mMVNNs in such multiset domains (\Cref{app:thm:Universality}).
    \item\label{itm:TrainOnDQContribution} We introduce a novel method for training 
    \emph{any} MIP-formalizable and gradient descent (GD)-compatible ML model (e.g., mMVNNs)
    on \textit{demand query} observations (\Cref{subsec:training_algorithm}).\footnote{Namely, this includes neural networks with any piecewise linear activation function.}
    Unlike prior work, our training method provably makes use of the \emph{complete} information provided by the demand query observations. 
    \item\label{itm:NextPriceContribution} We introduce an efficient method for determining the price vector that is most likely to clear the market based on the trained ML models (\Cref{sec:ML-powered Demand Query Generation}). For this, we derive a simple and intuitive price update rule that results from performing GD on an objective function which is minimized exactly at clearing prices (\Cref{thm:GD_on_W}).
    \item  Based on \Cref{itm:TrainOnDQContribution,itm:NextPriceContribution}, we propose a practical ML-powered clock auction (\Cref{sec:ML-powered Clock Phase}). 
    \item We experimentally show that compared to the CCA, our ML-powered clock auction can achieve substantially higher efficiency on the order of 9\% points.
    Furthermore, using linear prices, our ML-powered clock auction exhibits significantly higher clearing potential compared to the CCA (\Cref{sec:experiments}).
\end{enumerate}

\paragraph{GitHub} Our source code is publicly available on GitHub at \url{https://github.com/marketdesignresearch/ML-CCA}.

\subsection{Further Related Work}\label{subsec:Further Related Work}

In the field of \textit{automated mechanism design}, \citet{dutting2015payment,dutting2019optimal}, \citet{golowich2018deep} and \citet{narasimhan2016automated} used ML to learn new mechanisms from data, 
while \citet{cole_rougharden2014samplecomplexity, morgensternRoughgarden2015pseudodimension} and \citet{balcan2023generalization}
bounded the sample complexity of learning approximately optimal mechanisms. 
In contrast to this prior work, our design incorporates an ML algorithm into the mechanism itself, i.e., the ML algorithm is part of the mechanism. \citet{lahaielubin2019adaptive} suggest an adaptive price update rule that increases price expressivity as the rounds progress in order to improve efficiency and speed of convergence. 
Unlike that work, we aim to improve preference elicitation 
while still using linear prices.
Preference elicitation is a key market design challenge outside of CAs too. \citet{soumalias2023machine} introduce an ML-powered mechanism for course allocation that improves preference elicitation by asking students comparison queries.

\subsection{Practical Considerations and Incentives} \label{subsec:incentives}
Our ML-powered clock phase can be viewed as an alternative to the clock phase of the CCA. In a real-world application, many other considerations (beyond the price update rule) are also important. For example, the careful design of \textit{activity rules} is vital to induce truthful bidding in the clock phase of the CCA \citep{ausubel2017practical}.
The payment rule used in the supplementary round is also important, and it has been argued that the use of the VCG-nearest payment rule, while not strategy-proof, induces good incentives in practice \citep{cramton2013spectrumauctions}. 
Similar to the clock phase of the CCA, our ML-powered clock phase is not strategyproof. 
If our design were to be fielded in a real-world environment, we envision that one would combine it with carefully designed activity and payment rules in order to induce good incentives. 
Thus, we consider the incentive problem orthogonal to the price update problem and in the rest of the paper, we follow prior work \citep{brero2019fast, parkes2000iterative} and assume that bidders follow myopic best-response (truthful) bidding throughout all auction mechanisms tested.

\section{Preliminaries}\label{sec:Preliminaries}
\subsection{Formal Model for ICAs}\label{subsec:Formal Model for ICAs}
We consider \textit{multiset} CA domains with a set $N=\{1,\ldots,n\}$ of bidders and a set $M=\{1,\ldots,m\}$ of distinct items with corresponding \textit{capacities}, i.e., number of available copies, $c=(c_1,\ldots,c_m)\in\N^m$. 
We denote by $x\in \X=\{0,\ldots,c_1\}\times\ldots\times\{0,\ldots,c_m\}$ a bundle of items represented as a positive integer vector, where $x_{j}=k$ iff item $j \in M$ is contained $k$-times in $x$. 
The bidders' true preferences over bundles are represented by their (private) value functions $v_i: \X\to \R_{\geq0},\,\, i \in N$, i.e., $v_i(x)$ represents bidder $i$'s true value for bundle $x\in \X$. We collect the value functions $v_i$ in the vector $v=(v_i)_{i \in N}$. By $a=(a_1,\ldots,a_n) \in \X^n$ we denote an allocation of bundles to bidders, where $a_i$ is the bundle bidder $i$ obtains. We denote the set of \emph{feasible} allocations by $\F=\left\{a \in \X^n:\sum_{i \in N}a_{ij} \le c_j, \,\,\forall j \in M\right\}$. 
We assume that bidders have quasilinear utility functions $u_i$ of the form $u_i(a_i) = v_i(a_i) - \pi_i$ where $v_i$ can be highly non-linear and $\pi_i\in \R_{\geq0}$ denotes the bidder's payment. This implies that the (true) \emph{social welfare} $V(a)$ of an allocation $a$ is equal to the sum of all bidders' values $\sum_{i\in N} v_i(a_i)$.
We let $a^* \in \argmax_{a \in {\F}}V(a)$ denote a social-welfare maximizing, i.e., \textit{efficient}, allocation. The \emph{efficiency} of any allocation $a \in \F$ is determined as $V(a)/V(a^*)$.

An ICA \textit{mechanism} defines how the bidders interact with the auctioneer and how the allocation and payments are determined. 
In this paper, we consider ICAs that iteratively ask bidders \emph{linear demand queries}. In such a
query, the auctioneer presents a vector of item prices $p \in \R^m_{\ge 0}$ and each bidder $i$ responds with her utility-maximizing bundle, i.e.,
\begin{align}\label{eq:utility_maximizing_bundle}
x_i^*(p) \in \argmax_{x\in \X}\left\{v_i(x)-\iprod{p}{x}\right\}\, i\in N,
\end{align}
where $\iprod{\cdot}{\cdot}$ denotes the Euclidean scalar product in $\R^m$.

Even though our approach could conceptually incorporate any kind of (non-linear) price function $p: \X \to \R_{\ge 0}$, our concrete implementation will only use linear prices (i.e., prices over items). Linear prices are most established in practice since they are intuitive and simple for the bidders to understand (e.g., \citep{ausubel2006clock}).

For bidder $i \in N$, we denote a set of $K\in \N$ such elicited utility-maximizing bundles and price pairs as $R_i=\left\{\left(x_i^*(p^{r}),p^{r}\right)\right\}_{r=1}^{K}$. Let $R=(R_1,\ldots,R_n)$ be the tuple of elicited demand query data from all bidders.
The ICA's (inferred) optimal feasible allocation $a^*(R)\in \F$ and payments $\pi_i\coloneqq\pi_i(R)\in \R^n_+$ are computed based on the elicited reports $R$ \emph{only}. Concretely, $a^*_{R}\in \F$ is defined as 
\begin{align}\label{WDPFiniteReports}
a^*(R) \in \argmax_{\substack{\F\, \ni \, \left(x_i^*\left(p^{r_i}\right)\right)_{i=1}^{n}\\:\, r_i\in \{1,\ldots,K\} \,\forall i \in N}}\sum_{i\in N} \iprod{p^{r_i}}{x_i^*(p^{r_i})}.
\end{align}
In words, a bidder's response to a demand query provides a lower bound on that bidder's value for the bundle she requested. That lower bound is equal to the bundle's price in the round the bundle was requested. The ICA's optimal (inferred) feasible allocation $a^*(R)\in \F$ is the one that maximizes the corresponding lower bound on social welfare, based on all elicited demand query data $R$ from the bidders. 
As payment rule $\pi_i(R)$ one could use any reasonable choice (e.g., VCG payments, see \Appendixref{app:payment_methods}{Appendix~A}).
As the auctioneer can only ask a limited number of demand queries $|R_i| \leq \Qmax$ (e.g., $\Qmax=100$), an ICA needs a practically feasible and smart preference elicitation algorithm.

\subsection{The Combinatorial Clock Auction (CCA)}\label{subsec:prelims_CCA}
We consider the CCA \citep{ausubel2006clock} as the main benchmark auction. 
The CCA consists of two phases. The initial \textit{clock phase} proceeds in rounds. In each round, the auctioneer presents anonymous item prices $p \in \mathbb{R}_{\ge 0}^m$, and each bidder is asked to respond to a \textit{demand query}, declaring her utility-maximizing bundle at $p$. 
The clock phase of the CCA is parametrized by the \textit{reserve prices} employed in its first round, and the way prices are updated. 
An item $j$ is \emph{over-demanded} at prices $p$, if, for those prices, its total demand based on the bidders' responses to the demand query exceeds its capacity, i.e., $\sum_{i \in N} (x_i^*(p))_j > c_j$. The most common price update rule is to increase the price of all over-demanded items by a fixed percentage, which we set to $5\%$ for our experiments, as in many real-world applications (e.g., \cite{canadianCCA}).

The second phase of the CCA is \textit{the supplementary round}. In this phase, each bidder can submit a finite number of additional bids for bundles of items, which are called \textit{push bids}. 
Then, the final allocation is determined based on the combined set of all inferred bids of the clock phase, plus all submitted push bids of the supplementary round. 
This design aims to combine good price discovery in the clock phase with good expressiveness in the supplementary round. In simulations, the supplementary round is parametrized by the assumed bidder behaviour in this phase, i.e., which bundle-value pairs they choose to report. As in
\citep{brero2021workingpaper}, we consider the following heuristics when simulating bidder behaviour: 
\begin{itemize}[leftmargin=*,topsep=0pt,partopsep=0pt, parsep=0pt]
    \item \textbf{Clock Bids:} Corresponds to having no supplementary round. Thus, the final allocation is determined based only on the inferred bids of the clock phase (\Cref{WDPFiniteReports}).
    \item \textbf{Raised Clock Bids}: The bidders also provide their true value for all bundles they bid on during the clock phase. 
    \item \textbf{Profit Max:} Bidders provide their true value for all bundles that they bid on in the clock phase, and additionally submit their true value for the $\QPmax$ bundles earning them the highest utility at the prices of the final clock phase.
\end{itemize}

\section{Training on Demand Query Observations}\label{sec: Training on Demand Query Observation}
In this section, we first propose a new version of MVNNs that are applicable to multiset domains $\X$ and extend the universality proof of classical MVNNs. Finally, we present our demand-query training algorithm.

\begin{algorithm}[t!]
    \DontPrintSemicolon
    \SetKwInOut{Input}{Input}
    \SetKwInOut{Output}{Output}
    \Input{Demand query data $R_i=\left\{\left(x_i^*(p^{r}),p^{r}\right)\right\}_{r=1}^{K}$, Epochs $T\in \N$, Learning Rate $\gamma>0$.}
    $\theta_0 \gets $ init mMVNN \hspace{-0.00cm} \Comment*[f]{\color{CommentColor} \citet[S.3.2]{weissteiner2023bayesian}\!}\; 
    \For{$t = 0 \text{ to } T - 1$}{
        \For(\Comment*[f]{\color{CommentColor}Demand responses for prices}){$r = 1 \text{ to } K$}{
         Solve $\hat{x}^*_i(p^r) \in \argmax_{x\in \X}\mathcal{M}_i^{\theta_t}(x) - \iprod{p^r}{x}$\label{alg_dq:line5}\;
          \If(\Comment*[f]{\color{CommentColor}mMVNN is wrong}){$\hat{x}^*_i(p^r) \neq x_i^*(p^{r})$\label{alg_dq:line6}}
          {
          $L(\theta_t) \gets 
       (\mathcal{M}_i^{\theta_t}(\hat{x}^*_i(p^r)) - \iprod{p^r}{\hat{x}^*_i(p^r)}) - (\mathcal{M}_i^{\theta_t}(x_i^*(p^{r})) - \iprod{p^r}{x_i^*(p^{r})})$\label{alg_dq:line7}
       \Comment*[r]{\color{CommentColor}Add predicted utility difference to loss}
       $\theta_{t+1} \gets \theta_{t} - \gamma(\nabla_{\theta} L(\theta))_{\theta=\theta_t}$\Comment*[r]{\color{CommentColor}SGD step}
        }
        }
    }
    \Return{Trained parameters $\theta_T$ of the mMVNN $\mathcal{M}_i^{\theta_T}$}
    \caption{\textsc{TrainOnDQs}}
    \label{alg:train_on_dqs}
\end{algorithm}

\subsection{Multiset MVNNs}\label{subsec:mvnns}
MVNNs \cite{weissteiner2022monotone} are a recently introduced class of NNs specifically designed to represent \emph{monotone} \emph{combinatorial} valuations. We introduce an adapted version of MVNNs, which we call \emph{multiset MVNNs (mMVNNs)}. 
Compared to MVNNs,  mMVNNs have an added linear normalization layer $\D$ after the input layer. 
We add this normalization since the input (i.e., a bundle) $x \in \X$ is a positive integer vector instead of a binary vector as in the classic case of indivisible items with capacities $c_j=1$ for all $j \in M$. 
This normalization ensures that $\D x \in [0,1]$ and thus we can use the weight initialization scheme from \cite{weissteiner2023bayesian}. 
Unlike MVNNs, mMVNNs incorporate at a structural level the prior information that some items are identical and consequently significantly reduce the dimensionality of the input space. 
This improves the sample efficiency of mMVNNs, which is especially important in applications with a limited number of samples such as auctions. 
For more details on mMVNNs and their advantages, please see \Appendixref{sec:app_mMVNNs}{Appendix~B}.

\begin{definition}[Multiset MVNN]\label{def:MVNN}
		An mMVNN $\MVNNi{}:\X \to \R_{\ge0}$  for bidder $i\in N$ is defined as
            {\small
		\begin{equation}\label{eq:MVNN}
		\MVNNi{x}\coloneqq  W^{i,K_i}\varphi_{0,t^{i, K_i-1}}\left(\ldots\varphi_{0,t^{i, 1}}(W^{i,1}\left(\D x\right)+b^{i,1})\ldots\right)
		\end{equation}
             }
		\begin{itemize}[leftmargin=*,topsep=0pt,partopsep=0pt, parsep=0pt]
		\item $K_i+2\in\mathbb{N}$ is the number of layers ($K_i$ hidden layers),
		\item $\{\varphi_{0,t^{i, k}}{}\}_{k=1}^{K_i-1}$ are the MVNN-specific activation functions with cutoff $t^{i, k}>0$, called \emph{bounded ReLU (bReLU)}:
		\begin{align}\label{itm:MVNNactivation}
		\varphi_{0,t^{i, k}}(\cdot)\coloneqq\min(t^{i, k}, \max(0,\cdot))
		\end{align}
		\item $W^i\coloneqq (W^{i,k})_{k=1}^{K_i}$ with $W^{i,k}\ge0$ and $b^i\coloneqq (b^{i,k})_{k=1}^{K_i-1}$ with $b^{i,k}\le0$ are the \emph{non-negative} weights and \emph{non-positive} biases of dimensions $d^{i,k}\times d^{i,k-1}$ and $d^{i,k}$, whose parameters are stored in $\theta=(W^i,b^i)$.
        \item\label{itm:D} $\D\coloneqq \diag\left(\nicefrac{1}{c_1},\ldots,\nicefrac{1}{c_m}\right)$ is the linear normalization layer that ensures $\D x\in [0,1]$ and is not trainable.
		\end{itemize}
\end{definition}
In \Cref{app:thm:Universality}, we extend the proof from \citet{weissteiner2022monotone} and show that mMVNNs
 are \textit{also universal} in the set of monotone value functions defined on a multiset domain $\X$. For this, we first define the following properties:
\begin{enumerate}[align=left, leftmargin=*,topsep=2pt]\label{monotonicity_and_normalization}
\item[\textbf{(M)}]\label{itm:monotonicity}\textbf{Monotonicity}~(\emph{``more items weakly increase value''}):\\ For $a,b \in \X$: if $a\leq b$, i.e.~$\forall k \in M: a_i\leq b_i$, %
    it holds that $\hvi{a}\le \hvi{b}$,
    \item[\textbf{(N)}]\label{itm:normalization}\textbf{Normalization}~(\emph{''no value for empty bundle''}):\\ $\hvi{\emptyset}=\hvi{(0,\ldots,0)}\coloneqq 0$,
\end{enumerate}
These properties are common assumptions and are satisfied in many market domains.
We can now present the following universality result:
\begin{theorem}[Multiset Universality]\label{app:thm:Universality}\emph{Any} value function $\hvi{}:\X\to\Rp$ that satisfies  \textbf{\monoton{}} and \textbf{\normalized{}} can be represented exactly as an mMVNN~$\MVNNi{}$ from \Cref{def:MVNN}, i.e., for 
$\Vmon:=\{\hvi{}:\X \to \Rp|\, \text{satisfy \textbf{(M)} and \textbf{(N)}}\}$ it holds that
{%
\begin{align}
\Vmon=\left\{\MVNNi[(W^i,b^i)]{}: W^{i}\ge0, b^{i}\le0\} \right\}.
\end{align}}
\end{theorem}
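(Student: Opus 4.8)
The plan is to prove the two set inclusions separately, dispatching the easy containment directly and obtaining the universality (the reverse containment) by reducing to the binary MVNN universality result of \citet{weissteiner2022monotone} rather than re-deriving an explicit construction. For the easy direction I would show that every mMVNN lies in $\Vmon$: monotonicity \textbf{(M)} holds because $\D$ has non-negative diagonal, every weight matrix satisfies $W^{i,k}\ge 0$, and every bReLU $\varphi_{0,t^{i,k}}$ is non-decreasing, so the composition is coordinatewise non-decreasing in $x$; normalization \textbf{(N)} holds because at $x=(0,\ldots,0)$ we have $\D x = 0$, and since each bias satisfies $b^{i,k}\le 0$ the argument of the first bReLU is $\le 0$, forcing a zero output that then propagates through all subsequent layers to yield $\MVNNi{(0,\ldots,0)}=0$.

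For the universality direction, the idea is to linearize the multiset structure into a binary one so that the known theorem applies as a black box. I would introduce the binary domain $\{0,1\}^{d}$ with $d\coloneqq\sum_{j\in M}c_j$, indexing coordinates by pairs $(j,\ell)$ with $j\in M$ and $\ell\in\{1,\ldots,c_j\}$, together with the counting map $\mathrm{cnt}(y)_j\coloneqq\sum_{\ell=1}^{c_j}y_{(j,\ell)}\in\{0,\ldots,c_j\}$. Setting $w(y)\coloneqq \hvi{\mathrm{cnt}(y)}$ defines a function on $\{0,1\}^d$ that is monotone and normalized, since $\mathrm{cnt}$ is coordinatewise monotone with $\mathrm{cnt}(0)=(0,\ldots,0)$ and $\hvi{}$ satisfies \textbf{(M)} and \textbf{(N)}. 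By binary MVNN universality, $w$ is represented exactly by some MVNN $\mathcal{N}$ with non-negative weights, non-positive biases, and bReLU activations. The bridge back to $\X$ is the level-indicator lift $\beta:\X\to\{0,1\}^d$ with $\beta(x)_{(j,\ell)}=1$ iff $x_j\ge\ell$: since $x_j$ is an integer with $x_j\le c_j$, one checks $\mathrm{cnt}(\beta(x))=x$, hence $w(\beta(x))=\hvi{x}$ for all $x\in\X$, so it suffices to realize $x\mapsto \mathcal{N}(\beta(x))$ as an mMVNN.

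The final and most delicate step is to show that $\beta$ is computable by the mandatory initial portion of the mMVNN architecture. For integer inputs one has $\beta(x)_{(j,\ell)}=\varphi_{0,1}\!\left(x_j-(\ell-1)\right)$, because the argument equals $1$ whenever $x_j\ge\ell$ and is $\le 0$ otherwise. Since the architecture forces $x$ to first pass through the fixed normalization $\D$, I would recover $x_j$ as $c_j\,(\D x)_j$ and thus write $\beta(x)_{(j,\ell)}=\varphi_{0,1}\!\left(c_j\,(\D x)_j-(\ell-1)\right)$. This is exactly a first hidden layer of bReLU type with cutoff $t=1$, whose weight on $(\D x)_j$ is $c_j\ge 0$ and whose bias is $-(\ell-1)\le 0$, so it respects the mMVNN sign constraints. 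Prepending this layer to $\mathcal{N}$ — taking $W^{i,1}$ to encode the coefficients $c_j$, $b^{i,1}$ to encode the offsets $-(\ell-1)$, and letting the remaining weights, biases, and cutoffs be those of $\mathcal{N}$ — produces a single network that equals $\hvi{}$ on all of $\X$ and is, by construction, an mMVNN in the sense of \Cref{def:MVNN}; a quick count confirms the layer and activation numbers match. The main obstacle is precisely this bookkeeping: one must verify that composing the non-trainable layer $\D$ with the learnable coefficients $c_j$ recovers the integer thresholds exactly, that every weight stays non-negative and every bias non-positive after stitching $\mathcal{N}$ on top, and that bReLU with cutoff $1$ computes the $0/1$ level indicators without error on the integer lattice; the remaining facts (well-definedness and monotonicity of $w$, and the easy inclusion) are routine.
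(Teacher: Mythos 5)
Your proof is correct, but it takes a genuinely different route from the paper's for the universality direction. The paper gives a fully explicit, self-contained construction: it sorts all $|\X|$ bundles by value, writes $\hvi{\xi}$ as a telescoping sum $\sum_{l}(w_{l+1}-w_l)\1{\forall j\le l:\ \xi\not\leq x_j}$, and realizes the indicators with three non-linear bReLU layers of dimensions $[m,\,m(|\X|-1),\,|\X|-1,\,|\X|-1,\,1]$. You instead reduce to the binary universality theorem of \citet{weissteiner2022monotone} via a thermometer encoding: the lift $\beta(x)_{(j,\ell)}=\varphi_{0,1}(x_j-(\ell-1))=\varphi_{0,1}\bigl(c_j(\D x)_j-(\ell-1)\bigr)$ is implementable as one extra bReLU layer with non-negative weights $c_j$ and non-positive biases $-(\ell-1)$, and $w\coloneqq\hvi{}\circ\mathrm{cnt}$ is monotone and normalized on $\{0,1\}^d$, so the black-box binary MVNN for $w$ can be stitched on top; all the checks you flag (that $\mathrm{cnt}\circ\beta=\mathrm{id}$ on the integer lattice, that the cutoff-$1$ bReLU computes exact $0/1$ indicators there, and that the sign constraints survive the composition) go through. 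Your argument is shorter and modular, and it yields the same depth as the paper's (one lifting layer plus the two non-linear layers of the binary construction, i.e., three in total, so it does not settle the paper's open question of whether two suffice); the trade-off is width, since instantiating the binary theorem on $\{0,1\}^d$ with $d=\sum_j c_j$ explicitly would give hidden layers of size on the order of $2^d-1$ rather than the paper's $|\X|-1=\prod_j(c_j+1)-1$, which is immaterial for the existence statement but loses the paper's concrete size bound. The easy inclusion is handled the same way in both proofs.
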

\begin{proof}
Please, see \Appendixref{subsec:MVNN Universality}{Appendix~B.2} for the proof.
\end{proof}

Furthermore, we can formulate maximization over mMVNNs, i.e., $\max_{x\in\X}\MVNNi{x}-\iprod{p}{x}$, as a \emph{mixed integer linear program (MILP)} %
analogously to \citet{weissteiner2022monotone}, which will be key for our ML-powered clock phase. 

\subsection{Training Algorithm}\label{subsec:training_algorithm}

In \Cref{alg:train_on_dqs}, we describe how we train, for each bidder $i\in N$, a distinct mMVNN $\MVNNi{}$ on demand query data $R_i$.
\begin{figure}[t!]
    \begin{center}
    \resizebox{0.7\columnwidth}{!}{
\includegraphics[trim=10 9 40 55, clip]{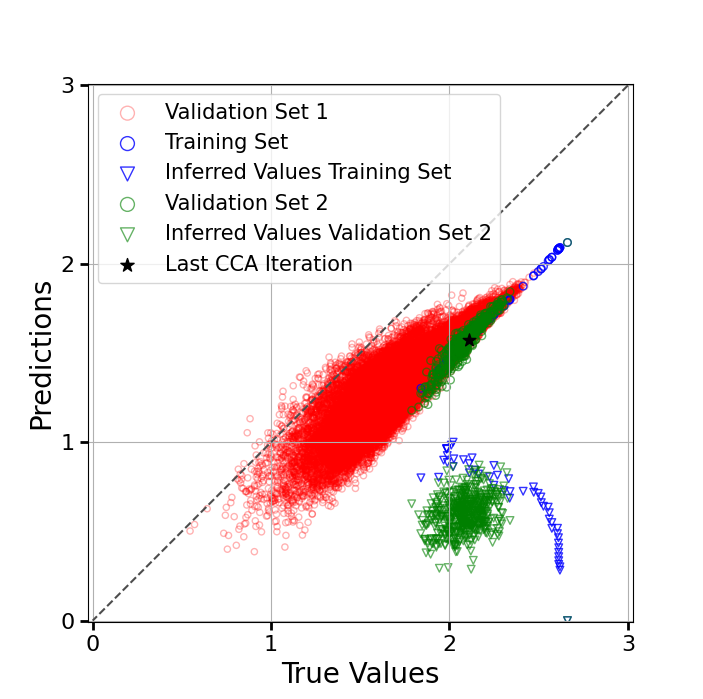}
{}
}
    \caption{Scaled prediction vs. true plot of a trained mMVNN via \Cref{alg:train_on_dqs} for the national bidder in the MRVM domain (see \Cref{sec:experiments}).}
    \label{fig:pred_vs_true_national_MRVM}
    \end{center}
    \vskip -0.65cm
\end{figure}
Our design choices regarding this training algorithm are motivated by the information that responses to demand queries provide.  According to myopic best response bidding, at each round $r$, bidder $i$ reports a utility-maximizing bundle $x_i^*\left(p^{r}\right) \in \X$ at current prices $p^r$. Formally, for all $x \in \X$: 
\begin{gather}
    v_i(x_i^*\left(p^{r}\right)) - \iprod{p^r}{x_i^*\left(p^{r}\right)} \ge v_i(x) - \iprod{p^r}{x}.  \label{ineq:dqs}
\end{gather}
Notice that for any epoch $t$ and round $r$, the loss $L(\theta_t)$ for that round calculated in \Cref{alg_dq:line5,alg_dq:line6,alg_dq:line7} is always non-negative, and can only be zero if the mMVNN $\MVNNi{}$ (instead of $v_i$) satisfies \Cref{ineq:dqs}. Thus, the loss for an epoch is zero iff the mMVNN $\MVNNi{}$ satisfies \Cref{ineq:dqs} for all rounds, and in that case the model has captured the full information provided by the demand query responses $R_i$ of that bidder.
Finally, note that \Cref{alg:train_on_dqs} can be applied to any 
MILP-formalizable ML model whose parameters can be efficiently updated via GD, such as MVNNs or ReLU-NNs.

In \Cref{fig:pred_vs_true_national_MRVM}, we present a prediction vs. true plot of an mMVNN, which we trained via \Cref{alg:train_on_dqs}. We present the training set of $50$ demand query data points $R_i$ in blue circles, where the prices $\{p^r\}_{r=1}^{50}$ are generated according to the same rule as in CCA.
Additionally, we mark the bundle $x^{\text{CCA}}\in \X$ from this last CCA iteration (i.e., the one resulting from $p^{50}$) with a black star. Moreover, we present two different validation sets on which we evaluate mMVNN configurations in our hyperparameter optimization (HPO): \emph{Validation set 1} (red circles), which are $50,000$ uniformly at random sampled bundles $x\in \X$, and \emph{validation set 2} (green circles), 
where we first sample $500$ price vectors $\{p^r\}_{r=1}^{500}$ where the price of each item is drawn uniformly at random from the range of $0$ to $3$ times the average maximum value of an agent of that type for a single item,
and then determine utility-maximizing bundles $x_i^*(p^{r})$ (w.r.t. $v_i$) at those prices (cp. \Cref{eq:utility_maximizing_bundle}). While validation set 1 measures generalization performance in a classic sense over the whole bundle space, validation set 2 focuses on utility-maximizing bundles. 
We additionally demonstrate the inferred values of the bundles of the training set and validation set 2 using triangles of the same colour, i.e., $\{\iprod{p^r
}{x_i^*(p^{r})}\}_{r=1}^{50/500}$. These triangles highlight the only cardinal information that our mMVNNs have access to during training and are a lower bound of the true value. In \Cref{fig:pred_vs_true_national_MRVM}, we see that our mMVNN is able to learn at the training points (blue circles) the true value functions almost perfectly up to a constant shift $\kappa$, i.e., $\mathcal{M}_i^{\theta_T}(x)\approx v_i(x)+\kappa$. This is true even though the corresponding inferred values (blue triangles) are very far off from the true values $\iprod{p^r}{x_i^*(p^{r})}\ll v_i(x_i^*(p^{r}))$. %
Moreover, the mMVNN generalizes well (up to the constant shift $\kappa$) on validation sets 1 and 2. Overall, this shows that \Cref{alg:train_on_dqs} indeed leads to mMVNNs $\mathcal{M}_i^{\theta_T}$ which are a good approximation of $v_i+\kappa$. 
Note that learning the true value function up to a constant shift suffices for our proposed demand query generation procedure presented in \Cref{sec:ML-powered Demand Query Generation}.

\section{ML-powered Demand Query Generation}\label{sec:ML-powered Demand Query Generation}
In this section, we show how we generate ML-powered demand queries and provide the theoretical foundation for our approach by extending a well-known connection between \emph{clearing prices}, \emph{efficiency} and a \emph{clearing objective function}.
First, we define indirect utility, revenue and clearing prices.

\begin{definition}[Indirect Utility and Revenue]\label{def:Indirect_Utility_and_Revenue}
    For linear prices $p \in \R_{\ge0}^m$, a bidder's indirect utility $U$ and the seller's indirect revenue $R$ are defined as
    \begin{align}
    &U(p,v_i)\coloneqq \max\limits_{x\in \X}\left\{v_i(x)-\iprod{p}{x}\right\} \text{ and } \label{eq:indirect_utility}\\
    &R(p)\coloneqq\max\limits_{a\in \F}\left\{\sum\limits_{i\in N}\iprod{p}{a_i}\right\}\stackrel{\hyperref[foot:Rlinear]{\footnotemark[1]}}{\hyperlink{proof:app:eq:indirect_revenue}{=}}\sum\limits_{j\in M}c_j p_j, \label{eq:indirect_revenue}
    \end{align}
    i.e., at prices $p$, \Cref{eq:indirect_utility,eq:indirect_revenue} are the maximum utility a bidder can achieve for all $x\in \X$ and the maximum revenue the seller can achieve among all feasible allocations.
\end{definition}
\begin{definition}[Clearing Prices]\label{def:linear_clearing_prices}
    Prices $p \in \R_{\ge 0}^m$ are \emph{clearing prices} if there exists an allocation $a(p)\in \mathcal{F}$ such that
    \begin{enumerate}
    \item \label{itm:clearing_bidder}  for each bidder $i$, the bundle $a_i(p)$ maximizes her utility, i.e., $v_i(a_i(p))-\iprod{p}{a_i(p)}=U(p,v_i),\forall i \in N$, and
    \item \label{itm:clearing_seller} the allocation $a(p)\in \F$ maximizes the sellers revenue, i.e., $\sum_{i\in N}\iprod{p}{a_i(p)}=R(p)$.\footnote{\label{foot:Rlinear}For linear prices, %
    this maximum is achieved by selling every item, i.e., $\forall j\in M: \sum_{i\in N}(a_{i})_j=c_j$ (see \Appendixref{subsec:app:GD_on_W}{Appendix~C.2}).}
    \end{enumerate}
\end{definition}
Next, we provide an important connection between \emph{clearing prices}, \emph{efficiency} and a \emph{clearing objective} $W$. \Cref{thm:app:connection_clearing_prices_efficiency_constrainedVersion} extends \citet[Theorem 3.1]{bikhchandani2002package}.

\begin{theorem}\label{thm:app:connection_clearing_prices_efficiency_constrainedVersion}
    Consider the notation from \Cref{def:Indirect_Utility_and_Revenue,def:linear_clearing_prices}
    and the objective function $W(p,v)\coloneqq R(p) + \sum_{i \in N}U(p,v_i)$.
    Then it holds that, if a linear clearing price vector exists, every price vector
    \begin{subequations}\label{eqs:ConstraintWmin}
    \begin{align}\label{eq:constrained_clearing_objective}
    p^{\prime} \in &\argmin_{\tilde{p}\in \R_{\ge 0}^m}  &&W(\tilde{p},v)\\
    & \text{such that}&&(x^*_i(\tilde{p}))_{i\in N}\in \F\label{eq:app:corollary_constraint}
    \end{align}
    \end{subequations}
    is a clearing price vector and the corresponding allocation $a(p^{\prime})\in \F$ is \emph{efficient}.\footnotemark
\end{theorem}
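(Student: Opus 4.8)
The plan is to treat the objective $W(p,v)=R(p)+\sum_{i\in N}U(p,v_i)$ as the Lagrangian dual of the welfare-maximization problem and argue by a two-sided \emph{weak-duality} comparison. First I would establish the lower bound that for every $p\ge 0$ and every feasible allocation $a\in\F$ one has $W(p,v)\ge V(a)$. This follows by adding two elementary facts: $R(p)\ge\sum_{i}\iprod{p}{a_i}$ (since $R(p)$ is the maximal revenue over $\F$, attained by selling every item, as in the footnote to \Cref{eq:indirect_revenue}) and $U(p,v_i)\ge v_i(a_i)-\iprod{p}{a_i}$ for each $i$ (since $U(p,v_i)$ maximizes bidder $i$'s utility over all bundles); the two price terms then cancel. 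Specializing to $a=a^*$ yields $W(p,v)\ge V(a^*)$ for all $p$.

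Next I would show that a clearing price attains this bound exactly, which pins down the minimum value. If $p^*$ is a clearing price with allocation $\bar a$, substituting the two clearing conditions (each $\bar a_i$ maximizes utility, and $\sum_i\iprod{p^*}{\bar a_i}=R(p^*)$) into the definition of $W$ collapses it to $W(p^*,v)=\sum_i v_i(\bar a_i)=V(\bar a)$; combined with the lower bound this forces $V(\bar a)=V(a^*)$, so $\bar a$ is efficient and $W(p^*,v)=V(a^*)$. Since $p^*$'s demanded bundles can be chosen as $\bar a\in\F$, the price $p^*$ is feasible for the constraint in \Cref{eqs:ConstraintWmin}; hence, under the existence hypothesis, the constrained minimum value equals exactly $V(a^*)$.

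The crux is the converse. Given any constrained minimizer $p'$, the previous step gives $W(p',v)=V(a^*)$, and I would rerun the weak-duality chain with the \emph{specific} efficient allocation $a^*$:
\[
V(a^*)=W(p',v)=R(p')+\sum_{i}U(p',v_i)\ge \sum_i\iprod{p'}{a^*_i}+\Bigl(V(a^*)-\sum_i\iprod{p'}{a^*_i}\Bigr)=V(a^*).
\]
Since the slack $R(p')-\sum_i\iprod{p'}{a^*_i}\ge 0$ and each term $U(p',v_i)-\bigl(v_i(a^*_i)-\iprod{p'}{a^*_i}\bigr)\ge 0$ are nonnegative and sum to zero, every one of them vanishes. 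Equality in the first is exactly the seller's revenue-maximization condition $\sum_i\iprod{p'}{a^*_i}=R(p')$ (clearing condition~\ref{itm:clearing_seller}), and equality in each of the rest is exactly $a^*_i\in\argmax_{x\in\X}\{v_i(x)-\iprod{p'}{x}\}$ (clearing condition~\ref{itm:clearing_bidder}). Thus $p'$ is a clearing price with clearing allocation $a^*$, which is efficient; repeating the computation of the second step shows any clearing allocation at $p'$ attains $V(a^*)$ and is therefore efficient.

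The main obstacle I anticipate is not this equality analysis, which is elementary once $W(p',v)=V(a^*)$ is secured, but rather the \emph{strong-duality} step that the constrained minimum equals $V(a^*)$: this genuinely uses the existence of a clearing price, since in general there is a duality gap and $\min_p W(p,v)>V(a^*)$. A secondary subtlety is the role of the feasibility constraint itself: the equality argument only invokes the value $W(p',v)=V(a^*)$, so it in fact shows every unconstrained minimizer of $W$ is a clearing price; the constraint $(x_i^*(\tilde p))_{i\in N}\in\F$ is what guarantees that the \emph{realized} demand responses at $p'$ can be tie-broken into a feasible (hence clearing and efficient) allocation, which is the only place the multi-valuedness of the demand correspondence enters. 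I would deliberately avoid the alternative perturbation/KKT route — lowering the price of an under-demanded item to decrease $W$ — because controlling the discontinuous jumps of demand at ties while preserving feasibility is markedly more delicate than the direct comparison above.
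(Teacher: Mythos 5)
Your proof is correct and takes essentially the same route as the paper's: a weak-duality chain $W(p,v)\ge V(a)$ for all $p\ge 0$, $a\in\F$, the observation that a clearing price is feasible for the constraint and attains this bound, and a complementary-slackness analysis of the tight case — which is precisely the argument the paper uses for its unconstrained Lemma~2 and then transfers to the constrained problem. The only (immaterial) difference is that you benchmark against $V(a^*)$ and extract the two clearing conditions for $a^*$ directly at $p'$, whereas the paper compares $W(p',v)$ with $W(p,v)$ for the assumed clearing price $p$ and shows $p'$ supports that price's clearing allocation $a(p)$; your remark that the equality analysis never uses the constraint, so every unconstrained minimizer is already a clearing price, is exactly the content of the paper's Lemma~2.
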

\begin{proof}
Please, see \Appendixref{subsec:app:proof_of_corollarly1connection_clearing_prices_efficiency}{Appendix~C.1} for the proof.
\end{proof}
\footnotetext{{\label{foot:PrecisFormulationConstraint}More precisely, constraint~\eqref{eq:app:corollary_constraint} should be reformulated as \[\exists \left(x^*_i(\tilde{p})\right)_{i\in N}\in \bigtimes_{i\in N}\X^*_i(\tilde{p}) : \left(x^*_i(\tilde{p})\right)_{i\in N} \in \F
    ,\] where $\X^*_i(\tilde{p}):=\argmax_{x\in \X}\left\{\hvi{x}-\iprod{\tilde{p}}{x}\right\}$, since in theory, $x^*_i(\tilde{p})$ does not always have to be unique.}}
\Cref{thm:app:connection_clearing_prices_efficiency_constrainedVersion} does not claim the existence of \emph{linear clearing prices (LCPs)} $p \in \R_{\ge 0}^m$. For general
value functions $v$, LCPs may not exist \citep{bikhchandani2002package}. 
However, in the case that LCPs do exist, \Cref{thm:app:connection_clearing_prices_efficiency_constrainedVersion} shows that \textit{all} minimizers of \eqref{eqs:ConstraintWmin} are LCPs and their corresponding allocation is efficient. This is at the core of our ML-powered demand query generation approach, which we discuss next.

The key idea to generate ML-powered demand queries is as follows: As an approximation for the true value function $v_i$, we use for each bidder a distinct mMVNN $\MVNNi{}:\X \to \R_{\ge0}$ that has been trained on the bidder's elicited demand query data $R_i$ (see \Cref{sec: Training on Demand Query Observation}). Motivated by \Cref{thm:app:connection_clearing_prices_efficiency_constrainedVersion}, we then try to find the demand query $p\in \R_{\ge 0}^m$ minimizing $W(p,\left(\MVNNi{}\right)_{i=1}^n)$ subject to the feasibility constraint~\eqref{eq:app:corollary_constraint}. This way, we find demand queries $p\in \R_{\ge 0}^m$ which, given the already observed demand responses $R$, have high clearing potential.
Note that unlike the CCA, this process does not result in monotone prices.\footnote{
We see no reason why non-monotone prices would introduce additional complexity for the bidders.   
With our approach, the prices quickly converge to the final prices, and then only change very little, as shown in \Appendixref{fig:GSVM_CE_and_LPs,fig:LSVM_CE_and_LPs,fig:SRVM_CE_and_LPs_logscale,fig:MRVM_CE_and_LPs}{Figures~5\crefrangeconjunction{}8} of \Appendixref{subsec:Details Results}{Appendix~D.7}.
For this reason, one could even argue that round-over-round optimizations for the bidders may be \textit{easier} in our auction: given that prices are close to each other round-over-round, the optimal bundle from the last round is still close to optimal (in terms of utility) in the next round.} 

\begin{remark}[Constraint~\eqref{eq:app:corollary_constraint}]\label{rem:ConstraintImportant}
An important economic insight is that minimizing $W(\cdot, (\mathcal{M}_i^\theta)_{i = 1}^{n})$ is optimal, when LCPs exist (also without constraint~\eqref{eq:app:corollary_constraint} as shown in \Appendixref{thm:connection_clearing_prices_efficiency}{Lemma~2} in \Appendixref{subsec:app:proof_of_corollarly1connection_clearing_prices_efficiency}{Appendix~C.1}). 
If however LCPs do not exist, it is favourable to minimize $W$ under the constraint of having no predicted over-demand for any items (see \Appendixref{subsec:app:Results_unconstrained_W}{Appendix~D.9} for an empirical comparison of minimizing $W$ with and without constraint \eqref{eq:app:corollary_constraint}). %
This is because in case the market does not clear, our ML-CCA (see \Cref{sec:ML-powered Clock Phase}), just like the CCA, will have to 
combine the clock bids of the agents to produce a \emph{feasible} allocation with the highest inferred social welfare according to \Cref{WDPFiniteReports}. 
See \Appendixref{subsec:app_constrained_W_minimization}{Appendix~D.6} for details.
\end{remark}

Note that \eqref{eqs:ConstraintWmin} is a hard, bi-level optimization problem.
We minimize \eqref{eqs:ConstraintWmin} via gradient descent (GD), since \Cref{thm:GD_on_W} gives us the gradient and convexity of $W(\cdot,\left(\MVNNi{}\right)_{i=1}^n)$.
\begin{theorem}\label{thm:GD_on_W}
Let $\left(\MVNNi{}\right)_{i=1}^n$ be a tuple of trained mMVNNs and let $\hat{x}^*_i(p)\in \argmax_{x\in \X}\left\{\MVNNi{x}-\iprod{p}{x}\right\}$ denote each bidder's predicted utility maximizing bundle w.r.t. $\MVNNi{}$. Then it holds that $p\mapsto W(p,\left(\MVNNi{}\right)_{i=1}^n)$ is \emph{convex}, \emph{Lipschitz-continuous} and \emph{a.e.
differentiable}. Moreover, 
\begin{equation}
    c-\sum_{i\in N}\hat{x}^*_i(p)\in \subgrad_p W(p,\left(\MVNNi{}\right)_{i=1}^n)
\end{equation}
is always a sub-gradient and a.e.\ a classical gradient.
\end{theorem}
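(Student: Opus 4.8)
The plan is to exploit the fact that the multiset domain $\X=\{0,\ldots,c_1\}\times\cdots\times\{0,\ldots,c_m\}$ is \emph{finite}, so that $W(\cdot,\left(\MVNNi{}\right)_{i=1}^n)$ is a finite pointwise maximum of affine functions plus a linear term, and all three regularity claims together with the subgradient formula follow from standard facts about piecewise-linear convex functions. First I would split $W(p,\left(\MVNNi{}\right)_{i=1}^n)=R(p)+\sum_{i\in N}U(p,\MVNNi{})$ and dispatch the revenue term immediately: by \Cref{eq:indirect_revenue} we have $R(p)=\sum_{j\in M}c_jp_j=\iprod{c}{p}$, which is linear in $p$, hence convex and Lipschitz, with the constant classical gradient $\nabla_p R(p)=c$.

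Next I would analyze each indirect-utility term. For a \emph{fixed} bundle $x\in\X$, the map $p\mapsto\MVNNi{x}-\iprod{p}{x}$ is affine with slope $-x$, so
\[
U(p,\MVNNi{})=\max_{x\in\X}\bigl\{\MVNNi{x}-\iprod{p}{x}\bigr\}
\]
is a pointwise maximum over the \emph{finitely many} bundles in $\X$ of affine functions. A finite max of affine functions is a piecewise-linear convex function, which gives convexity at once; it is Lipschitz because all the slopes $-x$ lie in the bounded set $-\X$ (so a Lipschitz constant is $\max_{x\in\X}\|x\|$); and being piecewise linear it is differentiable off the finitely many boundaries between its linear pieces, hence a.e.\ differentiable. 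Summing finitely many such terms with the linear $R$ preserves convexity, Lipschitz-continuity, and a.e.\ differentiability, establishing the first sentence of the theorem.

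For the subgradient I would use the standard characterization that the subdifferential of a finite max of affine functions at $p$ is the convex hull of the slopes of the pieces that are active (attain the max) at $p$; in particular each individual active slope is already a subgradient. Since $\hat{x}^*_i(p)$ attains the max defining $U(\cdot,\MVNNi{})$, the slope $-\hat{x}^*_i(p)$ is a subgradient of $U(\cdot,\MVNNi{})$ at $p$. Because subdifferentials add over sums, combining the classical gradient $c$ of $R$ with these bidder-wise subgradients yields $c-\sum_{i\in N}\hat{x}^*_i(p)\in\subgrad_p W(p,\left(\MVNNi{}\right)_{i=1}^n)$. For the a.e.-classical-gradient claim, I would observe that $U(\cdot,\MVNNi{})$ fails to be differentiable only where its maximizer is non-unique, i.e.\ on the finite union over bidders $i\in N$ and bundle pairs $x\neq x'$ in $\X$ of the hyperplanes $\{p:\iprod{p}{x'-x}=\MVNNi{x'}-\MVNNi{x}\}$; this union has Lebesgue measure zero, and on its complement every $\hat{x}^*_i(p)$ is unique, so $W$ is classically differentiable there with gradient exactly $c-\sum_{i\in N}\hat{x}^*_i(p)$.

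I do not expect a genuinely hard step — each piece is a textbook fact about piecewise-linear convex functions — so the main care will go into the bookkeeping: verifying that the exceptional non-differentiable set is a \emph{finite} union of hyperplanes (hence null), which is precisely where finiteness of $\X$ does all the work, and confirming that subdifferential additivity applies without qualification, as it does since each summand is a finite-valued convex function on all of $\R^m$. I would also remark that nothing in the argument uses the internal mMVNN structure beyond $\MVNNi{}$ being a well-defined real-valued function on the finite domain $\X$.
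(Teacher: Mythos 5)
Your proposal is correct, and it follows the paper's overall architecture: the same decomposition $W=R+\sum_{i\in N}U(\cdot,\MVNNi{})$, the same observation that each $U$ is a maximum of the affine maps $p\mapsto\MVNNi{x}-\iprod{p}{x}$ with slope $-x$, the same subgradient formula as the convex hull of active slopes, and the same use of additivity of subdifferentials (with the same remark that no domain qualification is needed for finite-valued convex functions on $\R^m$). Where you diverge is in how the regularity claims are certified. The paper's \Cref{le:Wcontinous} proves Lipschitz-continuity by a direct two-sided estimate on $U(\tilde p,\cdot)-U(p,\cdot)$ with constant $(n+1)\twonorm[c]$; you get the same constant (note $\max_{x\in\X}\twonorm[x]=\twonorm[c]$) from the bounded slopes of a finite max of affine functions. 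The paper's \Cref{le:WsubGradient} invokes Danskin's theorem, and its \Cref{le:WaeDiff} invokes Rockafellar's Theorem~25.5 on a.e.\ differentiability of general convex functions, deducing a.e.\ uniqueness of the maximizer from a.e.\ uniqueness of the subgradient; you instead exploit piecewise linearity directly and exhibit the exceptional set explicitly as the finite union over bidders and bundle pairs $x\neq x'$ of the hyperplanes $\{p:\iprod{p}{x'-x}=\MVNNi{x'}-\MVNNi{x}\}$. Your route is more elementary and more informative (it gives a concrete description of the null set, and immediately yields the local constancy of $\hat x^*_i(p)$ that the paper proves separately as Item~3 of \Cref{le:WaeDiff} via an $\epsilon$-margin argument), at the cost of being tied to the finiteness of $\X$; the paper's appeal to Danskin and Rockafellar is what lets it note in its remarks that everything except local constancy survives when $\X$ is replaced by an arbitrary compact set. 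Both arguments are sound.
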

\begin{proof}
In \Appendixref{subsec:app:GD_on_W}{Appendix~C.2} we provide the full proof. Concretely, \Appendixref{le:Wcontinous,le:Wconvex}{Lemmas~3 and 4} prove the Lipschitz-continuity and the convexity. 
In the following, we provide a sketch of how the (sub-)gradients are derived.
First, since $\X$ is finite, it is intuitive that $\hat{x}^*_i(p)$ is a piece-wise constant function and thus $\partial_p \hat{x}^*_i(p)\stackrel{\mathrm{a.e.}}{=}0$ (as intuitively argued by \citet{poganvcic2020differentiation} and proven by us in \Appendixref{le:WaeDiff}{Lemma~6}). Then we can compute the gradient a.e.\ as if $\hat{x}^*_i(p)$ was a constant:
\begin{align*}
&\nabla_p W\left(p,\left(\MVNNi{}\right)_{i=1}^n\right)=\nabla_p \left ( R(p) + \sum_{i \in N}U(p,\MVNNi{}) \right )\\
&=%
\nabla_p \left ( %
\sum_{j\in M}c_jp_j
+ \sum_{i \in N}\left(\MVNNi{}(\hat{x}^*_i(p))-\iprod{p}{\hat{x}^*_i(p)} \right )\right)%
\\%
&\stackrel{\mathclap{\mathrm{a.e.}}}{=}\,%
c
+ \sum_{i \in N}(0-\nabla_p\iprod{p}{\hat{x}^*_i(p)})%
=c-\sum_{i\in N}\hat{x}^*_i(p)%
.
\end{align*}
For a mathematically rigorous derivation of sub-gradients and a.e.\ differentiability see \Appendixref{le:WsubGradient,le:WaeDiff}{Lemmas~5 and 6}.
\end{proof}
With \Cref{thm:GD_on_W}, we obtain the following update rule of classical GD $
p^{\textnormal{new}}_j \stackrel{a.e.}{=} p_j-\gamma (c_j-\sum_{i\in N}(\hat{x}^*_{i}(p))_j),\, \forall j \in M$.
Interestingly, this equation has an intuitive economic interpretation. 
If the $j$\textsuperscript{th} item  is over/under-demanded based on the predicted utility-maximizing bundles $\hat{x}^*_{i}(p)$, then its new price $p^{\textnormal{new}}_j$ is increased/decreased by the learning rate times its over/under-demand. 
However, to enforce constraint~\eqref{eq:app:corollary_constraint} in GD, we asymmetrically increase the prices $1 + \mu \in \mathbb{R}_{\ge 0}$ times more in case of over-demand than we decrease them in case of under-demand. This leads to our final update rule (see \Appendixref{itm:Nextprice:Asymmetry}{Item~1} in \Appendixref{subsec:app_constrained_W_minimization}{Appendix~D.6} for more details):
\begin{subequations}\label{eqs:finalUpdateRule}
\begin{equation}\label{eq:simplified_update_rule_GD}
p^{\textnormal{new}}_j \stackrel{a.e.}{=} p_j-\tilde{\gamma_j} (c_j-\sum_{i\in N}(\hat{x}^*_{i}(p))_j),\, \forall j \in M, 
\end{equation}
\begin{equation}\label{eq:gamma_tilde}
\tilde{\gamma}_j\coloneqq\begin{cases}
\gamma\cdot(1+\mu)& ,c_j < \sum_{i\in N}(\hat{x}^*_{i}(p))_j\\
\gamma& ,\text{else}\\
\end{cases}
\end{equation}
\end{subequations}

To turn this soft constraint into a hard constraint, we increase this asymmetry via $\mu$ iteratively until we achieve feasibility and in the end we select the GD step with the lowest $W$ value \emph{within} those steps that were feasible%
. Based on the final update rule from \eqref{eqs:finalUpdateRule}, we propose \textsc{NextPrice} (\Appendixref{alg:Constrained_W_minimization}{Algorithm~3} in \Appendixref{subsec:app_constrained_W_minimization}{Appendix~D.6}), an algorithm that generates demand queries with high clearing potential, which additionally induce utility-maximizing bundles that are predicted to be feasible (see \Appendixref{subsec:app_constrained_W_minimization}{Appendix~D.6} for all details).

\begin{table*}[ht]
    \renewcommand\arraystretch{1.2}
    \setlength\tabcolsep{2pt}
	\robustify\bfseries
	\centering
	\begin{sc}
 \begin{adjustbox}{max width=\textwidth}
	\small
    \begin{tabular}{lcccccccccccccccc}
    \toprule
    &  \multicolumn{4}{c}{\textbf{GSVM}}  &\multicolumn{4}{c}{\textbf{LSVM}}  &   \multicolumn{4}{c}{\textbf{SRVM}}  & \multicolumn{4}{c}{\textbf{MRVM}}\\
        
        \cmidrule(l{2pt}r{2pt}){2-5}
        \cmidrule(l{2pt}r{2pt}){6-9}
        \cmidrule(l{2pt}r{2pt}){10-13}
        \cmidrule(l{2pt}r{2pt}){14-17}
     \textbf{Mechanism}& \textbf{E\textsubscript{clock}} & \textbf{E\textsubscript{raise}} & \textbf{E\textsubscript{profit}} & \textbf{\tableTextClear{}} & \textbf{E\textsubscript{clock}} & \textbf{E\textsubscript{raise}} & \textbf{E\textsubscript{profit}} & \textbf{\tableTextClear{}} & \textbf{E\textsubscript{clock}} & \textbf{E\textsubscript{raised}} & \textbf{E\textsubscript{profit}} & \textbf{\tableTextClear{}} & \textbf{E\textsubscript{clock}} & \textbf{E\textsubscript{raise}} & \textbf{E\textsubscript{profit}} & \textbf{\tableTextClear{}}\\
    \midrule
                              \multirow{1}{*}{\textbf{ML-CCA}}   & \ccell 98.23 & \ccell98.93 & \ccell100.00 & \ccell56 & \ccell91.64 &  \ccell96.39 & \ccell99.95  &  \ccell26 & \ccell99.59 & \ccell99.93& \ccell100.00&  \ccell13& \ccell93.04&  \ccell93.31 & \ccell93.68 &\ccell0\\
    \midrule
                      \textbf{CCA} & 90.40 & 93.59 & \ccell100.00 & 3 & 82.56 & 91.60 & 99.76  & 0 & \ccell99.63& 99.81& \ccell100.00& 8& 92.44 & 92.62 & 93.18 & \ccell0 \\
    \bottomrule
    \end{tabular}
\end{adjustbox}
    \end{sc}
    \vskip -0.1 in
    \caption{ML-CCA vs CCA. Shown are averages over a test set of $100$ synthetic CA instances of the following metrics: 
    efficiency in \% for clock bids (\textsc{\tablecaptionbf{E\textsubscript{clock}}}), raised clock bids (\textsc{\tablecaptionbf{E\textsubscript{raise}}}) and raised clock bids plus $100$ profit-max bids (\textsc{\tablecaptionbf{E\textsubscript{profit}}}) and percentage of instances where clearing prices were found (\textsc{\tablecaptionbf{\tableTextClear{}}}).
    Winners based on a paired t-test with $\alpha=5\%$ are marked in grey.
    }
\label{tab:efficiency_loss_mlca}
    \vskip -0.3cm
\end{table*}

\section{ML-powered Combinatorial Clock Auction}\label{sec:ML-powered Clock Phase}
In this section, we describe our \emph{ML-powered combinatorial clock auction (ML-CCA)}, which is based on our proposed new training algorithm from \Cref{sec: Training on Demand Query Observation} as well as our new demand query generation procedure from \Cref{sec:ML-powered Demand Query Generation}.

\begin{algorithm}[t!]
        \DontPrintSemicolon
        \SetKwInOut{parameters}{Parameters}
        \parameters{$\Qinit,\Qmax$ with $\Qinit\le \Qmax$ and $F_{\text{init}}$}
    $R \gets (\{\})_{i=1}^N$ \;
    \For(\Comment*[f]{\color{CommentColor}{Draw $\Qinit$ initial prices}}){$r=1,...,\Qinit$}{ \label{alg_line:qinit_start}
    $p^r \gets F_{\text{init}}(R)$ \; 
        \ForEach(\Comment*[f]{\color{CommentColor}{Initial demand query responses}}){$i \in N$}
        {
        $R_i \gets R_i\cup\{(x^*_{i}(p^r),p^{r})\}$ \label{alg_line:qinit_p_response}
        }
        }
    \For(\Comment*[f]{\color{CommentColor}{ML-powered rounds}}){$r=\Qinit+1,...,\Qmax$}{
        \ForEach{$i \in N$}
        {{$\MVNNi{} \gets $}  \textsc{TrainOnDQs}$(R_i)$ \label{alg_line:train_mvnns}\Comment*[r]{\color{CommentColor}{\Cref{alg:train_on_dqs}}}
        } 
        $p^{r} \gets$ \textsc{NextPrice$(\left(\MVNNi{}\right)_{i=1}^n)$}\Comment*[r]{\color{CommentColor}{
        \Appendixref{alg:Constrained_W_minimization}{Algorithm~3} 
        }}\label{alg_line:next_pv} 
        \ForEach(\Comment*[f]{{\color{CommentColor}Demand query responses for $p^r$}}){$i \in N$}{
         $R_i \gets R_i\cup\{(x^*_{i}(p^r),p^{r})\}$ \label{alg_line:dq_responses}
        }
        \If(\Comment*[f]{\color{CommentColor}Market-clearing}){$\sum\limits_{i=1}^n (x^*_{i}(p^k))_j= c_j\, \forall j\in M$}{
        Set final allocation $a^*(R) \gets (x^*_i(p^r))_{i=1}^n$\;
        Calculate payments $\pi(R)\gets(\pi_i(R))_{i=1}^n$ \;
        \Return{$a^*(R)$ and $\pi(R)$} \label{alg_line:return_clearing_allocation}
        }}
    \ForEach{$i \in N$}
    {$R_i \gets R_i \cup B_i$ \label{alg_line:push_bids}  
    \Comment*[r]{\color{CommentColor}{Optional Push bids }}
    }
    Calculate final allocation $a^*(R)$ as in \Cref{WDPFiniteReports}\; \label{alg_line:final_allocation}
    Calculate payments $\pi(R)$ \Comment*[r]{{\color{CommentColor}E.g., VCG (\Appendixref{app:payment_methods}{Appendix~A})}} \label{alg_line:final_payments}
    \Return{$a^*(R)$ and $\pi(R)$} \label{alg_line:return_final_result}
    \caption{\small \textsc{ML-CCA}($\Qinit,\Qmax, F_{\text{init}}$)}
    \label{ML-CCA}
\end{algorithm}

We present ML-CCA in \Cref{ML-CCA}. 
In Lines \ref{alg_line:qinit_start} to \ref{alg_line:qinit_p_response}, we draw the first $\Qinit$ price vectors using some initial demand query method $F_{\text{init}}$ and receive the bidders' demand responses to those price vectors. Concretely, in \Cref{sec:experiments}, we report results using the same price update rule as the CCA for $F_{\text{init}}$. 
In each of the next up to $\Qmax - \Qinit$ ML-powered rounds, we first train, for each bidder, an mMVNN on her demand responses using \Cref{alg:train_on_dqs} (\Cref{alg_line:train_mvnns}). Next, in \Cref{alg_line:next_pv}, we call \textsc{NextPrice} to generate the next demand query $p$ based on the agents' trained mMVNNs (see \Cref{sec:ML-powered Demand Query Generation}).
If, based on the agents' responses to the demand query (\Cref{alg_line:dq_responses}), our algorithm has found market-clearing prices, then the corresponding allocation is efficient and is returned, along with payments $\pi(R)$ according to the deployed payment rule  (\Cref{alg_line:return_clearing_allocation}).
If, by the end of the ML-powered rounds, the market has not cleared, we 
optionally allow bidders to submit push bids, analogously to the supplementary round of the CCA (\Cref{alg_line:push_bids}) and 
calculate the optimal allocation $a^*(R)$ and the payments $\pi(R)$ (\Cref{alg_line:final_allocation,alg_line:final_payments}). Note that 
ML-CCA can be combined with various possible payment rules $\pi(R)$, such as VCG or VCG-nearest.

\section{Experiments}\label{sec:experiments}
In this section, we experimentally evaluate the performance of our proposed ML-CCA from \Cref{ML-CCA}. 

\subsection{Experiment Setup.}
To generate synthetic CA instances, we use the GSVM, LSVM, SRVM, and MRVM domains from the spectrum auction test suite (SATS) \cite{weiss2017sats}  (see \Appendixref{subsec:appendix_SATS_domains}{Appendix~D.1} for details).
We compare our ML-CCA with the original CCA. For both mechanisms, we allow a maximum of $100$ clock rounds per instance, i.e., we set $\Qmax = 100$.
For CCA, we set the price increment to $5$\% as in \citep{canadianCCA} and optimized the initial reserve prices to maximize its efficiency. 
For ML-CCA, we create $\Qinit$ price vectors to generate the initial demand query data using the same price update rule as the CCA, with the price increment adjusted to accommodate for the reduced number of rounds following this price update rule.
In GSVM, LSVM and SRVM we set $\Qinit = 20$ for ML-CCA, while in MRVM we set $\Qinit = 50$. 
After each clock round, we report efficiency according to the clock bids up to that round, as well as efficiency if those clock bids were raised (see \Cref{subsec:prelims_CCA}). Finally, we report the efficiency if the last clock round was supplemented with $\QPmax = 100$ bids using the profit max heuristic.
Note that this is a very unrealistic and cognitively expensive bidding heuristic in practice, as it requires the agents to both discover their top $100$ most profitable bundles as well as report their exact values for them, and thus only adds theoretical value to gauge the difficulty of each domain.

\subsection{Hyperparameter Optimization (HPO).}
We optimized the hyperparameters (HPs) of the mMVNNs for each bidder type of each domain. Specifically, for each bidder type we trained an mMVNN on the demand responses of a bidder of that type on $50$ CCA clock rounds
and selected the HPs that resulted in the highest $R^{2}$ on validation set $2$ as described in \Cref{subsec:mvnns}. For more details 
please see \Appendixref{subsec:appendix_hpo}{Appendix~D.3}.

\subsection{Results.}
All efficiency results are presented in \Cref{tab:efficiency_loss_mlca}, while in \Cref{fig:efficiency_per_clock_round} we present the efficiency after each clock round, as well as the efficiency if those clock bids were enhanced with the clock bids raised heuristic (for $95$\% CIs and  $p$-values see \Appendixref{subsec:Details Results}{Appendix~D.7}).

In GSVM, ML-CCA's clock phase exhibits over $7.8$\% points higher efficiency compared to the CCA, while if we add the clock bids raised heuristic to both mechanisms, ML-CCA still exhibits over $5.3$\% points higher efficiency. At the same time, ML-CCA is able to find clearing prices in $56$\% of the instances, as opposed to only $3$\% for the CCA.

The results for LSVM are qualitatively very similar; ML-CCA's clock phase increases efficiency compared to the CCA by over $9$\% points, while clearing the market in $26$\% of the cases as opposed to $0$\%.
If we add the clock bids raised heuristic to both mechanisms, ML-CCA still increases efficiency by over $4.7$\% points.

The SRVM domain, as suggested by the existence of only $3$ unique goods, is quite easy to solve. Thus, both mechanisms can achieve almost $100$\% efficiency after their clock phase.
For the clock bids raised heuristic our method reduces the efficiency loss by a factor of more than two (from $0.19$\% to $0.07$\%), and additionally, one can see from \Cref{fig:efficiency_per_clock_round} that our method reaches over $99$\% in less than $30$ rounds.  

In MRVM, ML-CCA again achieves statistically significantly better results for all 3 bidding heuristics. 
Notably, the CCA needs both the clock bids raised heuristic \emph{and} $38$ profit max bids to reach the same efficiency as our ML-CCA clock phase, i.e., it needs up to $138$ additional \emph{value} queries per bidder (see \Appendixref{subsec:Details Results}{Appendix~D.7}).
In MRVM, LCPs never exist, thus neither ML-CCA nor the CCA can ever clear the market.

To put our efficiency improvements in perspective, 
in the GSVM, LSVM and MRVM domains, ML-CCA's clock phase achieves higher efficiency than the CCA enhanced with the clock bids raised heuristic, i.e., the CCA, even if it uses up to an additional $100$ value queries per bidder, cannot match the efficiency of our ML-powered clock phase.
In \Cref{fig:efficiency_per_clock_round}, we see that our ML-CCA can (almost) reach the efficiency numbers of \Cref{tab:efficiency_loss_mlca} in a significantly reduced number of clock rounds compared to the CCA, while if we attempt to ``speed up'' the CCA, then its efficiency can substantially drop, see \Appendixref{subsec:app:Results_reduced_Qmax}{Appendix~D.8}. 
In particular, in GSVM and LSVM, using $50$ clock rounds, our ML-CCA can achieve higher efficiency than the CCA can in $100$ clock rounds.

\begin{figure}[t!]
    \begin{center}
\includegraphics[width=1\columnwidth,trim=0 0 0 0, clip]{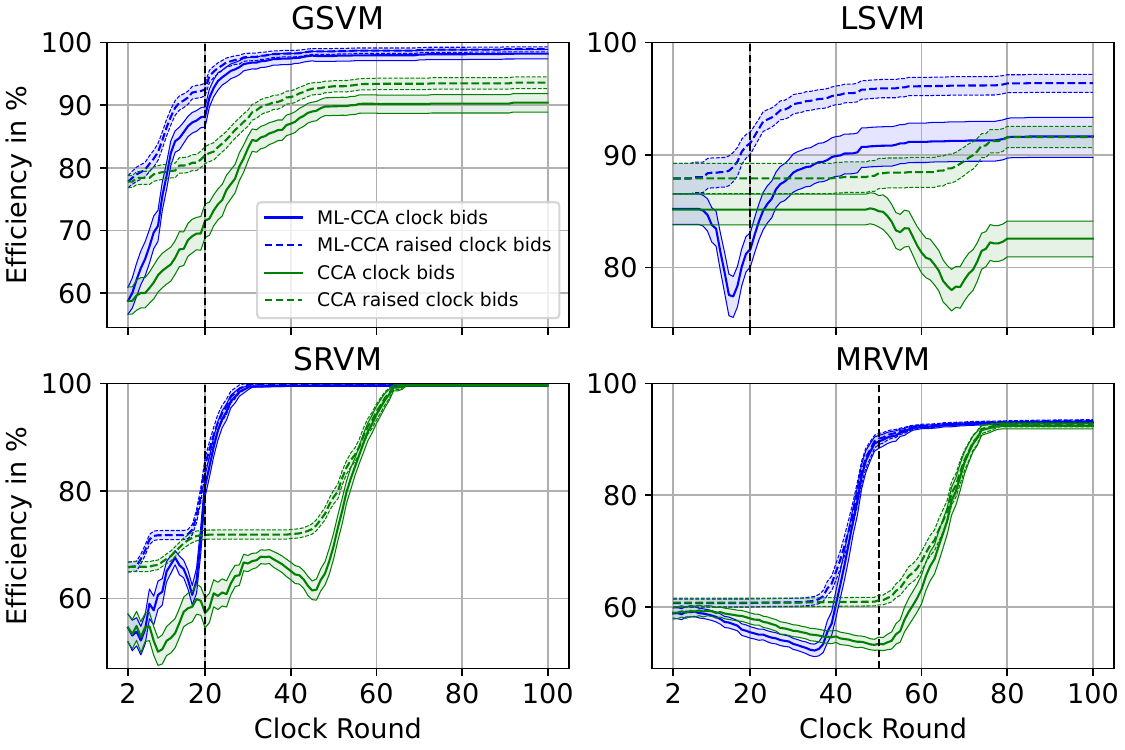}
    \caption{Efficiency path plots in SATS for ML-CCA and CCA both after clock bids (solid lines) and raised clock bids (dashed lines). Averaged over 100 runs including a 95\% CI. The dashed black vertical line indicates the value of $\Qinit$.}
    \label{fig:efficiency_per_clock_round}
    \end{center}
    \vskip -0.35cm
\end{figure}

\subsection{Computational Efficiency.}
For our choice of hyperparameters, the computation time when using $8$ CPU cores\footnote{Note that \Cref{alg:train_on_dqs} is not GPU-implementable, as it requires solving a MIP in each iteration, for every demand response by an agent} (see \Appendixref{subsec:app:compute_infrastructure}{Appendix~D.2} for details on our compute infrastructure) for a single round of the ML-CCA averages under $45$ minutes for all domains tested. Notably, for three out of four domains, it averages less than $10$ minutes (see \Appendixref{tab:app:compute_time}{Table~7} in \Appendixref{subsec:Details Results}{Appendix~D.7}). 
The overwhelming majority of this time is devoted to training the mMVNNs of all bidders using our \Cref{alg:train_on_dqs} 
and generating the next DQ using our \Appendixref{alg:Constrained_W_minimization}{Algorithm~3} detailed in \Cref{sec:ML-powered Demand Query Generation}. 
It is important to note that both of these algorithms can always be parallelized by up to the number of bidders $N$, further reducing the time required. 
In our implementation, while we parallelized the training of the $N$ mMVNNs, we did not do so for the generation of the next DQ.
In spectrum auctions, typically no more than $2$ rounds are conducted per day. 
For the results presented in this paper, we ran the full auction for $400$ instances. 
Given the estimated welfare improvements of over $25$ million USD per auction, attributed to ML-CCA's efficiency gains, we consider ML-CCA's computational and time requirements to be negligible.

\section{Conclusion}
We have proposed a novel method for training MVNNs
to approximate the bidders' value functions based on demand query observations.
Additionally, we have framed the task of determining the price vector with the highest clearing potential as minimization of an objective function that we prove is convex, Lipschitz-continuous, a.e.
differentiable, and whose gradient for linear prices has an intuitive economic interpretation: change the price of every good proportionally to its predicted under/over-demand at the current prices. 
The resulting mechanism (ML-CCA) from combining these two components exhibits significantly higher clearing potential than the CCA and can
increase efficiency by up to $9$\% points while at the same time converging in a much smaller number of rounds. 
Thus, we have designed the first \textit{practical} ML-powered auction that employs the same interaction paradigm as the CCA, i.e., demand queries instead of cognitively too complex value queries, yet is able to significantly outperform the CCA in terms of both efficiency and clearing potential in realistic domains.

\section*{Acknowledgments}
This paper is part of a project that has received funding from the European Research Council (ERC)
under the European Union’s Horizon 2020 research and innovation program (Grant agreement No. 805542).

\if\showAppendix 1
\clearpage
\appendix
\section*{Appendix}

\counterwithin{definition}{section}
\counterwithin{corollary}{section}%
\counterwithin{problem}{section}
\counterwithin{example}{section}
\counterwithin{remark}{section}
\counterwithin{fact}{section}

\section{Payment and Activity Rules}\label{app:payment_methods}

In this section, we reprint the VCG and VCG-nearest payment rules, as well as give an overview of activity rules for the CCA, and argue why the most prominent choices are also applicable to our ML-CCA.

\subsection{VCG Payments from Demand Query Data}
\begin{definition}{\textsc{(VCG Payments from Demand Query Data)}}\label{def:vcg_payments}
Let $R=(R_1,\ldots,R_n)$ denote an elicited set of demand query data from each bidder and let $R_{-i}\coloneqq(R_1,\ldots,R_{i-1},R_{i+1},\ldots,R_n)$. We then calculate the VCG payments $\pi^{\text{\tiny VCG}}(R)=(\pi^{\text{\tiny VCG}}_1(R)\ldots,\pi^{\text{\tiny VCG}}_n(R)) \in \R_{\ge0}^n$ as follows:
{\small
\begin{align}\label{VCGPayments}
&\pi^{\text{\tiny VCG}}_i(R) \coloneqq\\
&\sum_{j \in N \setminus \{i\}} \iprod{\left(p^*_{R_{-i}}\right)_j}{\left(a^*_{R_{-i}}\right)_j} - \sum_{j \in N \setminus \{i\}}\iprod{\left(p^*_{R}\right)_j}{\left(a^*_{R}\right)_j}.
\end{align}
}%
where $a^*_{R_{-i}}$ is the allocation that maximizes the inferred social welfare (SW) when excluding bidder $i$, i.e.,
\begin{align}
&a^*_{R_{-i}}\in \argmax_{\substack{\F\, \ni \, \left(x_j^*\left(p^{(r_j)}\right)\right)_{j\in N\setminus\{i\}}\\:\, r_j\in \{1,\ldots,K\} \,\forall j \in N\setminus\{i\}}}\sum_{j\in N\setminus\{i\}} \iprod{p^{(r_j)}}{x_j^*(p^{(r_j)})},
\end{align}
and $p^*_{R_{-i}}=\left((p^*_{R_{-i}})_1,\ldots,(p^*_{R_{-i}})_n\right) \in \R^{mn}_{\ge 0}$ denote the corresponding price vectors that lead to $a^*_{R_{-i}}$, and $a^*_R$ is a inferred-social-welfare-maximizing allocation (see \Cref{WDPFiniteReports}) with corresponding prices $p^*_{R}=\left((p^*_{R})_1,\ldots,(p^*_{R})_n\right) \in \R^{mn}_{\ge 0}$.
\end{definition}

Thus, when using VCG, bidder $i$'s utility is:
{\small
\begin{align*}
u_i &= v_i((a_R^*)_i)-\pi^{\text{\tiny VCG}}_i(R)\\
&=  v_i((a_R^*)_i) + \hspace{-0.3cm}\sum_{j \in N \setminus \{i\}}\hspace{-0.25cm} \iprod{\left(p^*_{R}\right)_j}{\left(a^*_{R}\right)_j} - \hspace{-0.1cm} \hspace{-0.1cm}\sum_{j \in N \setminus \{i\}}\hspace{-0.25cm} \iprod{\left(p^*_{R_{-i}}\right)_j}{\left(a^*_{R_{-i}}\right)_j}.
\end{align*}
}
\begin{remark}
Note that \Cref{def:vcg_payments} defines VCG payments for a set of elicited \emph{demand query} data $R$ using the bidders' inferred values from this data set. Specifically, those would be the VCG payments after the CCA's clock phase for example (i.e., if there was no supplementary round). However, one can analogously define VCG-payments for any given set of reported bundle-value pairs (see \citet[Definition B.1.]{weissteiner2023bayesian}). For example, in the case of additional value bids, such as supplementary round push bids, one would use \citet[Definition B.1.]{weissteiner2023bayesian} and set $\hat{v}_i(x^r)=\max \{\iprod{x^r}{p^r}, b_i^s(x^r) \}$, where $b^s$ is the bidder's supplementary round bid for that bundle (or zero, if she did not bid on it in the supplementary round), i.e., bidder $i$'s bid for any bundle is the maximum of her largest inferred value for that bundle based on the clock round bids and the supplementary round bids. 
\end{remark}

\subsection{VCG-Nearest Payments}
\label{subsec:app_vickrey_nearest}
To define the VCG-nearest payments, we must first introduce the core:
\begin{definition}{\textsc{(The Core)}}\label{def:core}
An outcome $(a,\pi)\in \F\times \Rpz^n$ (i.e., a tuple of a feasible allocation $a$ and payments $\pi$) is in the core if it satisfies the following two properties:
\begin{enumerate}
\item The outcome is \emph{individual rational}, i.e, $u_i=v_i(a_i)-\pi_i\ge0$ for all $i\in N$
\item The core constraints
\begin{equation}
    \forall  \; L \subseteq N \; \sum_{i \in N \setminus L} \pi_i(R) 
    \ge \max_{a^{\prime} \in \F} \sum_{i \in L} v_i(a^{\prime}_i) - \sum_{i \in L} v_i(a_i)  
\end{equation}
where $v_i(a_i)$ is bidder $i$'s value for bundle $a_i$ and $\mathcal{F}$ is the set of feasible allocations.
\end{enumerate}
\end{definition}
In words, a payment vector $\pi$ (together with a feasible allocation $a$) is in the core if no coalition of bidders $L\subset N$ is willing to pay more for the items than the mechanism is charging the winners. Note that by replacing the true values $v_i(a_i)$ with the bidders' (possibly untruthful) bids $b_i(a_i)$ in \Cref{def:core} one can equivalently define the \emph{revealed core}. 

Now, we can define 

\begin{definition}{\textsc{(Minimum Revenue Core)}}
Among all payment vectors in the (revealed) core, the (revealed) minimum revenue core is the set of payment vectors with smallest $L_1$-norm, i.e., which minimize the sum of the payments of all bidders.
\end{definition}

We can now define VCG-nearest payments:
\begin{definition}{\textsc{(VCG-Nearest Payments)}} \label{def:vcg_nearest_payments}
Given an allocation $a_R$ for bidder reports $R$, the VCG-nearest payments $\pi^{\text{\tiny VCG-nearest}}(R)$ are defined as the vector of payments in the (revealed) minimum revenue core that minimizes the $L_2$-norm to the VCG payment vector $\pi^{\text{\tiny VCG}}(R)$.
\end{definition}

\subsection{On the Importance of Activity Rules to Align Incentives} \label{subsec:app_acticity_rules}
In the CCA, activity rules serve multiple purposes. First, they can help speed up the auction process. Second, they reduce ''bid-sniping'' opportunities, i.e., bidders concealing their true intentions until the very last rounds of the auction.\footnote{The notion of ``bid-sniping'' first originated in eBay auctions with predetermined ending times, where the high-value bidder can sometimes reduce her payments by submitting her bid at the very last moment.}
Third, they can limit surprise bids in the supplementary round of the CCA and significantly reduce a bidder's ability to drive up her opponents payments by overbidding on bundles that she can no longer win \citep{ausubel2017practical}.
There are two types of activity rules that are implemented in a CCA: 
\begin{enumerate}
    \item \emph{Clock phase activity rules}, that limit the bundles that an agent can bid on during the clock phase, based on her bids in previous clock rounds.
    \item  \emph{Supplementary round activity rules}, that restrict the amount that an agent can bid on for various sets of items during the supplementary round.
\end{enumerate}
Most of the activity rules that were traditionally used for the clock phase of the CCA were based on either revealed-preference considerations or some \textit{points-based system}, where the main idea is to assign points to each item prior to the auction, and only allow bidders to submit monotonically non-increasing in points bids, i.e., as the rounds progress and the prices increase, the bidders cannot submit bids for larger sets of items. 
Both of these approaches, as well as hybrid combinations thereof, were shown to actually further interfere with truthful bidding in some cases \citep{ausubel2014, ausubel2020}.

However, \citet{ausubel2019iterative} showed that basing the clock phase activity rule not on the above but instead entirely upon the \emph{generalized axiom of revealed preference (GARP)} can dynamically approximate VCG payoffs and thus improve the bidding incentives of the CCA.
GARP imposes revealed-preference constraints (see \Cref{def:revealed_preference_constr}) to the bidder’s demand responses, i.e., the GARP activity rule requires the bidder to exhibit rational behaviour in her demand choices.
Importantly, the GARP activity rule \emph{does not} require a monotonic price trajectory. Thus, it can also be applied in our ML-powered clock phase, allowing the clock phase of our ML-CCA to enjoy the same improvement in bidding incentives.

For the supplementary round, the CCA's most prominent activity rules are again based on a combination of the same points-based system and revealed-preference ideas. For this, we need to define the following constraint:

\begin{definition}{\textsc{(Revealed-preference constraint)}} \label{def:revealed_preference_constr}
The revealed-preference constraint for bundle $x\in X$ with respect to clock round $r$ is 
\begin{equation}
b_i(x) \le b_i(x^r) + \iprod{p^r}{x - x^r},
\end{equation}
where $b_i(x)\in \Rpz$ is bidder $i$'s bid for bundle $x\in X$ in the supplementary round, $x^r\in \X$ is the bundle demanded by the agent at clock round $r$, $b_i(x^r)\in \Rpz$ is the final bid for bundle $x^r\in \X$ and $p^r\in \Rpz^m$ is the linear price vector of clock round $r$.
\end{definition}
Intuitively, the revealed-preference constraint states that a bidder is not allowed to claim a high value for bundle $x\in \X$ relative to bundle $x^r\in \X$, given that she claimed to prefer bundle $x^r\in \X$ at clock round $r$ (see Inequality~\eqref{ineq:dqs}).
The difference between the three most prominent supplementary round activity rules is with respect to \textit{which clock rounds} the revealed-preference constraint should be satisfied. Specifically: 
\begin{enumerate}
    \item \emph{Final Cap:} A bid for bundle $x\in \X$ should satisfy the \emph{revealed-preference constraint (\Cref{def:revealed_preference_constr})} with respect to the \emph{final} clock round's price $p^{\Qmax}\in \Rpz$ and bundle $x^{\Qmax}\in \X$.
    \item \emph{Relative Cap:} A bid for bundle $x\in \X$ should satisfy the \emph{revealed-preference constraint (\Cref{def:revealed_preference_constr})} with respect to the last clock round for which the bidder was eligible for that bundle $x\in \X$, based on the points-based system.
    \item \emph{Intermediate Cap:} A bid for bundle $x \in \X$ should satisfy the \emph{revealed-preference constraint (\Cref{def:revealed_preference_constr})} with respect to all eligibility-reducing rounds, starting from the last clock round for which the bidder was eligible for $x\in \X$ based on the point system.
\end{enumerate}

\citet{ausubel2017practical} showed that combining the \emph{Final Cap} and \emph{Relative Cap} activity rules leads to the largest amount of reduction in bid-sniping opportunities for the UK 4G auction, as measured by the theoretical bid amount that each bidder would need to increase her bid by in the supplementary round in order to protect her final clock round bundle. 
Finally, note that the \emph{Final-} and \emph{Intermediate Cap} activity rules can also be applied to our ML-CCA.\footnote{With the modification for the \emph{Relative Cap} rule that the revealed-preference constraint should hold for the $\Qinit$ rounds that follow the same price update rule as the CCA, and then the ML-powered clock rounds should be treated as corresponding to the same amount of points, since the prices in these rounds on aggregate stay very close to the prices of the last $\Qinit$ round, as shown in \Cref{fig:GSVM_CE_and_LPs,fig:LSVM_CE_and_LPs,fig:SRVM_CE_and_LPs_logscale,fig:MRVM_CE_and_LPs}.}

To conclude, we observe that for both its clock phase and the supplementary round, our ML-CCA (with the same $F_{\text{init}}$ method as the CCA), is compatible with the most prominent activity rules for the corresponding phases of the CCA, while it is also obviously compatible with the most prominent payment rule, VCG-nearest prices (\Cref{def:vcg_nearest_payments}). 
This, combined with the fact that the ML-CCA has the same interaction paradigm for the bidders as the CCA, is a very strong indication that our ML-CCA can reduce the opportunities for the bidders to misreport to a similar extent as the classical CCA.

\section{Multiset MVNNs} \label{sec:app_mMVNNs}
\subsection{Advantages of mMVNNs over MVNNs} 
In \citet{weissteiner2020deep} and  \citet{weissteiner2022fourier,weissteiner2022monotone,weissteiner2023bayesian} items with a capacity $c_k$, were treated as $c_k$ distinct items, without exploiting the prior knowledge that these $c_k$ items are indistinguishable to the bidders. Sufficiently large classical MVNNs that were trained on large enough training sets would at some point learn that these items are indistinguishable, but this 
prior information was not incorporated at an architectural level.
For multiset MVNNs (mMVNNs) this prior knowledge is hard-coded directly into the architecture (see \Cref{def:MVNN}). This additional prior knowledge is particularly beneficial for small training data sets in terms of generalization.
At the same time, it can significantly reduce the dimensionality of the input space, as \Cref{ex:BetterGernalizationWithMultisets} illustrates.
\begin{example}\label{ex:BetterGernalizationWithMultisets}
    If we have 30 different items each of capacity 2 (i.e., $\X=\{0,1,2\}^{30}$), then then the bundle $x=(1,\dots,1)\in\X$ containing one of each 30 items would have $2^{30}>1$ billion different representations as sets corresponding to binary vectors in $\tilde{\X}=\{0,1\}^{60}$ (where the items are treated as 60 items of capacity 1). All these $2^{30}$ representations in $\tilde{\X}$ could be mapped to different values by a classical MVNN~$\tilde{\MVNNiw} :\tilde{\X}\to\Rpz$, while we actually have the hard prior knowledge that they should all have exactly the same value due to indistinguishable items. And all these $2^{30}$ representations in $\tilde{\X}$ correspond to the same multiset $x=(1,\dots,1)\in\X$, which gets assigned to exactly one value $\MVNNiw(x)$ by an mMVNN~$\MVNNiw :\X\to\Rpz$.
\end{example}

Furthermore, the solution times of MILPs are likely to benefit from the mMVNNs' reduced variable count compared to traditional MVNNs. In the case of classical MVNNs, it was necessary to introduce one binary variable for each indistinguishable duplicate of an item, resulting in $\tilde{m}=\sum_{k=1}^m c_k$. 
In contrast, mMVNNs require only $m$ integer variables. For an experimental comparison of mMVNNs to MVNNs, see \Cref{subsec:app:results_mvnns_multiset_domains}.

\subsection{Universality of mMVNNs}\label{subsec:MVNN Universality}
Note that \citet{weissteiner2022monotone} have proven universality of MVNNs only for the case of binary input vectors corresponding to classical sets (i.e., $c_k=1 \ \forall k\in M$). Here, we prove in \Cref{appproof:thm:Universality} universality of mMVNNs for arbitrary capacities $c\in\N^m$ corresponding to \href{https://en.wikipedia.org/w/index.php?title=Multiset&oldid=1145794995#}{multiset} domains such as our $\X$.

First, we recall the following definition:
\begin{definition}\label{def:Vmon}
The set $\Vmon$ of all monotonic\footnote{Within this paper, by \enquote{monotonic}, we always refer to weakly monotonically increasing \monoton{}, i.e.,  monotonically non-decreasing. In multiset-notation this reads as: $\forall a,b \in \X : \left(a\subseteq b\implies \hvi{a}\le \hvi{b}\right)$.} and normalized functions from $\X$ to $\R_{\geq0}$ is defined as
\begin{align}\label{eq:Vmon}
    \Vmon:=\{\hvi{}:\X \to \Rp|\, \text{satisfy \textbf{\normalized{}} and \textbf{\monoton{}}}\}.
\end{align}  
\end{definition}

The following \namecref{app_lem:normalized and monoton} says that every mMVNN is monotonic \monoton{} and and normalized \normalized{} in the sense of \Cref{def:Vmon}.
\begin{lemma}\label{app_lem:normalized and monoton}
    Let $\MVNNi{}:\X\to \Rp$ be an mMVNN from \Cref{def:MVNN}. Then it holds that  $\MVNNi[(W^{i},b^{i})]{}\in\Vmon$ for all $W^{i}\ge0$ and $b^{i}\le0$.
\end{lemma}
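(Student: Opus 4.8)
The plan is to verify the two defining properties of $\Vmon$ directly from the layered structure of the mMVNN, exploiting the sign constraints $W^{i,k}\ge 0$ and $b^{i,k}\le 0$ together with the positivity of the diagonal entries of $\D$. Throughout I would work with the intermediate activations: set $h^0 \coloneqq \D x$ and, for $k=1,\dots,K_i-1$, define $h^k \coloneqq \varphi_{0,t^{i,k}}(W^{i,k}h^{k-1}+b^{i,k})$ applied coordinatewise, so that $\MVNNi{x}=W^{i,K_i}h^{K_i-1}$. The two properties then follow by a layer-by-layer induction on $k$.

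For normalization \normalized{}, I would argue that the zero input propagates to the zero vector at every hidden layer. At the input layer $\D 0 = 0$, so $h^0=0$. Assuming $h^{k-1}=0$, the pre-activation equals the bias $b^{i,k}\le 0$, and since $\varphi_{0,t^{i,k}}(s)=\min(t^{i,k},\max(0,s))=0$ whenever $s\le 0$ (using $t^{i,k}>0$), we obtain $h^k=0$. Hence $h^{K_i-1}=0$ and $\MVNNi{(0,\dots,0)}=W^{i,K_i}0=0$, which is exactly \normalized{}.

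For monotonicity \monoton{}, the key observation is that the network is a composition of maps that are each non-decreasing with respect to the componentwise partial order, and such compositions preserve monotonicity. Concretely, I would show by induction that $x\mapsto h^k$ is monotone: the diagonal map $\D$ is monotone because its entries $1/c_j$ are positive; each affine map $z\mapsto W^{i,k}z+b^{i,k}$ is monotone because the non-negative matrix $W^{i,k}\ge 0$ preserves the order $z\le z'$; and the bounded ReLU $\varphi_{0,t^{i,k}}$ is a non-decreasing scalar function applied coordinatewise. Composing these through all hidden layers and finally applying the non-negative output weights $W^{i,K_i}\ge 0$ yields $a\le b\implies \MVNNi{a}\le\MVNNi{b}$.

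I do not expect any serious obstacle: the argument is a routine induction over the layers. The only point requiring mild care is to state monotonicity with respect to the correct partial order at each intermediate stage and to confirm that every building block — positive diagonal scaling, non-negative-weight affine map, and coordinatewise bReLU — respects it; once this is made precise, closure of monotone maps under composition does the rest. Combining both parts establishes $\MVNNi[(W^i,b^i)]{}\in\Vmon$ for all $W^{i}\ge0$ and $b^{i}\le0$.
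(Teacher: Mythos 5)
Your proposal is correct and matches the paper's approach: the paper simply defers to the analogous proof of Lemma~1 in \citet{weissteiner2022monotone}, which is exactly this layer-by-layer argument (zero propagation through non-positive biases and bReLUs for \normalized{}, and closure of componentwise-monotone maps under composition for \monoton{}). The only new ingredient relative to the cited proof is the normalization layer $\D$, which you handle correctly by noting its diagonal entries $1/c_j$ are positive.
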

\begin{proof}\ 
    The proof of this lemma is perfectly analogous to the proof of \cite[Lemma~1]{weissteiner2022monotone}.
\end{proof}
Now we are ready to present a constructive proof for \Cref{app:thm:Universality}.
\paragraph{Proof of \Cref{app:thm:Universality}}
\begin{proof}\label{appproof:thm:Universality}\
This proof follows a similar strategy as the proof of \cite[Theorem~1]{weissteiner2022monotone}.
\begin{enumerate}
    \item {\small $\Vmon \supseteq \left\{\MVNNi[(W^i,b^i)]{}: W^{i,k}\ge0, b^{i,k}\le0\,\, \forall k \in \{1,\ldots,K_i\} \right\}$}\\
    
    This direction follows immediately from \Cref{app_lem:normalized and monoton}.\\
    
    \item {\small$\Vmon \subseteq \left\{\MVNNi[(W^i,b^i)]{}: W^{i,k}\ge0, b^{i,k}\le0\,\, \forall k \in \{1,\ldots,K_i\} \right\}$}\\
    
    Let $(\hvi{x})_{x\in \X} \in \Vmon$.
    For the reverse direction, we give a constructive proof, i.e., we construct an mMVNN $\MVNNi{}$ with $\theta=(W^i_{\hvi{}},b^i_{\hvi{}})_{i\in\{1,\dots,4\}}$ such that $\MVNNi{x}=\hvi{x}$ for all $x\in \X$.
    
    Let $(w_j)_{j=1}^{|\X|}$ denote the values corresponding to $(\hvi{x})_{x\in \X}$ of all $|\X|=\prod_{j=1}^m (c_j+1)$ possible bundles $x\in\X$ sorted by value in increasing order, i.e, let
    $x_1=(0,\ldots,0)$ with 
    \begin{align}\label{eq:w_1}
    w_1:=\hvi{x_1}=0,
    \end{align}
    let $x_{|\X|}=c=(c_1,\ldots,c_m)$ with \begin{align}\label{eq:w_2m}
    w_{|\X|}:=\hvi{x_{|\X|}},
    \end{align}
    and $x_j,x_l \in \X\setminus\{x_1,x_{|\X|}\}$\ for $1< l \le j \le |\X|-1$ with
    \begin{align}\label{eq:w_remaining}
    w_j:=\hvi{x_j}\, \leq\, w_l:=\hvi{x_l}.
    \end{align}
    In the following, for $x_l,x_j\in \X$, we use the notation  $x_l\leq x_j$ iff $x_{l,k}\leq x_{j,k}\ \ \forall k\in M$. Thus, we write $x_l\not\leq x_j$ iff $\exists k\in M:x_{l,k}> x_{j,k}$, i.e., iff $x_l\leq x_j$ is not the case.\footnote{Note, that in the multidimensional case $m>1$, the two symbols $\not\leq$ and $>$ have a different meaning, i.e., $x_l> x_j \iff \left( x_l\not\leq x_j \text{ and } x_l\geq x_j\right)$. Further note that $<$, $\leq$ and $\not\leq$ exactly correspond to $\subset$, $\subseteq$ and $\not\subseteq$ respectively in multiset notation.} Furthermore, we denote by $ \left<\cdot,\cdot\right>$ the Euclidean scalar product on $\R^m$. Then, for all $x\in\X$:
    {\small
    \begin{align}
    &\hvi{\xi}=\hspace{-0.05cm}\sum_{l=1}^{|\X|-1}\hspace{-0.05cm} \left(w_{l+1}-w_{l}\right)\1{\forall j\in\{1,\dots,l\}\,:\,\xi\not\leq x_{j}}\\
    &=\sum_{l=1}^{|\X|-1}\hspace{-0.05cm} \left(w_{l+1}-w_{l}\right)		  \phiu_{0,1}{}\hspace{-0.05cm}\left(\sum_{j=1}^{l}	\phiu_{0,1}{}\left( \sum_{k=1}^{m}\phiu_{0,1}(\xi_k-x_{j,k})\right)-(l-1)\right)\label{eq:last_term},	      
    \end{align}
    }%
    where the second equality follows since
    {\small
    \begin{align}
        \xi\not\leq x_{j} & \iff \exists k \in  M : \xi_k>x_{j,k}\\
        & \iff \exists k \in M: \phiu_{0,1}(\xi_k-x_{j,k})=1\\
        &\iff \phiu_{0,1}{}\left( \sum_{k=1}^{m}\phiu_{0,1}(\xi_k-x_{j,k})\right)=1,
    \end{align}
    }%
    which implies that
    \begin{align}
        &\left(\forall j\in\{1,\dots,l\}: \xi\not\leq x_{j}\right)\notag\\ 
        &\iff \sum_{j=1}^{l}\phiu_{0,1}{}\left( \sum_{k=1}^{m}\phiu_{0,1}(\xi_k-x_{j,k})\right)=l,
    \end{align}
    and
    \begin{align}
        &\1{\forall j\in\{1,\dots,l\}\,:\,x\not\leq x_{j}} \notag\\
        &=\phiu_{0,1}{}\left(\sum_{j=1}^{l}\phiu_{0,1}{}\left( \sum_{k=1}^{m}\phiu_{0,1}(\xi_k-x_{j,k})\right)-(l-1)\right).
    \end{align}

    To match the structure of the mMVNN architecture defined in \Cref{def:MVNN}, we can write $\hvi{\xi}=\hvi{{\D}^{-1}\D\xi}$ and plug in ${\D}^{-1}\D\xi$ instead of $\xi$ in \Cref{eq:last_term}.\footnote{The diagonal matrix $\D\coloneqq \diag\left(\nicefrac{1}{c_1},\ldots,\nicefrac{1}{c_m}\right)$ is invertible with $\D^{-1}= \diag\left(c_1,\ldots,c_m\right)$, since all capacities $c_i$ are strictly larger than 0.}

\Cref{eq:last_term} can be rewritten as the following mMVNN
{\small
\begin{align}&\MVNNi{x}=\notag\\
&W^{i,4}_{\hvi{}} \phiu_{0,1}{}\left(W^{i,3}_{\hvi{}} \phiu_{0,1}{}\left(W^{i,2}_{\hvi{}} \phiu_{0,1}{}\left(W^{i,1}_{\hvi{}} \D\xi+ b^{i,1}_{\hvi{}}\right) + b^{i,2}_{\hvi{}}\right) b^{i,3}_{\hvi{}}\right)\notag\\
&+b^{i,4}_{\hvi{}}
\end{align}
}

in the matrix-notation of \Cref{def:MVNN} with weight matrices and bias-vectors be given as:
\begin{align}\tiny
    W_{\hvi{}}^{i,1}
    &:=
        \begin{bmatrix}
    \D^{-1}  \\
    \vdots     \\
 \D^{-1}
    \end{bmatrix}
    \in\R^{m(|\X|-1)\times m},
    \\
    b_{\hvi{}}^{i,1}
    &:=
    -\begin{bmatrix}
    x_1\\
    x_2\\
    \vdots\\
    x_{|\X|-1}
    \end{bmatrix}\in\R^{m(|\X|-1)},
    \\
    \begin{split}
    W_{\hvi{}}^{i,2}
    &:=
    \begin{bmatrix}
    \overbrace{1,\dots,1}^{m}       &  \overbrace{0,\dots,0}^{m} &  \ldots &\overbrace{0,\dots,0}^{m}      \\
    0,\dots,0 &1,\dots,1  & \ddots  &\vdots     \\
    \vdots   & \ddots  &  \ddots &0,\dots,0     \\
    0,\dots,0      &   \ldots     & 0,\dots,0 & 1,\dots,1 
    \end{bmatrix}\\
    &\in\R^{(|\X|-1)\times m(|\X|-1)},\label{eq:W2}
    \end{split}
    \\
    b_{\hvi{}}^{i,2}
    &:=0
    \in\R^{|\X|-1},
    \\
    W^{i,3}_{\hvi{}}
    &:=
    \begin{bmatrix}
    1       &  0 &  \ldots &0      \\
    \vdots  &\ddots  & \ddots  &\vdots     \\
    \vdots   &  &  \ddots &0     \\
    1       &   \ldots     & \ldots & 1
    \end{bmatrix}
    \in\R^{(|\X|-1)\times (|\X|-1)},
    \\
    b_{\hvi{}}^{i,3}
    &:=
    \begin{bmatrix}  0\\-1\\ \vdots \\ -(|\X| - 2)\end{bmatrix}
    \in\R^{|\X|-1},
    \\
    W_{\hvi{}}^{i,4}
    &:=
    \begin{bmatrix}
    w_2-w_1\\
    w_3-w_2     \\
    \vdots\\
    w_{|\X|}-w_{|\X|-1} 
    \end{bmatrix}^{\top}
    \in\R^{1\times(|\X|-1)},
    \\
    b_{\hvi{}}^{i,4}
    &:=0
    \in\R,
\end{align}
where $W_{\hvi{}}^{i,2}=\left(\1{(j-1)m<k\leq jm}\right)_{1\leq j\leq|\X|-1,1\leq k\leq m(|\X|-1)}$ is an alternative notation to describe \Cref{eq:W2}.
 Thus, $\MVNNi{x}$ is an mMVNN from \Cref{def:MVNN} with 6 layers in total (i.e., 1 input layer, 1 linear normalization layer~$\D$, 3 non-linear hidden layers and 1 output layer) and respective dimensions $[m,m,m(|\X|-1),|\X|-1,|\X|-1,1]$.
\end{enumerate}
\end{proof}

\begin{remark}[Normalization Layer]\label{rem:NormLayerOptional}
    Note that our \hyperref[appproof:thm:Universality]{proof} of \Cref{app:thm:Universality} would also work without the linear normalization layer~$\D$. The normalization layer has the advantage that we can use similar hyperparameters as for classical MVNNs. E.g., we can use the same values of the parameters in the initialization scheme as provided in \citet[Section~3.2 and Appendix~E]{weissteiner2023bayesian}.
\end{remark}

\begin{remark}[Number of Hidden Layers]
    Note that for binary input vectors (corresponding to classical sets, i.e., $c=(1,\dots,1)$), 2 non-linear hidden layers were sufficient \citep[Proof of Theorem~1]{weissteiner2022monotone}, while for integer-valued input vectors (corresponding to multisets with capacities~$c\in\N^m$), we used 3 non-linear hidden layers for our \hyperref[appproof:thm:Universality]{proof} of \Cref{app:thm:Universality}. It is an interesting open question if 2 non-linear hidden layers would already be sufficient also for our multiset setting with capacities $c\in\N^m$.\footnote{Our \hyperref[appproof:thm:Universality]{proof} of \Cref{app:thm:Universality} works with 1  input layer, 1 linear normalization layer~$\D$ (which is optional, see \Cref{rem:NormLayerOptional}), 3 non-linear hidden layers and 1 output layer (i.e., 5 or 6 layers in total). Whereas the \citep[Proof of Theorem~1]{weissteiner2022monotone} only required 1 input layer, 2 non-linear hidden layers and 1 output layer (i.e., 4 layers in total).}
\end{remark}

\begin{remark}[Number of Hidden Neurons]
    The dimension $[m,m,m(|\X|-1),|\X|-1,|\X|-1,1]$ of the mMVNN used in the \hyperref[appproof:thm:Universality]{proof} of \Cref{app:thm:Universality} is just an upper bound. This proof does not imply that such large networks are actually needed. From a theoretical perspective it is interesting that this upper bound is finite, while for NNs on continuous domains no finite upper bound for perfect approximation exits. In practice much smaller networks are usually sufficient. All the results reported in this paper were achieved by much smaller networks with at most 30 neurons per layer (see \Cref{table_HPO_ranges} in \Cref{subsec:appendix_hpo}).
\end{remark}

\section{ML-powered Demand Query Generation:Theoretical Results}\label{sec:app:ML-powered Demand Query Generation:Theoretical Results}
In this section, we prove \Cref{thm:app:connection_clearing_prices_efficiency_constrainedVersion,thm:GD_on_W}.

\subsection{Proof of \Cref{thm:app:connection_clearing_prices_efficiency_constrainedVersion}}\label{subsec:app:proof_of_corollarly1connection_clearing_prices_efficiency}
In this subsection, we first present and prove in \Cref{thm:connection_clearing_prices_efficiency} an unconstrained version of \Cref{thm:app:connection_clearing_prices_efficiency_constrainedVersion}, which we later use to prove our main statement from \Cref{thm:app:connection_clearing_prices_efficiency_constrainedVersion}.

\Cref{thm:connection_clearing_prices_efficiency} extends \citep[Theorem 3.1.]{bikhchandani2002package}. Concretely, we additionally show that if linear clearing prices exist, \emph{every} minimizer of $W$ is a clearing price  (while \citet[Theorem 3.1.]{bikhchandani2002package} only showed that every clearing price $p$ minimizes $W$ not specifying if there could be other minimizers of $W$ which are not clearing prices).

\begin{lemma}\label{thm:connection_clearing_prices_efficiency}
    Consider the notation from \Cref{def:Indirect_Utility_and_Revenue,def:linear_clearing_prices}
    and the objective function $W(p,v)\coloneqq R(p) + \sum_{i \in N}U(p,v_i)$. Then it holds that, if a linear clearing price exists, every price vector
    \begin{align}\label{eq:clearing_objective}
    p' \in \argmin_{\tilde{p}\in \R_{\ge 0}^m}W(\tilde{p},v)
    \end{align} 
    is a clearing price and the corresponding allocation $a(p')\in \F$ is \emph{efficient}.
\end{lemma}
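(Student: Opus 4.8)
The plan is to treat $W(\cdot,v)$ as a weak-duality upper bound on social welfare and to exploit the fact that the hypothesis of an existing linear clearing price forces the minimal value of $W$ to coincide exactly with $V(a^*)$. Once this pivot is established, every minimizer can be shown to be a clearing price by reversing a single chain of inequalities. First I would prove the weak-duality bound: for \emph{any} feasible allocation $a\in\F$ and \emph{any} $p\in\R_{\ge0}^m$,
\[
V(a)=\sum_{i\in N}\bigl(v_i(a_i)-\iprod{p}{a_i}\bigr)+\sum_{i\in N}\iprod{p}{a_i}\le \sum_{i\in N}U(p,v_i)+R(p)=W(p,v),
\]
where the inequality uses $v_i(a_i)-\iprod{p}{a_i}\le U(p,v_i)$ term-by-term (definition of $U$) together with $\sum_{i\in N}\iprod{p}{a_i}\le R(p)$ (definition of $R$, since $a\in\F$). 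Specializing to the efficient allocation $a^*$ gives $V(a^*)\le W(p,v)$ for all $p\in\R_{\ge0}^m$, hence $V(a^*)\le\inf_{p}W(p,v)$.

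Next I would show the bound is tight whenever a clearing price exists. Let $p^c$ be a clearing price with clearing allocation $a^c=a(p^c)$. Summing the bidder-optimality condition of \Cref{def:linear_clearing_prices} over all $i\in N$ and substituting the revenue-maximality condition gives
\[
V(a^c)=\sum_{i\in N}\bigl(v_i(a^c_i)-\iprod{p^c}{a^c_i}\bigr)+\sum_{i\in N}\iprod{p^c}{a^c_i}=\sum_{i\in N}U(p^c,v_i)+R(p^c)=W(p^c,v).
\]
Chaining this with the bound, $V(a^*)\le W(p^c,v)=V(a^c)\le V(a^*)$, forces $V(a^c)=V(a^*)$ (so $a^c$ is efficient) and $W(p^c,v)=V(a^*)=\min_{p}W(p,v)$. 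In particular, the minimum is attained and equals $V(a^*)$, which is the key quantitative fact driving the converse.

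Finally, for the new direction, I would take an arbitrary minimizer $p'$, so that $W(p',v)=V(a^*)$, and re-run the weak-duality chain for $a^*$ at the price $p'$:
\[
V(a^*)=\sum_{i\in N}\bigl(v_i(a^*_i)-\iprod{p'}{a^*_i}\bigr)+\sum_{i\in N}\iprod{p'}{a^*_i}\le \sum_{i\in N}U(p',v_i)+R(p')=W(p',v)=V(a^*).
\]
Since the two ends agree, every inequality along the way must be an equality. Equality in the term-by-term estimate yields $v_i(a^*_i)-\iprod{p'}{a^*_i}=U(p',v_i)$ for each $i$ (the bidder-optimality condition), and equality in the revenue estimate yields $\sum_{i\in N}\iprod{p'}{a^*_i}=R(p')$ (the revenue-maximality condition). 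Thus $a^*$ itself witnesses that $p'$ is a clearing price with efficient allocation $a(p')=a^*$.

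The main subtlety, and the precise point of departure from \citet[Theorem~3.1]{bikhchandani2002package}, is the pivot through the value $V(a^*)$: the weak-duality bound alone only establishes that clearing prices minimize $W$, whereas the converse genuinely requires the hypothesis that a clearing price exists, since that is exactly what guarantees $\min_{p}W(p,v)=V(a^*)$. Without an existing clearing price the infimum of $W$ would strictly exceed $V(a^*)$, and its minimizers (if attained) would fail to satisfy the equalities above and need not clear the market.
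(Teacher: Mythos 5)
Your proof is correct and follows essentially the same route as the paper's: the same decomposition $V(a)=\sum_{i}\bigl(v_i(a_i)-\iprod{p}{a_i}\bigr)+\sum_{i}\iprod{p}{a_i}\le W(p,v)$, tightness of this bound at a clearing price, and a term-by-term equality-forcing argument applied to an arbitrary minimizer. The only cosmetic difference is that you pivot through the value $V(a^*)$ and use the efficient allocation $a^*$ itself as the witness that $p'$ clears, whereas the paper equates $W(p',v)=W(p,v)$ directly and uses the clearing allocation $a(p)$ as the witness, which additionally requires the observation that $\sum_{i}\iprod{p'}{a_i(p)}=\iprod{p'}{c}=R(p')$ because a clearing allocation sells every item.
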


\begin{proof}[Proof of \Cref{thm:connection_clearing_prices_efficiency}]\label{proof:connectionClearingPricesEfficiency}
We first show in \Cref{itm1:app:proof_thm:connection_clearing_prices_efficiency} that for clearing prices $p \in \Rpz^m$, the corresponding allocation $a(p)\in \F$ is \emph{efficient}. Next, in \Cref{itm3:app:proof_thm:connection_clearing_prices_efficiency} we show that every linear clearing price $p\in \R_{\ge 0}^m$ minimizes $W$, i.e., formally $p \in \argmin_{\tilde{p}\in \R_{\ge 0}^m}W(\tilde{p},v)$. Finally, in \Cref{itm4:app:proof_thm:connection_clearing_prices_efficiency} we show that if linear clearing prices exist, \emph{any} minimizer of $W$ is in fact a linear clearing price.
\begin{enumerate}
\item \label{itm1:app:proof_thm:connection_clearing_prices_efficiency} Let $(p,a(p)) \in \R^m_{\ge 0}\times \F$ be clearing prices and the corresponding supported allocation (see \Cref{def:linear_clearing_prices}). Furthermore, let $\tilde{a}\in \F$ be any other feasible allocation. Then it holds that:
\begin{align}
&\sum_{i=1}^{n}v_i(a_i(p))=\\
&\sum_{i=1}^{n} [v_i(a_i(p))-\iprod{p}{a_i(p)}] + \sum_{i=1}^{n}\iprod{p}{a_i(p)}= \\
& \left(\sum_{i=1}^{n} U(p,v_i)\right) + R(p)\ge \\
& \sum_{i=1}^{n} [v_i(\tilde{a}_i)-\iprod{p}{\tilde{a}_i}] + \sum_{i=1}^{n}\iprod{p}{\tilde{a}_i} = \\ 
&\sum_{i=1}^{n}v_i(\tilde{a}_i),
\end{align}
where the first inequality follows since $(p,a(p))$ are clearing prices and fulfill \Cref{itm:clearing_bidder,itm:clearing_seller} in \Cref{def:linear_clearing_prices} and because $\tilde{a}\in \F$ is another feasible allocation. This shows that a supported allocation $a(p)$ is efficient.

\item \label{itm3:app:proof_thm:connection_clearing_prices_efficiency} Let $(p,a(p))\in \R^m_{\ge 0}\times \F$ be clearing prices and the corresponding supported allocation (see \Cref{def:linear_clearing_prices}). Furthermore,
let $\tilde{p}\in \R_{\ge 0}^m$ be any other linear prices. Then it holds that:
\begin{align}
&W(p,v)=\label{eq:app:proof:thm1:itm3:1}\\
&\max_{a\in \F}\left\{\sum_{i=1}^{n}\iprod{p}{a_i}\right\} + \sum_{i=1}^{n}\max_{x\in \X} \left\{v_i(x)-\iprod{p}{x}\right\}=\label{eq:app:proof:thm1:itm3:2}\\
&\sum_{i=1}^{n}\iprod{p}{a_i(p)} + \sum_{i=1}^{n}[v_i(a_i(p))-\iprod{p}{a_i(p)}]=\label{eq:app:proof:thm1:itm3:3}\\
&\sum_{i=1}^{n}v_i(a_i(p))=\\%
&\sum_{i=1}^{n}\iprod{\tilde{p}}{a_i(p)} + \sum_{i=1}^{n}[v_i(a_i(p))-\iprod{\tilde{p}}{a_i(p)}]\le\label{eq:app:proof:thm1:itm3:5}\\
&\max_{a\in \F}\left\{\sum_{i=1}^{n}\iprod{\tilde{p}}{a_i}\right\} + \sum_{i=1}^{n}\max_{x\in \X} \left\{v_i(x)-\iprod{\tilde{p}}{x}\right\}=\label{eq:app:proof:thm1:itm3:6}\\
&W(\tilde{p},v),
\end{align}
where \Cref{eq:app:proof:thm1:itm3:2} follows by definition of $W$ and \Cref{eq:app:proof:thm1:itm3:3} follows since $(p,a(p))$ are clearing prices and the corresponding supported allocation, respectively
Thus, we get that
\begin{equation}
W(p,v)\le W(\tilde{p},v),
\end{equation}
for all linear prices $\tilde{p}\in \R_{\ge 0}^m$, which concludes the proof.

\item\label{itm4:app:proof_thm:connection_clearing_prices_efficiency} Let $p^{\prime} \in \argmin_{\tilde{p}\in \R_{\ge 0}^m}W(\tilde{p},v)$ be a minimizer of $W$. Moreover,  let $p \in  \R_{\ge 0}^m$ denote a linear clearing price and
let $a(p)\in \F$ be the corresponding supported allocation (see \Cref{def:linear_clearing_prices}). Furthermore, let $x_i^*(p^{\prime}) \in \argmax_{x\in \X}\left\{v_i(x)-\iprod{p^{\prime}}{x}\right\}\, \forall i\in N$. We know from \Cref{itm3:app:proof_thm:connection_clearing_prices_efficiency} that $W(p^{\prime},v)=W(p,v)$. Furthermore, we get that
\begin{subequations}\label{eqs:WpPrimeVsWp}
\begin{align}
     &\sum_{i=1}^{n}\hvi{x_i^*(p^{\prime})}
    -\sum_{i=1}^{n} \iprod{p^{\prime}}{x_i^*(p^{\prime})}+\iprod{p^{\prime}}{c}=\\
    &W(p^{\prime},v)=W(p,v)=\\
    &\sum_{i=1}^{n}\hvi{a_i(p)}-\sum_{i=1}^{n} \iprod{p}{a_i(p)}+\iprod{p}{c}\stackrel{ \eqref{eq:app:proof:thm1:itm3:3}}{=}\\
    &\sum_{i=1}^{n}\hvi{a_i(p)}=\\
    &\sum_{i=1}^{n}\hvi{a_i(p)}-\sum_{i=1}^{n} \iprod{p^{\prime}}{a_i(p)}+\iprod{p^{\prime}}{c},
\end{align}
\end{subequations}
where the last equation follows since $a(p)$ is a clearing allocation supported by $p$ and thus $\sum_{i=1}^{n} \iprod{p^{\prime}}{a_i(p)}=\iprod{p^{\prime}}{c}$. Next, we can subtract $\iprod{p^{\prime}}{c}$ from both sides of \Cref{eqs:WpPrimeVsWp} to obtain
\begin{align}
    &\sum_{i=1}^{n}\hvi{x_i^*(p^{\prime})}
    - \iprod{p^{\prime}}{x_i^*(p^{\prime})}
    =  \nonumber \\ 
    &
    \sum_{i=1}^{n}\hvi{a_i(p)}- \iprod{p^{\prime}}{a_i(p)}. \label{eq:proof:cor1:itm4:sum_equal}
\end{align}
Moreover, %
from the optimality
of $x_i^*(p^{\prime})$ it follows that 
\begin{align}\label{eq:proof:cor1:itm4:util_maximizer}
v_{i}(x_i^*(p^{\prime}))-
     \iprod{p^{\prime}}{x_i^*(p^{\prime})}
    \geq
    v_{i}(a_i{(p)}
)- \iprod{p^{\prime}}{a_i{}(p)},
\end{align}
holds for every $i \in \{1,\dots,n\}$.

Using now  \Cref{eq:proof:cor1:itm4:sum_equal,eq:proof:cor1:itm4:util_maximizer}, it follows that for every $i \in \{1,\dots,n\}$
\begin{align}\label{eq:summands_equal}
&U(p^{\prime},v_i)=v_{i}(x_i^*(p^{\prime}))-
     \iprod{p^{\prime}}{x_i^*(p^{\prime})}=\\
    &v_{i}(a_{i}(p))- \iprod{p^{\prime}}{a_{i}(p)}.
\end{align}
Taken all together, since $\sum_{i \in N}\iprod{p^{\prime}}{a_i(p)}=\sum_{j \in M} \iprod{p^{\prime}_j}{c_j}=R(p^{\prime})$ and $U(p^{\prime},v_i)=v_{i}(a_{i}(p))- \iprod{p^{\prime}}{a_{i}(p)}$ for all $i\in N$ %
, we can conclude that $(p^{\prime},a(p))\in \R^m_{\ge 0}\times \F $  by \Cref{def:linear_clearing_prices} is a linear clearing price with corresponding supported allocation $a(p)\in \F$. This finalizes the proof of \Cref{thm:connection_clearing_prices_efficiency}.\end{enumerate}
\end{proof}

Now, we are ready to prove  \Cref{thm:app:connection_clearing_prices_efficiency_constrainedVersion}, which follows almost immediately from \Cref{thm:connection_clearing_prices_efficiency}.

\begin{proof}[Proof of \Cref{thm:app:connection_clearing_prices_efficiency_constrainedVersion}]
Let $p^{\prime}$ be a solution to the constrained minimization problem defined by \Cref{eqs:ConstraintWmin}. Moreover,  let $p \in  \R_{\ge 0}^m$ denote a linear clearing price vector and
let $a(p)\in \F$ be the corresponding supported allocation (see \Cref{def:linear_clearing_prices}).

In \Cref{itm3:app:proof_thm:connection_clearing_prices_efficiency} in the proof of \Cref{thm:connection_clearing_prices_efficiency}, we have seen that $p$ minimizes $W$ without any constraints.
The clearing price vector $p$ obviously satisfies constraint~\eqref{eq:app:corollary_constraint}, thus $p$ is also a solution to the constrained optimization problem~\eqref{eqs:ConstraintWmin}.
Therefore, in the case that linear clearing prices exist the minimal objective value of the constrained optimization problem from \eqref{eqs:ConstraintWmin} is equal to the minimal objective value of the unconstrained minimization problem~\eqref{eq:clearing_objective}.
From this we can conclude that in the case that linear clearing prices exist, every solution $p^{\prime}$ of \eqref{eqs:ConstraintWmin} is also a solution of the optimization problem~\eqref{eq:clearing_objective}.
Finally, \Cref{thm:connection_clearing_prices_efficiency} tells us that every solution $p^{\prime}$ of the optimization problem~\eqref{eq:clearing_objective} is a clearing price and $a(p^{\prime})$ is efficient.
\end{proof}
In \Cref{thm:app:connection_clearing_prices_efficiency_constrainedVersion} we proved that the constraint minimizer of $W$ has the same economical favourable properties as the unconstrained minimizer of $W$ if linear clearing prices exist. Thus, we do not lose anything in cases where linear clearing prices exist, while due to the constraint~\eqref{eq:app:corollary_constraint} we have the advantage of receiving feasible allocations in the case that no linear clearing price exist.

\subsection{Details: Proof of \Cref{thm:GD_on_W}}\label{subsec:app:GD_on_W}
Before we start with the proof, we will quickly provide a proof for \Cref{eq:indirect_revenue}, since this equation (that we have not explicitly proven in the main paper) is essential in the proof of \Cref{thm:GD_on_W}.
\hypertarget{proof:app:eq:indirect_revenue}{}\begin{proof}[Proof of \Cref{eq:indirect_revenue}]%
\begin{align}
    R(p)&\coloneqq\max\limits_{a\in \F}\left\{\sum\limits_{i\in N}\iprod{p}{a_{i}}\right\}\\
    &=\max\limits_{a\in \F}\left\{\sum\limits_{i\in N}\sum_{j \in M}p_j a_{i,j}\right\}\\
    &=\max\limits_{a\in \F}\left\{\sum_{j\in M}\sum_{i \in N}p_j a_{i,j}\right\}\\
    &=\max\limits_{a\in \F}\left\{\sum_{j\in M}p_j\sum_{i \in N} a_{i,j}\right\}\\
    &=\max\left\{\sum_{j\in M}p_j\sum_{i \in N} a_{i,j} : \sum_{i \in N} a_{i,j}\leq c_j\ \forall j\in M\right\}\\
    &=\sum\limits_{j\in M}p_j c_j
    =\sum\limits_{j\in M}c_j p_j
    =\iprod{c}{p}.
\end{align}
    
\end{proof}

In the first lemma we prove that $p\mapsto W(p,\left(\MVNNi{}\right)_{i=1}^n)$ is Lipschitz-continuous. While neither $p\mapsto x^*_i(p)$, nor $p\mapsto \MVNNi{x^*_i(p)}$ nor $p\mapsto\iprod{p}{x^*_i(p)}$ are continuous, (surprisingly) $p\mapsto \MVNNi{x^*_i(p)} + \iprod{p}{x^*_i(p)}$ is continuous.
\begin{lemma}[Continuity]\label{le:Wcontinous}
The map $p\mapsto W\left(p,\left(\MVNNi{}\right)_{i=1}^n\right)$ from $\Rpz^m$ to $\Rpz$ is Lipschitz-continuous with Lipschitz-constant $(n+1)\twonorm[c]$.\footnote{Note that Lipschitz-continuity implies (uniform) continuity. The Lipschitz-constant $(n+1)\twonorm[c]$, given in the proof, only depends on the capacities $c$ of $\X$ and on the number of bidders $n$. The proof also works for any other (possibly non-monotonic) value function $v_i:\X\to\R$ instead of $\MVNNi{}$.}
\end{lemma}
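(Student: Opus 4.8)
The plan is to decompose $W$ into its revenue and indirect-utility parts and bound each separately. Writing $W\!\left(p,(\MVNNi{})_{i=1}^n\right) = R(p) + \sum_{i\in N} U(p,\MVNNi{})$, I would first invoke the identity $R(p) = \iprod{c}{p}$ established in the \hyperlink{proof:app:eq:indirect_revenue}{proof of \Cref{eq:indirect_revenue}}: this is a linear map, hence Lipschitz with constant $\twonorm[c]$ by Cauchy--Schwarz. It then suffices to show that each indirect utility $p \mapsto U(p,\MVNNi{})$ is Lipschitz with constant $\twonorm[c]$, after which the triangle inequality assembles the total constant $\twonorm[c] + n\twonorm[c] = (n+1)\twonorm[c]$.

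For the indirect-utility term, the essential structure is that $U(p,\MVNNi{}) = \max_{x\in\X}\{\MVNNi{x} - \iprod{p}{x}\}$ is a pointwise maximum of affine functions of $p$ (one per bundle $x$, with $p$-gradient $-x$). The plan is a direct envelope argument that sidesteps the discontinuity of the argmax entirely. Fix prices $p,q\in\Rpz^m$ and let $x^*$ attain the maximum defining $U(p,\MVNNi{})$. Using that $x^*$ is merely feasible (hence suboptimal) at $q$,
\[
U(p,\MVNNi{}) - U(q,\MVNNi{}) \le \bigl(\MVNNi{x^*} - \iprod{p}{x^*}\bigr) - \bigl(\MVNNi{x^*} - \iprod{q}{x^*}\bigr) = \iprod{q-p}{x^*}.
\]
Cauchy--Schwarz bounds the right-hand side by $\twonorm[q-p]\,\twonorm[x^*]$, and since every bundle $x\in\X$ satisfies $0\le x_j\le c_j$ for all $j\in M$, we have $\twonorm[x^*]\le\twonorm[c]$. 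Swapping the roles of $p$ and $q$ yields the matching lower bound, so $\lvert U(p,\MVNNi{}) - U(q,\MVNNi{})\rvert \le \twonorm[c]\,\twonorm[p-q]$.

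Combining the pieces gives $\lvert W(p,\cdot) - W(q,\cdot)\rvert \le \lvert R(p)-R(q)\rvert + \sum_{i\in N}\lvert U(p,\MVNNi{}) - U(q,\MVNNi{})\rvert \le (n+1)\twonorm[c]\,\twonorm[p-q]$, as claimed. I do not anticipate a genuine obstacle here; the only conceptual subtlety --- flagged in the remark preceding the lemma --- is that the per-bundle quantities $x^*_i(p)$, $\MVNNi{x^*_i(p)}$ and $\iprod{p}{x^*_i(p)}$ are each discontinuous in $p$, so one cannot argue termwise through the optimal bundle. The envelope inequality above is precisely what reconciles this, as it never requires continuity (let alone differentiability) of the argmax. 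Note also that the argument uses only $0\le x_j\le c_j$ and never the monotonicity or nonnegativity of $\MVNNi{}$, so the bound holds verbatim for any value function $v_i\colon\X\to\R$, consistent with the footnote.
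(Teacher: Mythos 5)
Your proposal is correct and follows essentially the same route as the paper's proof: decompose $W$ into $R(p)=\iprod{c}{p}$ (linear, Lipschitz with constant $\twonorm[c]$) plus the $n$ indirect utilities, bound each $U(\cdot,\MVNNi{})$ via the envelope/suboptimality inequality using the maximizer at one price evaluated at the other, apply Cauchy--Schwarz with $\twonorm[x^*]\le\twonorm[c]$, and swap roles to get the two-sided bound. No gaps; the observation that monotonicity of $\MVNNi{}$ is never used matches the paper's footnote.
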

\begin{proof}
    Since $ W\left(p,\left(\MVNNi{}\right)_{i=1}^n\right)\coloneqq R(p) + \sum_{i \in N}U\left(p,\MVNNi{}\right)$ is the sum of $1+n$ functions, we first quickly show that $R$ is Lipschitz-continuous in $p$ and afterwards we will show that also $U$ is Lipschitz-continuous in $p$.
    
    From \Cref{eq:indirect_revenue} it follows that $R(p)=\sum\limits_{j\in M}c_j p_j$ is linear in $p$ and thus Lipschitz-continuous with Lipschitz-constant $\twonorm[c]$.

    In the remainder of the proof we are going to show that $U\left(p,\MVNNi{}\right)\coloneqq \max\limits_{x\in \X}\left\{\MVNNi{}(x)-\iprod{p}{x}\right\}$ is Lipschitz-continuous in $p$.
    Let $p,\tilde{p}\in\Rpz^m$ be two price vectors, then we have
    \begin{align}
U\left(\tilde{p},\MVNNi{}\right)
&\geq\MVNNi{\hat{x}^*_i(p)}+\iprod{\tilde{p}}{\hat{x}^*_i(p)}\\
&=\MVNNi{\hat{x}^*_i(p)}+\iprod{\tilde{p}+p-p}{\hat{x}^*_i(p)}\\
&=\MVNNi{\hat{x}^*_i(p)}+\iprod{p}{\hat{x}^*(p)}+\iprod{\tilde{p}-p}{\hat{x}^*_i(p)}\\
&=U\left(p,\MVNNi{}\right) +\iprod{\tilde{p}-p}{\hat{x}^*_i(p)}\\
&\geq U\left(p,\MVNNi{}\right) -\twonorm[\tilde{p}-p]\twonorm[\hat{x}^*_i(p)]\\
&\geq U\left(p,\MVNNi{}\right) -\twonorm[\tilde{p}-p]\twonorm[c],
    \end{align}
    and thus
    \begin{align}\label{eq:cont_lemma1}
    U\left(\tilde{p},\MVNNi{}\right) -U\left(p,\MVNNi{}\right)\geq -\twonorm[\tilde{p}-p]\twonorm[c].
 \end{align}
    By exchanging the roles of $p$ and $\tilde{p}$, we also obtain $
     U\left(p,\MVNNi{}\right) \geq U\left(\tilde{p},\MVNNi{}\right) -\twonorm[\tilde{p}-p]\twonorm[c]$
 and thus,
 \begin{align}\label{eq:cont_lemma2}
 U\left(\tilde{p},\MVNNi{}\right)-U\left(p,\MVNNi{}\right) \leq \twonorm[\tilde{p}-p]\twonorm[c].
 \end{align}
 Finally, \Cref{eq:cont_lemma1,eq:cont_lemma2} together imply
 \begin{align}\label{eq:LipschitzU}
    \left|U\left(\tilde{p},\MVNNi{}\right) - U\left(p,\MVNNi{}\right) \right| \leq \twonorm[\tilde{p}-p]\twonorm[c].   
 \end{align}
 \Cref{eq:LipschitzU}
 by definition says that $U$ is Lipschitz-continuous in $p$ with Lipschitz-constant $\twonorm[c]$.
 
 So, we finally obtain, that $W$ is Lipschitz-continuous in $p$ with Lipschitz-constant $(n+1)\twonorm[c]$. %
\end{proof}

\begin{lemma}[Convexity]\label{le:Wconvex}
The map $p\mapsto W\left(p,\left(\MVNNi{}\right)_{i=1}^n\right)$ from $\Rpz^m$ to $\Rpz$ is convex.\footnote{The proof also works for any other (possibly non-monotonic) value function $v_i:\X\to\R$ instead of $\MVNNi{}$.}
\end{lemma}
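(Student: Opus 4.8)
The plan is to exploit the additive structure $W(p,(\MVNNi{})_{i=1}^n) = R(p) + \sum_{i \in N} U(p, \MVNNi{})$ and to show that each summand is convex in $p$; convexity of $W$ then follows immediately, since a finite sum of convex functions is convex.

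First I would dispose of the revenue term. By \Cref{eq:indirect_revenue} (proven in the preceding subsection), $R(p) = \iprod{c}{p}$ is linear in $p$, hence convex. Next I would treat each indirect-utility term. By definition,
\[
U(p, \MVNNi{}) = \max_{x \in \X}\left\{ \MVNNi{x} - \iprod{p}{x} \right\}.
\]
For each fixed bundle $x \in \X$, the map $p \mapsto \MVNNi{x} - \iprod{p}{x}$ is affine in $p$ (the constant $\MVNNi{x}$ minus the linear term $\iprod{p}{x}$). Since $\X$ is finite, $U(\cdot, \MVNNi{})$ is the pointwise maximum of finitely many affine functions, which is the upper envelope of an affine family and is therefore a piecewise-linear convex function; I would invoke the elementary fact (see, e.g., \citet{Rockafellar1970}) that a pointwise supremum of affine functions is convex.

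Finally, summing the convex function $R$ and the $n$ convex functions $U(\cdot, \MVNNi{})$ yields that $p \mapsto W(p,(\MVNNi{})_{i=1}^n)$ is convex, as claimed.

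The main point to get right is conceptual rather than technical: individually, neither the selection $p \mapsto \hat{x}^*_i(p)$ nor the composition $p \mapsto \MVNNi{\hat{x}^*_i(p)}$ is convex or even continuous (as already noted in the continuity lemma). One must therefore avoid reasoning through the argmax and instead view $U$ as the envelope of the affine family indexed by the finite set $\X$. From that envelope perspective convexity is immediate, and no analysis of the discontinuous bundle selection $\hat{x}^*_i(p)$ is required. Notably, the argument uses only that $\X$ is finite and does not rely on monotonicity, so it applies verbatim to an arbitrary value function $v_i : \X \to \R$ in place of $\MVNNi{}$.
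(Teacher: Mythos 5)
Your proposal is correct and follows essentially the same route as the paper's proof: decompose $W$ into the linear revenue term plus the $n$ indirect-utility terms, observe that each $U(\cdot,\MVNNi{})$ is a pointwise maximum of affine functions of $p$ and hence convex, and conclude by summing convex functions. The paper cites Danskin/Bertsekas for the "max of convex functions is convex" fact where you cite Rockafellar, but the argument is the same.
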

\begin{proof}
    Since $ W\left(p,\left(\MVNNi{}\right)_{i=1}^n\right)\coloneqq R(p) + \sum_{i \in N}U\left(p,\MVNNi{}\right)$ is the sum of $1+n$ functions, we first quickly show that $R$ is obviously convex in $p$ and afterwards we will show that also $U$ is convex in $p$.%

    From \Cref{eq:indirect_revenue} it follows that $R(p)=\sum\limits_{j\in M}c_j p_j$ is linear in $p$ and thus convex in $p$.

    In the remainder of the proof we are going to show that $U\left(p,\MVNNi{}\right)\coloneqq \max\limits_{x\in \X}\left\{\MVNNi{}(x)-\iprod{p}{x}\right\}$ is convex in $p$.
    For every $i\in N$ and for every $x\in \X$, the map $p\mapsto \MVNNi{}(x)-\iprod{p}{x}$ is (affine-)linear\footnote{In the following, we will call affine-linear functions \enquote{linear} too as commonly done in the literature.} in $p$ and thus convex in $p$. As the maximum over convex functions is always convex \citep{Danskin1967nla.cat-vn2047560,Bertsekas1971ControlOU,bertsekas1999nonlinear}, $U\left(p,\MVNNi{}\right)\coloneqq \max\limits_{x\in \X}\left\{\MVNNi{}(x)-\iprod{p}{x}\right\}$ is convex in $p$.

So finally we get that $W$ is convex in $p$, since it is the sum of $(n+1)$ convex functions.
\end{proof}

\begin{lemma}[Sub-gradients]\label{le:WsubGradient} 
Let $\hat{\X}^*_i(p):=\argmax_{x\in \X}\left\{\MVNNi{x}-\iprod{p}{x}\right\}$ denote the set of utility maximizing bundles w.r.t.~$\MVNNi{}$  for the $i$-th bidder. Then the sub-gradients of the map $p\mapsto W\left(p,\left(\MVNNi{}\right)_{i=1}^n\right)$ from $\Rpz^m$ to $\Rpz$ are given by
\begin{multline}\label{eq:subgradWconv}
    \subgrad_p W(p,\left(\MVNNi{}\right)_{i=1}^n)\\
    :=
    \conv{c-\sum_{i\in N}\hat{x}^*_i(p) : %
    \left(\hat{x}^*_i(p)\right)_{i\in N}\in \bigtimes_{i\in N}\hat{\X}^*_i(p)
    },
\end{multline}
where $\convwo$ denotes the convex hull and $\bigtimes$ denotes the Cartesian product. In particular, for each $\left(\hat{x}^*_i(p)\right)_{i\in N}\in \bigtimes_{i\in N}\hat{\X}^*_i(p)$,
\begin{equation}
    c-\sum_{i\in N}\hat{x}^*_i(p) 
    \in
    \subgrad_p W(p,\left(\MVNNi{}\right)_{i=1}^n)
\end{equation}
is \emph{a} subgradient of $W$ with respect to $p$.\footnote{The proof also works for any other (possibly non-monotonic) value function $v_i:\X\to\R$ instead of $\MVNNi{}$.}
\end{lemma}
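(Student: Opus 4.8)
The plan is to exploit the very special structure of $W$: it is a linear term plus a finite sum of pointwise maxima of affine functions, and for such functions the subdifferential is completely standard. First I would write, as in \Cref{thm:app:connection_clearing_prices_efficiency_constrainedVersion}, $W(p,(\MVNNi{})_{i=1}^n)=R(p)+\sum_{i\in N}U(p,\MVNNi{})$. By \Cref{eq:indirect_revenue}, $R(p)=\iprod{c}{p}$ is linear in $p$, hence differentiable everywhere with $\subgrad_p R(p)=\{c\}$.

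Next, for each bidder $i$ I would treat $U(p,\MVNNi{})=\max_{x\in\X}\{\MVNNi{x}-\iprod{p}{x}\}$ as the pointwise maximum over the \emph{finite} index set $\X$ of the affine-in-$p$ functions $g_x\colon p\mapsto \MVNNi{x}-\iprod{p}{x}$, each of which has constant gradient $\nabla_p g_x=-x$. The classical characterization of the subdifferential of a finite pointwise maximum of convex (here affine) functions \citep{Danskin1967nla.cat-vn2047560,Rockafellar1970} states that the subdifferential equals the convex hull of the gradients of the \emph{active} pieces, i.e.\ those attaining the maximum. Since the active pieces are exactly the $x\in\hat{\X}^*_i(p)$, this yields $\subgrad_p U(p,\MVNNi{})=\conv{-x : x\in\hat{\X}^*_i(p)}$.

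It then remains to recombine these pieces. Because $R$ and all $U(\cdot,\MVNNi{})$ are real-valued (finite) convex functions on $\R^m$ — convexity having been established in \Cref{le:Wconvex} — the Moreau--Rockafellar sum rule applies with equality and gives $\subgrad_p W=\{c\}+\sum_{i\in N}\subgrad_p U(p,\MVNNi{})$. Finally I would invoke the elementary identities that the Minkowski sum of convex hulls equals the convex hull of the Minkowski sum, and that translation by the singleton $\{c\}$ commutes with taking the convex hull, to obtain
\begin{align*}
\subgrad_p W(p,(\MVNNi{})_{i=1}^n)
&= \conv{c-\sum_{i\in N}x_i : x_i\in\hat{\X}^*_i(p)}\\
&= \conv{c-\sum_{i\in N}\hat{x}^*_i(p) : (\hat{x}^*_i(p))_{i\in N}\in\bigtimes_{i\in N}\hat{\X}^*_i(p)},
\end{align*}
which is exactly \eqref{eq:subgradWconv}. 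The ``in particular'' claim is then immediate, since for any fixed selection $(\hat{x}^*_i(p))_{i\in N}$ of maximizers, the point $c-\sum_{i\in N}\hat{x}^*_i(p)$ is one of the points whose convex hull defines $\subgrad_p W$, and hence lies in it.

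The main obstacle I anticipate is not any single computation but getting the convex-analytic bookkeeping exactly right: (i) correctly invoking the active-gradient formula for the subdifferential of a max, being careful that a subgradient is $-x$ for every active $x$, not $-x^*$ for some unique maximizer, since $\hat{\X}^*_i(p)$ need not be a singleton on the non-differentiability set; and (ii) confirming that the sum rule holds \emph{with equality}, which it does here precisely because every summand is finite and continuous on all of $\R^m$, so no relative-interior constraint qualification is needed. A secondary subtlety worth flagging is that we compute the subdifferential of $W$ viewed as a function on all of $\R^m$, so no normal-cone term of $\Rpz^m$ enters the formula; the nonnegativity constraint $p\ge 0$ is handled separately by the projected update rule \eqref{eqs:finalUpdateRule} rather than inside the subdifferential.
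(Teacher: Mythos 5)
Your proposal is correct and follows essentially the same route as the paper's proof: the same decomposition $W=R+\sum_i U$, the same identification of $\subgrad_p U(p,\MVNNi{})$ as the convex hull of the active gradients $-x$ for $x\in\hat{\X}^*_i(p)$ (the paper packages this as an application of Danskin's theorem on the finite, hence compact, set $\X$, which reduces to exactly the active-gradient formula you invoke), and the same combination via the Moreau--Rockafellar sum rule together with the identity that Minkowski sums of convex hulls equal convex hulls of Minkowski sums. Your closing remarks on the equality case of the sum rule and on the absence of a normal-cone term are accurate and match the paper's treatment.
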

\begin{proof}
    Since $ W\left(p,\left(\MVNNi{}\right)_{i=1}^n\right)\coloneqq R(p) + \sum_{i \in N}U\left(p,\MVNNi{}\right)$ is the sum of $1+n$ functions, we first quickly compute the (sub\nobreakdash-)gradient of $R$ with respect to $p$ and afterwards we will compute the sub-gradients of $U$ with respect to $p$.

 From \Cref{eq:indirect_revenue} it follows that $R(p)=\sum\limits_{j\in M}c_j p_j$ is linear in $p$ and thus its sub-gradient is uniquely defined as its gradient $\nabla_p R(p)=c$, i.e., $\subgrad_p R(p)=\{c\}$.

    In the remainder of the proof we are going to compute the set of sub-gradients of $U\left(p,\MVNNi{}\right)\coloneqq \max\limits_{x\in \X}\left\{\MVNNi{}(x)-\iprod{p}{x}\right\}$ with respect to $p$ with the help of \href{https://en.wikipedia.org/w/index.php?title=Danskin\%27s_theorem&oldid=1159930476}{Danskin's theorem} (originally proven by \citet{Danskin1967nla.cat-vn2047560} and later refined by \citet{Bertsekas1971ControlOU,bertsekas1999nonlinear}).
Let's define $\phi_i(p,x):=\MVNNi{}(x)-\iprod{p}{x}$. Then $U\left(p,\MVNNi{}\right)= \max\limits_{x\in \X}\phi_i(p,x)$ and $\hat{\X}^*_i(p)=\argmax\limits_{x\in \X}\phi_i(p,x)$. Next, we show that $\phi_i:\Rpz^m\times\X\to\R$ fulfills all the assumptions of \href{https://en.wikipedia.org/w/index.php?title=Danskin\%27s_theorem&oldid=1159930476}{Danskin's theorem}:
\begin{itemize}
\item For every $x\in \X$ the map $p\mapsto\phi_i(p,x)$ is convex and differentiable, since it is (affine-)linear.
\item For every $p\in \Rpz^m$ the map $x\mapsto\phi_i(p,x)$ is continuous, since $\X$ is discrete and every function is continuous with respect to the discrete topology.
\item Furthermore, the set~$\X$ is compact (since it is finite).
\end{itemize}
Under these three assumptions \href{https://en.wikipedia.org/w/index.php?title=Danskin\%27s_theorem&oldid=1159930476}{Danskin's theorem} tells us that 
\begin{align}\label{eq:subUgeneral}
    \subgrad_p U\left(p,\MVNNi{}\right)
    = \conv{\nabla_p\phi_i(p,x) : x\in\hat{\X}^*_i(p)}.
\end{align}
Now we can simply compute $\nabla_p\phi_i(p,x)$ for any constant $x\in\X$, i.e.,
\begin{subequations}\label{eq:gradphi}
    \begin{align}
    \nabla_p\phi_i(p,x)
    &=\nabla_p\left(\MVNNi{}(x)-\iprod{p}{x}\right)\\
    &=\nabla_p\MVNNi{}(x)-\nabla_p\iprod{p}{x}\\
    &=0-x
    =-x.
\end{align}
\end{subequations}
Plugging in \Cref{eq:gradphi} into \Cref{eq:subUgeneral} results in
\begin{align}\label{eq:subU}
    \subgrad_p U\left(p,\MVNNi{}\right)
    = \conv{-x : x\in\hat{\X}^*_i(p)}.
\end{align}
Note that set of sub-gradients of a sum of real-valued\footnote{Note that \citet[Theorem~23.8]{Rockafellar1970} is formulated for proper convex functions with overlapping effective domains that can potentially attain $+\infty$ as value. In our case of real-valued function (which implies that they cannot attain $+\infty$) all convex functions are proper and their effective domain is simply their domain.} convex functions, is equal to the sum\footnote{The (Minkowski) sum of two sets $A$ and $B$ is defined as the set $A+B=\{a+b:(a, b)\in A\times B\}$ of all possible sums of any element of the first set $A$ and any element of the second set $B$.} of the sets of sub-gradients of the individual functions \citep[Theorem~23.8]{Rockafellar1970}. Further note that the sum of convex hulls of sets is equal to the convex hull of their sum. We use these two insights and \Cref{eq:subU} to finally compute the set of sub-gradients
\begin{subequations}
\begin{align}
    \MoveEqLeft[1] \subgrad_pW(p,\left(\MVNNi{}\right)_{i=1}^n)=%
    \\
    &=
    \subgrad\left(R(p) + \sum_{i \in N}U\left(p,\MVNNi{}\right)\right)\\
    &=
    \subgrad R(p) + \sum_{i \in N}\subgrad U\left(p,\MVNNi{}\right)\\
    &=
    \{c\}+\sum_{i \in N}\conv{-x : x\in \hat{\X}^*_i(p) }\\
    &=
    \{c\}+\convwo{\sum_{i \in N}\left\{-x : x\in \hat{\X}^*_i(p) \right\}}\\
    &=
     \{c\}-\convwo{\sum_{i \in N}\hat{\X}^*_i(p)}\\
    &=
    \{c\}-\convwo{\left\{\sum_{i \in N} \hat{x}^*_i(p) : %
    \left(\hat{x}^*_i(p)\right)_{i\in N}\in \bigtimes_{i\in N}\hat{\X}^*_i(p)
    \right\}}\\
    &=
    \convwo{\left\{c-\sum_{i \in N} \hat{x}^*_i(p) : %
    \left(\hat{x}^*_i(p)\right)_{i\in N}\in \bigtimes_{i\in N}\hat{\X}^*_i(p)
    \right\}}
    ,\label{eq:proof:subgradW}
\end{align}
\end{subequations}
which proves the main statement~\eqref{eq:subgradWconv} of \Cref{le:WsubGradient}. This directly implies that for each $\left(\hat{x}^*_i(p)\right)_{i\in N}\in \bigtimes_{i\in N}\hat{\X}^*_i(p)$,
\begin{equation}
    c-\sum_{i\in N}\hat{x}^*_i(p) 
    \in
    \subgrad_p W(p,\left(\MVNNi{}\right)_{i=1}^n)
\end{equation}
is an element of the convex hull in \eqref{eq:proof:subgradW} and thus a sub-gradient of $W$ with respect to $p$.
\end{proof}

Next, \Cref{le:WaeDiff} shows that the map $p\mapsto W\left(p,\left(\MVNNi{}\right)_{i=1}^n\right)$ from $\Rpz^m$ to $\Rpz$ is Lebesgue-almost everywhere differentiable with (proper) gradient\footnote{Note that the gradient in \Cref{eq:app:gradWae,eq:app:gradW} actually denotes a classical gradient and not just a sub-gradient.}
\begin{equation}\label{eq:app:gradWae}
    \nabla_p W(p,\left(\MVNNi{}\right)_{i=1}^n)\stackrel{\mathrm{a.e.}}{=}c-\sum_{i\in N}\hat{x}^*_i(p),
\end{equation} 
where $\hat{x}^*_i(p)\in\hat{\X}^*_i(p):=\argmax_{x\in \X}\left\{\MVNNi{x}-\iprod{p}{x}\right\}$ denotes the utility maximizing bundle w.r.t.~$\MVNNi{}$ for the $i$-th bidder, which is Lebesgue-almost everywhere unique.
\begin{lemma}[A.e.~Differentiable]\label{le:WaeDiff} %
Let $\left(\MVNNi{}\right)_{i=1}^n$ be a tuple of mMVNNs.
Then there exists a dense\footnote{The set $P$ being \emph{dense} means that its topological closure $\overline{P}$ covers the whole space, i.e., $\overline{P}=\Rpz^m$.} subset $P\subseteq \Rpz^m$ such that $\Rpz^m\setminus P$ is a Lebesgue null set\footnote{The set $\Rpz^m\setminus P$ being a \emph{(Lebesgue) null set} means that its $m$-dimensional Lebesgue measure (i.e., the $m$-dimensional volume)~$\lambda^m\left(\Rpz^m\setminus P\right)=0$. Within this work \enquote{a.e.} always corresponds to \enquote{Lebesgue almost everywhere}.}, and that $\forall p\in P:$
\begin{enumerate}
\item\label{itm:uniqueness} A unique optimizer $\hat{x}^*_i(p)\in\hat{\X}^*_i(p)$ exists, i.e.,
\begin{align}
    \left\{x^*_i(p)\right\}=\hat{\X}^*_i(p):=\argmax_{x\in \X}\left\{\MVNNi{x}-\iprod{p}{x}\right\}
\end{align}
and
\item\label{itm:app:gradW} the map $p \mapsto W\left(p,\left(\MVNNi{}\right)_{i=1}^n\right)$ is differentiable with gradient 
\begin{equation} \label{eq:app:gradW}
    \nabla_p W(p,\left(\MVNNi{}\right)_{i=1}^n)=c-\sum_{i\in N}\hat{x}^*_i(p),
\end{equation}
which is also the unique sub-gradient, i.e. $\left\{c-\sum_{i\in N}\hat{x}^*_i(p)\right\}=\subgrad_p W(p,\left(\MVNNi{}\right)_{i=1}^n)$.
\item\label{itm:app:locConstW} The unique optimizer $\hat{x}^*_i(p)$ is constant in a local neighborhood of $p$, and thus $\partial_p \hat{x}^*_i(p)=0$.\footnote{Note that one could even prove that the gradient is continuous on $P$, while the gradient can have jumps at the null set $\Rpz^m\setminus P$. The proof also works for any other (possibly non-monotonic) value function $v_i:\X\to\R$ instead of $\MVNNi{}$.}
\end{enumerate}
\end{lemma}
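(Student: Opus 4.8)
The plan is to exploit the decomposition $W = R + \sum_{i\in N} U(\cdot,\MVNNi{})$. By the proof of \eqref{eq:indirect_revenue} the revenue term $R(p)=\iprod{c}{p}$ is linear, hence differentiable everywhere with gradient $c$, so all the real work concerns the indirect-utility terms $U(p,\MVNNi{})=\max_{x\in\X}\{\MVNNi{x}-\iprod{p}{x}\}$. The structural observation on which everything hinges is that, because $\X$ is finite, each $U(\cdot,\MVNNi{})$ is a maximum of finitely many affine maps $p\mapsto\MVNNi{x}-\iprod{p}{x}$, $x\in\X$; it is therefore piecewise linear and convex (consistent with \Cref{le:Wconvex}), and the sole obstruction to differentiability at a price $p$ is a tie among the maximizing bundles.

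First I would construct the exceptional null set directly from the tie loci. For each bidder $i$ and each pair of distinct bundles $x,x'\in\X$, the set $H^i_{x,x'}:=\{p:\iprod{p}{x'-x}=\MVNNi{x'}-\MVNNi{x}\}$ is an affine hyperplane in $\R^m$, genuinely $(m-1)$-dimensional since $x'-x\neq 0$, hence closed, nowhere dense and Lebesgue-null. Setting $B:=\bigcup_{i\in N}\bigcup_{x\neq x'}(H^i_{x,x'}\cap\Rpz^m)$, a finite union of such sets, and $P:=\Rpz^m\setminus B$, I obtain that $\Rpz^m\setminus P=B$ is a Lebesgue null set; and since a finite union of closed nowhere-dense sets is nowhere dense, $P$ is open and dense in $\Rpz^m$, exactly as claimed.

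Then I would verify the three items pointwise on $P$. Item~1 is immediate: by construction no two distinct bundles are ever tied for any $p\in P$, so each $\hat{\X}^*_i(p)$ is a singleton $\{\hat{x}^*_i(p)\}$. For item~3, fix $p\in P$ and bidder $i$ and write $x^*:=\hat{x}^*_i(p)$; since $x^*$ strictly beats every competitor and $\X$ is finite, the margin $\delta:=\min_{x\neq x^*}[(\MVNNi{x^*}-\iprod{p}{x^*})-(\MVNNi{x}-\iprod{p}{x})]$ is strictly positive, and as each of these finitely many margin functions is affine, hence continuous, in $p$, the strict inequalities persist on an open neighborhood of $p$, so $\hat{x}^*_i$ is constant there and $\partial_p\hat{x}^*_i(p)=0$. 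For item~2, on that same neighborhood $U(\cdot,\MVNNi{})$ coincides with the single affine map $q\mapsto\MVNNi{x^*}-\iprod{q}{x^*}$, so it is differentiable at $p$ with $\nabla_p U(p,\MVNNi{})=-\hat{x}^*_i(p)$; summing over $i$ and adding $\nabla_p R(p)=c$ yields $\nabla_p W(p,(\MVNNi{})_{i=1}^n)=c-\sum_{i\in N}\hat{x}^*_i(p)$. Uniqueness of this subgradient follows from \Cref{le:WsubGradient}, whose convex hull collapses to the single point $c-\sum_{i\in N}\hat{x}^*_i(p)$ once every $\hat{\X}^*_i(p)$ is a singleton, or equivalently from the standard fact that a convex function differentiable at a point has a singleton subdifferential there \citep{Rockafellar1970}.

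The only genuinely delicate point, and the one I expect to be the main obstacle, is the interchange hidden in item~2: the map $p\mapsto\hat{x}^*_i(p)$ is globally discontinuous, yet the gradient of $W$ is computed as though it were constant. The local-constancy argument above is precisely what licenses this; it makes rigorous the heuristic ``$\partial_p\hat{x}^*_i(p)\stackrel{\mathrm{a.e.}}{=}0$'' invoked in the sketch of \Cref{thm:GD_on_W}, and it is the combination of the finiteness of $\X$ with the affinity of the competing objectives that supplies the local rigidity on which everything rests. The remaining measure-zero and density claims are then routine, once the tie loci have been identified as hyperplanes.
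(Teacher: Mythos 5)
Your proof is correct, but it reaches the conclusion by a genuinely different route than the paper. The paper obtains the set $P$ non-constructively: it applies Rockafellar's theorem on almost-everywhere differentiability of real-valued convex functions to each $U(\cdot,\MVNNi{})$ to get sets $P_i$, intersects them, and then works \emph{backwards} from differentiability --- differentiability at $p$ forces the subgradient set of \Cref{le:WsubGradient} to be a singleton, which forces $\hat{\X}^*_i(p)$ to be a singleton (Item~1), and local constancy (Item~3) is then extracted via a quantitative margin argument using the Lipschitz constant $\twonorm[c]$. You instead build $P$ \emph{explicitly} as the complement of the finitely many tie hyperplanes $H^i_{x,x'}$ and run the implications in the opposite order: no ties $\Rightarrow$ unique maximizer $\Rightarrow$ local constancy $\Rightarrow$ local affinity of $U$ $\Rightarrow$ differentiability with the stated gradient. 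Your construction is more elementary (it needs only that a finite union of affine hyperplanes is closed, nowhere dense and null, rather than a theorem from convex analysis) and it delivers slightly more: your $P$ is open, and the locally-constant-gradient structure makes the piecewise linearity of $W$ visible, which in the paper is only noted as a remark. The one thing you still lean on convex analysis for is the claim that the classical gradient is the \emph{unique} subgradient, but as you note this follows either from \Cref{le:WsubGradient} collapsing to a point or from the standard fact that differentiability of a convex function at a point implies a singleton subdifferential, so no gap remains.
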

\begin{proof}
We start this proof by defining $P$ and showing a.e.\ differentiability.
 Since $ W\left(p,\left(\MVNNi{}\right)_{i=1}^n\right)\coloneqq R(p) + \sum_{i \in N}U\left(p,\MVNNi{}\right)$ is the sum of $1+n$ functions, we first quickly show that $R$ is obviously differentiable and afterwards we show that each $U\left(p,\MVNNi{}\right)$ is differentiable on a set $P_i$. Afterwards we will define $P$ as their intersection.

 From \Cref{eq:indirect_revenue} it follows that $R(p)=\sum\limits_{j\in M}c_j p_j$ is linear in $p$ and thus differentiable.

 Next, we show that $U\left(p,\MVNNi{}\right)\coloneqq \max\limits_{x\in \X}\left\{\MVNNi{}(x)-\iprod{p}{x}\right\}$ is differentiable a.e.\ and define $P_i$. We know already from the proof of \Cref{le:Wconvex} that $U\left(p,\MVNNi{}\right)$ is convex and \citet[Theorem~25.5 on p.~246]{Rockafellar1970} tells us that for any\footnote{There are some very mild technical assumptions in \citet[Theorem~25.5 on p.~246]{Rockafellar1970} that do not matter in our case. \citet{Rockafellar1970} assumes that the function is defined on $\R^m$. In our case we could easily extend the domain of our function from $\Rpz^m$ to $\R^m$. As \citet[Theorem~25.5 on p.~246]{Rockafellar1970} is formulated for proper convex functions that can also attain $+\infty$ one could extend any convex function from a convex domain (such as $\Rpz^m$) to be defined to be $+\infty$ outside of that convex domain. Note that if $P_i$ is dense in $\R_{>0}^m$ it is also dense in $\Rpz^m$ and that $\lambda^m\left(\Rpz^m\setminus \R_{>0}^m\right)=0$.} real-valued convex function there exists a dense set $P_i$ on which the functions is differentiable with the complement of $P_i$ being a null set. %
 Thus, for each $p\mapsto U\left(p,\MVNNi{}\right), \, i\in N$, we obtain such a $P_i$.
 If $1+n\in \N$ functions are differentiable at a point $p$, then their sum is too. Thus, $p\mapsto W\left(p,\left(\MVNNi{}\right)_{i=1}^n\right)$ is differentiable on $P:=\bigcap_{i=1}^n P_i$. Since each $P_i$ is dense in $\Rpz^m$, $P$ is also dense in $\Rpz^m$. Moreover, it holds that
 $\lambda^m\left(\Rpz^m\setminus P\right)
 =\lambda^m\left(\bigcup_{i=1}^n( \Rpz^m \setminus P_i)\right)
 \leq \sum_{i=1}^n\lambda^m\left( \Rpz^m \setminus P_i\right)=\sum_{i=1}^n 0=0$, i.e., the $m$-dimensional Lebesgue measure of $\Rpz^m\setminus P$ vanishes. Putting everything together, we get that $p\mapsto W\left(p,\left(\MVNNi{}\right)_{i=1}^n\right)$ is a.e. differentiable (concretely differentiable for all $p \in P$).

 Next, we prove the uniqueness of $\hat{x}^*_i(p)\in\hat{\X}^*_i(p)$ for every $p\in P_i\supseteq P$. We know that $p\mapsto U\left(p,\MVNNi{}\right)$ is differentiable at $p\in P_i$, and thus we know that the sub-gradient is unique, i.e., $\left|\subgrad_p U\left(p,\MVNNi{}\right)\right|=\left|\left\{\nabla_p U\left(p,\MVNNi{}\right)\right\}\right|=1$. We have already computed this set of sub-gradients $\subgrad_p U\left(p,\MVNNi{}\right)
    = \conv{-x : x\in\hat{\X}^*_i(p)}=\convwo -\hat{\X}^*(p)\supseteq-\hat{\X}^*(p)$ in \Cref{eq:subU} in the proof of \Cref{le:WsubGradient}. This set can only be a singleton, if $\hat{\X}^*(p)$ is a singleton. Finally, $\left|\hat{\X}^*(p)\right|=1$ immediately implies \Cref{itm:uniqueness}, i.e., the uniqueness of $\hat{x}^*_i(p)\in\hat{\X}^*_i(p)$.

    Using that $p\mapsto U\left(p,\MVNNi{}\right)$ is differentiable for every $p\in P$ together with \Cref{itm:uniqueness} and \Cref{le:WsubGradient}, we finally obtain \Cref{itm:app:gradW} for every $p\in P$.

    For \Cref{itm:app:locConstW} it is crucial that $\X$ is finite. For every $p\in P$, we know from \Cref{itm:uniqueness} that there is a unique maximizer $\hat{x}^*_i(p)\in\hat{\X}^*_i(p)=\argmax\limits_{x\in \X}\phi_i(p,x)$ with
    $\phi_i(p,x):=\MVNNi{}(x)-\iprod{p}{x}$. Since $\X$ is finite, we can define \begin{align}\label{eq:app:def:eps}
    \epsilon_i(p):=%
    \phi_i(p,\hat{x}^*_i(p))-
    \max\limits_{x\in \X\setminus\{\hat{x}^*_i(p)\}}\phi_i(p,x),   
    \end{align}
    which
    has to be strictly larger than 0 for every $p\in P$ because of the uniqueness of the maximizer.
    \Cref{eq:app:def:eps} implies that $\hat{x}^*_i(p)$ outperforms every other $x\in\X$ by at least a margin of $\epsilon_i(p)$. Since $\phi_i$ is continuous, $\hat{x}^*_i(p)$ cannot be \enquote{overtaken} by any other $x\in\X$ within a small neighbourhood of $p$ as we will calculate explicitly in the following using the Lipschiz constant $\twonorm[c]$ of $\phi_i$ that we derived in the proof of \Cref{le:Wcontinous}:
    For any $x\in\X\setminus \{\hat{x}^*_i(p)\}, p\in P, \tilde{p}\in \Rpz^m$, we have
\begin{subequations}
    \begin{align}       \phi_i(\tilde{p},x)
    &\leq
    \phi_i(p,x) +\twonorm[c]\twonorm[\tilde{p}-p]\\
    &\leq
    \max\limits_{x\in \X\setminus\{\hat{x}^*_i(p)\}}\phi_i(p,x)+\twonorm[c]\twonorm[\tilde{p}-p]\\
    &\leq
    \phi_i(p,\hat{x}^*_i(p)) -\epsilon_i(p) +\twonorm[c]\twonorm[\tilde{p}-p]\\
     &\leq
     \phi_i(\tilde{p},\hat{x}^*_i(p)) -\epsilon_i(p) +2\twonorm[c]\twonorm[\tilde{p}-p].
    \end{align}
\end{subequations}
From this inequality we obtain, that within an open ball with radius $\min_{i\in N}\frac{\epsilon_i(p)}{2\twonorm[c]}>0$ around $p\in P$, the optimizers $\hat{x}^*_i(p)$ stay constant for every $i\in N$. Therefore, the differential $\partial_p \hat{x}^*_i(p)$ is zero for all $p\in P$, which concludes the proof of \Cref{itm:app:locConstW}.
\end{proof}
Putting everything together, we can finally prove \Cref{thm:GD_on_W}.
\begin{proof}[Proof of \Cref{thm:GD_on_W}]
    Combining \Cref{le:Wcontinous,le:Wconvex,le:WsubGradient,le:WaeDiff} proves \Cref{thm:GD_on_W} (and even slightly stronger statements, e.g., \Cref{le:WsubGradient} fully specifies the set of \emph{all} sub-gradients).
\end{proof}

\Cref{thm:GD_on_W} and most of the statements from \Cref{le:Wcontinous,le:Wconvex,le:WsubGradient,le:WaeDiff} can be proven under even less assumptions as we will discuss in the following \Cref{rem:GeneralValueFunctions,rem:ContinuousInputSpace} which generalize the theory to further settings.
\begin{remark}[General Value Functions]\label{rem:GeneralValueFunctions}
     \Cref{thm:GD_on_W} and \Cref{le:Wcontinous,le:Wconvex,le:WsubGradient,le:WaeDiff} are also true for the map $p\mapsto W(w,(g_i)_{i=1}^{n})$, with any (possibly non-monotonic) value function $g_i:\X\to\R$ instead of $g_i=\MVNNi{}$, since we never used any specific properties of MVNNs $\MVNNi{}$ in the proofs of these statements. In particular, this includes bidders' true value functions $g_i=v_i,\, v_i:\X\to\R$.
\end{remark}
\begin{remark}[Continuous Input Space $\tilde{\X}$]\label{rem:ContinuousInputSpace}
    \Cref{thm:GD_on_W} and all of the statements from \Cref{le:Wcontinous,le:Wconvex,le:WsubGradient,le:WaeDiff} except \Cref{itm:app:locConstW} from \Cref{le:WaeDiff} are also true if one replaces the finite set $\X$ by any (possibly non-discrete) compact set $\tilde{\X}$ (if one extends the definition of $\MVNNi{}$ in the natural way from $\X$ to $\tilde{\X}\subseteq\R^m$). When combining \Cref{rem:GeneralValueFunctions,rem:ContinuousInputSpace} one has to assume that the $g_i:\tilde{\X}\to\R$ are continuous for our proof.\footnote{Note that if $\tilde{\X}$ is finite, every function $g_i:\tilde{\X}\to\R$ is continuous by definition.}
    \Cref{itm:app:locConstW} from \Cref{le:WaeDiff} can be violated for compact $\tilde{\X}\subseteq\R^m$ if $\left|\tilde{\X}\right|=\infty$.\footnote{Note that, while \Cref{itm:app:locConstW} from \Cref{le:WaeDiff} was used for the intuitive sketch of the proof of \Cref{thm:GD_on_W} in the main paper, \Cref{itm:app:locConstW} from \Cref{le:WaeDiff} is not necessary at all for the mathematical rigorous proof of \Cref{thm:GD_on_W} given in \Cref{subsec:app:GD_on_W}.}
\end{remark}

\begin{remark}[Piece-wise Linear]
    In our case of finite $\X$, one can intuitively see with similar arguments as in the proof of \Cref{le:Wconvex}, that $p\mapsto W\left(p,\left(\MVNNi{}\right)_{i=1}^n\right)$ is piece-wise linear as a sum over a linear function and $n$ functions which are the maxima over finitely many linear functions.
\end{remark}

\section{Experiment Details}\label{sec:Experiment Details}
In this section, we present all details of our experiments from \Cref{sec:experiments}.

\subsection{SATS Domains}\label{subsec:appendix_SATS_domains}
In this section, we provide a more detailed overview of the four SATS domains, which we use to experimentally evaluate ML-CCA:
\begin{itemize}[leftmargin=*,topsep=0pt,partopsep=0pt, parsep=0pt]
\item \textbf{Global Synergy Value Model (GSVM)} \cite{goeree2010hierarchical} has 18 items with capacities $c_j=1$ for all 
$j\in \{1,\ldots,18\}$, $6$ \emph{regional} and $1$ \emph{national bidder}. In GSVM the value of a package increases by a certain percentage with every additional item of interest. Thus, the value of a bundle only depends on the total number of items contained in a bundle which makes it one of the
simplest models in SATS. In fact, bidders’ valuations exhibit at most two-way(i.e., pairwise) interactions between items.
\item \textbf{Local Synergy Value Model (LSVM)} \cite{scheffel2012impact} has $18$ items with capacities $c_j=1$ for all 
 $j\in \{1,\ldots,18\}$, $5$ \emph{regional} and $1$ \emph{national bidder}. Complementarities arise from spatial proximity of items.
\item \textbf{Single-Region Value Model (SRVM)} \cite{weiss2017sats} has $3$ items with capacities $c_1=6,c_2=14,c_3=9$ and $7$ bidders (categorized as  \emph{local}, \emph{high frequency}, \emph{regional}, or \emph{national}) and models UK 4G spectrum auctions.
\item \textbf{Multi-Region Value Model (MRVM)} \cite{weiss2017sats} has $42$ items with capacities $c_j\in \{2,3\}$ for all 
 $j\in \{1,\ldots,42\}$ and $10$ bidders (\emph{local}, \emph{regional}, or \emph{national}) and models large Canadian 4G spectrum auctions.
\end{itemize}
In the efficiency experiments in this paper, we instantiated for each SATS domain the $100$ synthetic CA instances with the seeds $\{101,\ldots,200\}$. We used \href{https://github.com/spectrumauctions/sats/releases/}{SATS version 0.8.1}. 

\subsection{Compute Infrastructure}\label{subsec:app:compute_infrastructure}
All experiments were conducted on a compute cluster running Debian GNU/Linux 10 with Intel Xeon E5-2650 v4 2.20GHz processors with 24 cores and 128GB RAM and Intel E5 v2 2.80GHz processors with 20 cores and 128GB RAM and Python 3.8.10.

\subsection{Hyperparameter Optimization}\label{subsec:appendix_hpo}
In this section, we provide details on our exact HPO methodology and the ranges that we used. \

We separately optimized the HPs of the mMVNNs for each bidder type of each domain, using a different set of SATS seeds than for all other experiments in the paper. Specifically, for each bidder type, we first trained an mMVNN using as initial data points the demand responses of an agent of that type during $50$ consecutive CCA clock rounds, and then measured the generalization performance of the resulting network on a validation set that was created by drawing $500$ price vectors where the price of each item was drawn uniformly at random from the range of zero to three times the average maximum value of an agent of that type for a single item (which was determined using separate seeds, see validation set 2 in \Cref{fig:pred_vs_true_national_MRVM}). The number of seeds used to evaluate each model was equal for all models and set to 10. 
Finally, for each bidder type we selected the set of HPs that performed the best on this validation set with respect to the coefficient of determination ($R^2$). The full range of HPs tested for all agent types and all domains is shown in \Cref{table_HPO_ranges}, while the winning configurations are shown in \Cref{tab:appenedix_hpo_winners}.

Additionally, we determined the set of HPs with the best generalization performance on validation set 2 using as evaluation metric a shift-invariant variation of $R^2$, defined as: 
\begin{gather}
    R^2_{c} = 1 - \frac{\sum_r ( (v_i(x^r) - \bar{v_i}) - (\mathcal{M}_i(x^r) - \bar{\mathcal{M}}) )^2}
    {\sum_r (v_i(x^r) - \bar{v_i})^2},
\end{gather}
where $v_i(x^r)$ is the true value of the bidder for the $r$-th bundle, $\mathcal{M}_i(x^r)$ is the neural network's predicted value for that bundle, and $\bar{v_i}$ and $\bar{\mathcal{M}_i}$ are their empirical means, respectively.  
The reason that we opted for this shift-invariant version of $R^2$ is that, as explained in \Cref{subsec:mvnns}, learning the true value functions of the agents up to a constant shift suffices for our query generation procedure as described in \Cref{sec:ML-powered Demand Query Generation}.
Surprisingly, in all domains our mechanism performed slightly worse with those HPs, with the maximum efficiency delta between the two configurations being $1.2$\% in LSVM. However, in all domains results were qualitatively identical.
The winning configurations for both metrics are shown in \Cref{tab:appenedix_hpo_winners,tab:appenedix_hpo_winners_r2c}. In all domains we chose the configurations from \Cref{tab:appenedix_hpo_winners} for our efficiency experiments.

\begin{table}
\centering
\resizebox{0.4\textwidth}{!}{
\begin{tabular}{ll}
\toprule
 \multicolumn{1}{l}{\textbf{Hyperparameter}} &                      \multicolumn{1}{l}{\textbf{HPO-Range}} \\
\midrule
Non-linear Hidden Layers            & {[}1,2,3{]}                         \\
         Neurons per Hidden Layer             & {[}8, 10, 20, 30{]} \\
         Learning Rate  & (1e-4, 1e-2)                        \\
         Epochs\footnotemark            & {[}30, 50, 70, 100{]}                            \\
         L2-Regularization & (1e-8, 1e-2)                       \\
         Linear Skip Connections\footnotemark   & {[}True, False{]}           \\
\bottomrule
\end{tabular}
}
\vskip 0.1cm
    \caption{HPO ranges for all domains.}
    \label{table_HPO_ranges}
\end{table}
\setcounter{footnote}{\value{footnote}-1}
\footnotetext{For GSVM and LSVM, the number of epochs was fixed to 30}
\setcounter{footnote}{\value{footnote}+1}
\footnotetext{For the definition of (m)MVNNs with a linear skip connection, please see \citet[Definition~F.1]{weissteiner2023bayesian}}

\begin{table*}[ht]
    \renewcommand\arraystretch{1.2}
    \setlength\tabcolsep{2pt}
	\robustify\bfseries
	\centering
	\begin{sc}
	\resizebox{1\textwidth}{!}{
	\small
\begin{tabular}{llllllll}
\toprule
Domain & Bidder Type & \# Hidden Layers & \# Hidden Units & Lin. Skip & Learning Rate & L2 Regularization & Epochs \\
\midrule
GSVM & Regional & 2 & 20 & False & 0.005 & 0.00001 & 30 \\
     & National & 3 & 30 & True & 0.001 & 0.000001 & 30 \\
\midrule
LSVM & Regional & 1 & 30 & True & 0.01 & 0.000001 & 30 \\
     & National & 3 & 20 & False & 0.005 & 0.0001 & 30 \\
\midrule
SRVM & Local & 2 & 20 & True & 0.01 & 0.0001 & 30 \\
     & Regional & 1 & 20 & True & 0.01 & 0.0001 & 50 \\
     & National & 1 & 30 & False & 0.005 & 0.00001 & 70 \\
     & High Frequency & 2 & 20 & False & 0.01 & 0.00001 & 30 \\
\midrule
MRVM & Local & 3 & 20 & True & 0.005 & 0.000001 & 100 \\
     & Regional & 2 & 20 & True & 0.001 & 0.001 & 100 \\
     & National & 3 & 20 & True & 0.001 & 0.0001 & 50 \\
\bottomrule
\end{tabular}
}
    \end{sc}
    \vskip -0.1 in
    \caption{Winning HPO configurations for $R^2$}
\label{tab:appenedix_hpo_winners}
\end{table*}

\begin{table*}[ht]
    \renewcommand\arraystretch{1.2}
    \setlength\tabcolsep{2pt}
	\robustify\bfseries
	\centering
	\begin{sc}
	\resizebox{1\textwidth}{!}{
	\small
\begin{tabular}{llllllll}
\toprule
Domain & Bidder Type & \# Hidden Layers & \# Hidden Units & Lin. Skip & Learning Rate & L2 Regularization & Epochs \\
\midrule
GSVM & National & 3 & 10 & True & 0.001 & 0.001 & 30 \\
     & Regional & 2 & 10 & True & 0.01 & 0.001 & 30 \\
\midrule
LSVM & National & 1 & 10 & True & 0.005 & 0.01 & 30 \\
     & Regional & 3 & 20 & True & 0.005 & 0.0001 & 30 \\
\midrule
SRVM & Local & 2 & 30 & True & 0.01 & 0.0001 & 30 \\
     & Regional & 2 & 20 & True & 0.01 & 0.000001 & 50 \\
     & National & 1 & 20 & True & 0.01 & 0.0001 & 50 \\
     & High Frequency & 2 & 30 & True & 0.01 & 0.00001 & 70 \\
\midrule
MRVM & National & 1 & 20 & True & 0.001 & 0.000001 & 30 \\
     & Regional & 3 & 20 & True & 0.001 & 0.000001 & 50 \\
     & Local    & 2 & 20 & False & 0.001 & 0.000001 & 30 \\
\bottomrule
\end{tabular}
}
    \end{sc}
    \vskip -0.1 in
    \caption{Winning HPO configurations for $R^2_{c}$}
\label{tab:appenedix_hpo_winners_r2c}
    \vskip -0.3cm
\end{table*}

\subsection{Details on mMVNN Training}\label{subsec:Details MVNN-Training}
\begin{remark}[Other ML-models]
    Note that our training method \textsc{TrainOnDQs} (\Cref{alg:train_on_dqs}) also works for any other ML method that can be trained via GD \emph{and} for which the inner optimization problem $\hat{x}^*_i(p^r) \in \argmax_{x\in \X}\left\{\mathcal{M}_i^{\theta_t}(x) - \iprod{p^r}{x}\right\}$ (see \Cref{alg_dq:line5} of \Cref{alg:train_on_dqs}) can be solved efficiently. This is the case for (m)MVNNs, where \Cref{alg_dq:line5} can be solved as a MILP analogously to \cite{weissteiner2022monotone}. Another example would be classical ReLU-neural networks (NNs)\footnote{Note that in principal for every NN with piece-wise linear activation function (e.g., ReLU, bReLU or Leaky ReLU) a MILP-formulation is possible. However for other activation functions such as the sigmoid activation function an exact MILP-formulation is not possible.}, where such a MILP formulation exists too \citep{weissteiner2020deep}, which are suitable for domains without the free disposal property.
\end{remark}

\begin{remark}[Initialization]
   We use the initialization scheme introduced by \citet{weissteiner2023bayesian}, which offers advantages over the original initialization scheme used by \citet{weissteiner2022monotone} as explained in \citet[Section~3.2 and Appendix~E]{weissteiner2023bayesian}. 
\end{remark}
In the conducted experiments, Python 3.8.10 and PyTorch 2.0.0, were employed as the primary programming language and framework for implementing the mMVNNs. The Adam optimizer was chosen as the optimization algorithm for the training process. To further enhance the training procedure, the cosine annealing scheduler was utilized, dynamically adjusting the learning rate over epochs to facilitate convergence and prevent premature stagnation.

\subsection{Details MILP Parameters}\label{subsec:Details MILP Parameters}
There are three distinct points in which MILPs are solved in our ML-powered combinatorial clock auction (ML-CCA):
\begin{enumerate}
\item in the training of mMVNNs according to \Cref{alg_dq:line5} in \Cref{alg:train_on_dqs},
\item for $W$ minimization according to \Cref{thm:GD_on_W} in order to predict the demand of each agent at a given price vector (see \Cref{line:alg:MILPxhat} in \Cref{alg:Constrained_W_minimization}), and
\item finally to solve the \emph{winner determination problem (WDP)} and determine the resulting allocation based on the elicited bids (see \Cref{alg_line:final_allocation} in \Cref{ML-CCA}).
\end{enumerate}
The first two MILPs are of the same type: given as input an mMVNN $\mathcal{M}_i^{\theta}$ that approximates a bidder's value function and linear item prices $p\in \Rpz^m$, find the utility maximizing bundle for that bidder, i.e., solve 
$\hat{x}^*_i(p) \in \argmax_{x\in\X}\left\{ \mathcal{M}_i^\theta(x) - \iprod{p}{x}\right\}$. 
The third MILP is of a different type: given as inputs a set of bundle-value tuples from each agent, find a feasible allocation that maximizes reported social welfare. This WDP is described in more detail in \Cref{sec:Preliminaries} and \Cref{WDPFiniteReports}. In each clock round, only two WDP MILPs need to be solved, one involving just the clock bids of the agents and one also including the bids that would result from the clock bids raised heuristic. For each agent, on average two thousand MILPs of type 1 need to be solved per clock round. It should be noted that they are very fast, as a single MILP of this type can be solved in under 200 milliseconds in our server architecture as described in \Cref{subsec:app:compute_infrastructure}. 
The MILPs of the first type used a formulation based on the MILP-formulation for MVNNs in \cite[Section~3.2 and Appednix~F]{weissteiner2023bayesian} which is an improved version of the MILP-formulation for MVNNs in \cite[Theorem~2 in Section~3.1 and Appendix~C.5]{weissteiner2022monotone}.
The MILPs of the first type were solved using the Gurobi 10 solver, and the WDP MILPs were solved using CPLEX 20.01.0 . For all MILPs, we set the  feasibility tolerance to $1e-9$, the integrality tolerance to $1e-8$ and the relative MIP optimality gap to $1e-06$. All other parameters were set to their respective default values. 

\subsection{Details on \textsc{NextPrice} Procedure} \label{subsec:app_constrained_W_minimization}
In this section, we describe the details of the \textsc{NextPrice} procedure from \Cref{alg_line:next_pv} in \Cref{ML-CCA}. Given trained mMVNNs, \textsc{NextPrice} generates new demand queries by minimizing $W$ under constraint~\eqref{eq:app:corollary_constraint} via GD which is based on \Cref{thm:app:connection_clearing_prices_efficiency_constrainedVersion,thm:GD_on_W}.

\subsubsection{Detailed motivation of constraint~\eqref{eq:app:corollary_constraint}}

In the case that linear clearing prices (LCPs) exist, both minimizing $W$ with or without constraint~\eqref{eq:app:corollary_constraint} would lead to clearing prices and thus to efficient allocations, if the mMVNNs approximate the value functions $v$ well enough, as shown in  \Cref{thm:app:connection_clearing_prices_efficiency_constrainedVersion} and \Cref{thm:connection_clearing_prices_efficiency}, respectively. In this case, constraint~\eqref{eq:app:corollary_constraint} is neither beneficial nor harmful, because both versions are well motivated by theory (see \Cref{thm:connection_clearing_prices_efficiency} and \Cref{thm:app:connection_clearing_prices_efficiency_constrainedVersion}): 
In this case the set of solutions to both problems \eqref{eqs:ConstraintWmin} and \eqref{eq:clearing_objective} are both exactly equal to the set of all possible LCPs and thus result in efficient allocations with no over-demand and no under-demand (see the proof of \Cref{thm:connection_clearing_prices_efficiency} and \Cref{thm:app:connection_clearing_prices_efficiency_constrainedVersion} in \Cref{subsec:app:proof_of_corollarly1connection_clearing_prices_efficiency}). 
Thus in the case that LCPs do exist, both problems \eqref{eqs:ConstraintWmin} and \eqref{eq:clearing_objective} are exactly equivalent and well supported by theory. Indeed our experiments resulted in similarly good results for both versions of the method in domains where LCPs often exist (see GSVM, LSVM, SRVM in \Cref{tab:details_GSVM_LSVM_efficiency_constrained_vs_unconstrained,tab:details_SRVM_MRVM_efficiency_constrained_vs_unconstrained}).\footnote{In \Cref{tab:details_GSVM_LSVM_efficiency_constrained_vs_unconstrained,tab:details_SRVM_MRVM_efficiency_constrained_vs_unconstrained} one can see a slight tendency that also for GSVM, LSVM, SRVM adding constraint~\eqref{eq:app:corollary_constraint} is rather beneficial on average. The reason for this might be, that we do not always find LCPs even in these domains (note that LCPs do not always exist in these domains).} 

However, when no LCPs exist, for every price vector $p$ we have over-demand or under-demand for some goods and we need to make a choice on how to select a price vector $p$ for the next demand query based on the over- and under-demand.

Minimizing the classical $W$ introduced in \Cref{thm:connection_clearing_prices_efficiency} via a classical GD-update rule using the gradient derived in \Cref{thm:GD_on_W} punishes over- and under-demand symmetrically. This can be directly seen 
from the classical GD-update rule
\begin{align*}
p^{\textnormal{new}}_j \stackrel{a.e.}{=} p_j-\gamma (c_j-\sum_{i\in N}(\hat{x}^*_{i}(p))_j),\, \forall j \in M,
\end{align*}
since the magnitude of change in the price vector would be the same for the same amount of over- or under-demand.

The following example also illustrates this symmetry.
\begin{example}
Suppose there is a single item, and two bidders with a value of $5$ and $5 - \epsilon$ for that item. 
Any price $p \in  [5 - \epsilon, 5]$ is a clearing price,
where the indirect utility of the bidder with the higher value is $5 - p$ and of the bidder with the lower value is $0$, while the seller's indirect revenue is $p$ for a $W$ value of $5$. 
For a price that is $x \in \mathbb{R}_{> 0}$ higher than the largest clearing price, i.e., $5 + x$, no agent buys the item and they have an indirect utility of $0$, while the seller's indirect revenue is $5 + x$, for a $W$ of value $5 + x$. 
For a price that is $x$ lower than the smallest clearing price, i.e., $5 - \epsilon - x$, both agents want to buy the item and they have indirect utilities of $\epsilon + x$ and $x$, while the seller's indirect revenue is $5 - \epsilon - x$, for a total $W$ value of $5 + x$.
\end{example}

However, even though the $W$ objective of \Cref{thm:connection_clearing_prices_efficiency} punishes over- and under-demand equally, our preference between them is highly asymmetric; we strongly prefer under-demand over over-demand in practice, 
since the demand responses of the agents at a price vector with no over-demand constitute a feasible allocation, 
while the demand responses of the agents at a price vector with over-demand do not.
This is important because in case that the market does not clear within the clock round limit, our ML-CCA, just like the CCA, will have to 
combine the clock bids of the agents to produce a \emph{feasible} allocation with the highest inferred social welfare according to \Cref{WDPFiniteReports}.
If the demand responses elicited from the agents constituted feasible solutions, it makes it more likely that they can be effectively combined together in the WDP of \Cref{WDPFiniteReports} to produce highly efficient allocations. This is why in domains where no LCPs exist, adding constraint~\eqref{eq:app:corollary_constraint} leads to significantly increased efficiency (see MRVM in \Cref{tab:details_SRVM_MRVM_efficiency_constrained_vs_unconstrained}).
For more intuition on constraint~\eqref{eq:app:corollary_constraint} see the following example.

\begin{example}\label{example:app:constrained_vs_unconstrained_W_minimization}
    Suppose there are $m=2$ items with capacities $c_1=c_2=10$ and $n=2$ bidders with value functions
    \begin{align*}
       v_1(x)&=\max\left\{10\1{x\ge(7,3)},10\1{x\ge(3,7)},9\1{x\ge(4,4)}\right\},
       \\
       v_2(x)&=\max\left\{10\1{x\ge(8,2)},10\1{x\ge(2,8)},9\1{x\ge(4,4)}\right\}.
    \end{align*}
    In this setting, no LCP exists. This can be seen as follows. First note that we can obviously exclude every price vector $p\in \Rpz^2$ with $p_1=0$ or $p_2=0$ from being a LCP. Furthermore, for any price vector $p\in \R_{>0}^2$ bidder 1's utility maximizing bundle $x^*_1(p)\in \X^*_1(p)=\{(7,3),(3,7),(4,4)\}$ and bidder 2's utility maximizing bundle $x^*_2(p)\in \X^*_2(p)=\{(8,2),(2,8),(4,4)\}$. However, from this we see that $x^*_1(p)$ and $x^*_2(p)$ cannot be combined without over- or under-demand, thus violating \Cref{itm:clearing_seller} in \Cref{def:linear_clearing_prices}.
    
    The clearing potential objective $W$ is minimized for every price vector $p=(p_1,p_2)$ that satisfies $p_1=p_2\in[0,0.5]$.\footnote{The objective function $W$ can be formulated as $W(p)=$
    \begin{align*} &\phantom{+}\max\left\{0,10-\iprod{(7,3)}{p},10-\iprod{(3,7)}{p},9-\iprod{(4,4)}{p})\right\}
    \\
    &{+}\max\left\{0,10-\iprod{(8,2)}{p},10-\iprod{(2,8)}{p},9-\iprod{(4,4)}{p})\right\}
    \\
    &{+}(10x+10y) \quad \forall p \in \R_{\geq0}^2.
    \end{align*}
    The set of minimizers can be seen \href{https://www.wolframalpha.com/input?i=+plot+max\%280\%2C-7*x-3*y\%2B10\%2C-3*x-7*y\%2B10\%2C-4*x-4*y\%2B9\%29\%2Bmax\%280\%2C-8*x-2*y\%2B10\%2C-2x-8y\%2B10\%2C-4*x-4*y\%2B9\%29+\%2B\%2810*x\%2B10*y\%29+from+0+to+1.5}{by plotting $W$}. 
    } For $p_1=p_2\in(0,0.5)$, we have $\X^*_1(p)=\{(7,3),(3,7)\}$ and $\X^*_2(p)=\{(8,2),(2,8)\}$ and thus the there is always positive over-demand for at least one of the items, which violates constraint~\eqref{eq:app:corollary_constraint} (this is also the case for $p=(0,0)$).
    For $p=(0.5,0.5)$, we have $\X^*_1(p)=\{(7,3),(3,7),(4,4)\}$ and $\X^*_2(p)=\{(8,2),(2,8),(4,4)\}$. Thus, $p=(0.5,0.5)$ fulfills constraint~\eqref{eq:app:corollary_constraint}, since  $\left((4,4),(4,4)\right)\in\X^*_1(p)\times\X^*_2(p)$ is feasible (see \Cref{foot:PrecisFormulationConstraint}). Thus, $p=(0.5,0.5)$ is the unique solution of the constrained problem~\eqref{eqs:ConstraintWmin}. In this case, $\left((4,4),(4,4)\right)$ would be the efficient allocation with a SCW of $18$. If we had only asked demand queries for prices $p_1=p_2\in[0,0.5)$, i.e., prices that solve the unconstrained minimization problem~\eqref{eq:clearing_objective}, but not the constrained minimization problem~\eqref{eqs:ConstraintWmin}, the WDP would end up with a SCW of only $10$ by allocating some bundle of value $10$ to one of the bidders and nothing to the other bidder, since the WDP is constrained by feasibility due to the limited capacity $c=(10,10)$.
\end{example}

\subsubsection{Details on \textsc{NextPrice} (\Cref{alg:Constrained_W_minimization})}

In \Cref{alg:Constrained_W_minimization}, we present the details of our \textsc{NextPrice} procedure, a modification of the classical GD in \Cref{thm:GD_on_W} that systematically favours under-demand over over-demand (see \Cref{sec:ML-powered Demand Query Generation}). 
Compared to classical gradient descent on $W$ (based on \Cref{thm:GD_on_W}), there are three noteworthy modifications in the \textsc{NextPrice} procedure.
\begin{enumerate}[align=left, leftmargin=*,topsep=2pt]
\item\label{itm:Nextprice:Asymmetry} First, as outlined at the end of \Cref{sec:ML-powered Demand Query Generation}, to incentivize GD on $W(p, (\mathcal{M}_i^\theta)_{i = 1}^{n})$ towards price vectors 
with no positive over-demand, the GD steps punish over- and under-demand \emph{asymmetrically}. Specifically, at each iteration step, in case of predicted over-demand for some good, the gradient step for that good is $(1+\mu)$-times larger than what it would have been in case of under-demand (\Cref{alg:constrianed_w:line_over_demand_update}). 
Finally, $\mu$ is not a constant, but it adaptively increases as long as \Cref{alg:Constrained_W_minimization} has not found a price vector with no predicted over-demand (\Cref{alg:constrianed_w:line_mu_update}).
\item\label{itm:Nextprice:AllEpochsBest} Second, as also outlined at the end of \Cref{sec:ML-powered Demand Query Generation}, once the gradient steps have terminated, \Cref{alg:Constrained_W_minimization} returns the price vector $p$ that led to the lowest value of $W$ among \emph{all} price vectors examined that led to no positive predicted over-demand (i.e., satisfying constraint~\eqref{eq:app:corollary_constraint}) (\Crefrange{alg:constrianed_w:line_FoundBetterFeasiblePrices}{alg:constrianed_w:line_UpdateBetterFeasiblePrices}).\footnote{\label{foot:NoFeasibleFound}If every gradient step resulted in positive predicted over-demand, we would pick just the one with minimal $W$ ignoring the constraint~\eqref{eq:app:corollary_constraint} for this demand query (\Crefrange{alg:constrianed_w:line_NoFeasibleAllocationFound}{alg:constrianed_w:line_UseUnconstraintSolutionInstead}), but this case never occurred in our experiments (see \Cref{fig:app_predicted_feasible_allocation}).}
\item\label{itm:Nextprice:scaledLR} Finally, the learning rate for each good in \Cref{alg:Constrained_W_minimization} is scaled to be proportional to that good's current price $p_j$ (\Cref{alg_constr_W_gd_line:scaling}). 
In theory, we do not have to do this, as \Cref{thm:GD_on_W} guarantees that even a uniform learning rate for all goods cannot get stuck in local minima of $W$.  
Using a uniform learning rate for all goods has two disadvantages. 
First, we would have to tune that learning rate parameter separately for each domain, since the goods' values are in different scales in each domain. 
Additionally, in the SRVM and MRVM domains the prices of different goods can vary by orders of magnitude. If we were to select a uniform learning rate for all goods, we would have to select one that would be suitable for the lowest-valued items 
(otherwise the GD steps would overshoot a lot for the prices of lower-valued goods, i.e. GD, would jump back and forth between large amounts of over- and under- demand for lower-valued goods), 
which would increase significantly the number of steps required until the learning rate becomes sufficiently small so that we do not have such extreme jumps.
Scaling the learning rate for each good proportionally to its current price alleviates both of these potential issues.

\end{enumerate}

\begin{algorithm}[t!]
    \DontPrintSemicolon
    \SetKwInOut{Input}{Input}
    \SetKwInOut{Output}{Output}
    \Input{Trained MVNNs $\left(\MVNNi{}\right)_{i=1}^n$, 
    last $F_{\text{init}}$ round prices $p^{\Qinit}$,
    Epochs $T\in \N$, Learning Rate Base $\lambda>0$, Learning Rate Decay $\eta \in [0,1]$, 
    Feasibility multiplier $\mu$, 
    Feasibility multiplier increment $ \nu$}
    \For{$j = 1 \text{ to } m$}{
    $p^0_j \gets  \sim U[0.75 \cdot p^{\Qinit}_j, 1.25 \cdot p^{\Qinit}_j ] $ \label{alg_constrianed_w:line_Initial_p}\;
    } 
    $W_{best} \gets \infty$ \;  
    $W^f_{best} \gets \infty$ \;  
    $\text{feasible} \gets \text{False}$ \; 
    \For{$t = 0 \text{ to } T - 1$}{
        $W \gets \iprod{p^t}{c}$  \Comment*[r]{\color{CommentColor} Seller's indirect revenue}
        \For(){$i = 1 \text{ to } n$}{
         Solve $\hat{x}^*_i(p^t) \in \argmax_{x\in \X}\mathcal{M}_i^{\theta}(x) - \iprod{p^t}{x}$\label{line:alg:MILPxhat}\;
         $U_i \gets \mathcal{M}_i^{\theta}(\hat{x}^*_i(p^t)) - \iprod{p^t}{\hat{x}^*_i(p^t)}$ \Comment*[f]{\color{CommentColor}Bidder $i$'s indirect Utility} \; 
         $W \gets W + U_i$ \; 
        }
        $d \gets \sum_{i\in N}\hat{x}^*_i(p^t)$ \Comment*[f]{\color{CommentColor}Total Predicted Demand} \;
        \If(\Comment*[f]{\color{CommentColor}Found better prices wrt. $W$}){$W < W_{best}$}
          {
          $W_{best} \gets W$ \; 
          $p_{best} \gets p^t$ \; 
        }
        \If(\Comment*[f]{\color{CommentColor}Found better feasible prices}\label{alg:constrianed_w:line_FoundBetterFeasiblePrices}){$W < W^f_{best}$  and $d\leq c$}
          {
          $W^f_{best} \gets W$ \; 
          $p^f_{best} \gets p^t$ \label{alg:constrianed_w:line_UpdateBetterFeasiblePrices}\; 
          $\text{feasible} \gets \text{True}$ \; 
        }
        \If(\Comment*[f]{\color{CommentColor}Predicted Market Clearing Prices}){$d = c$}
          {
           \textbf{break}
        }
        \For{$j = 1 \text{ to } m$}{
        $\gamma_j \gets \lambda \cdot  p^{t}_j$ \Comment*[r]{\color{CommentColor} Scale l.r. for each good} \label{alg_constr_W_gd_line:scaling}
     $p^{t+1}_j \gets p^{t}_j -  \gamma_j (c_j - d_j) $  \Comment*[r]{\color{CommentColor} \Cref{thm:GD_on_W}, \cref{eq:simplified_update_rule_GD}} 
     \If(\Comment*[f]{\color{CommentColor} Over-demand for good $j$}){$d_j > c_j$}
          {
           $p^{t+1}_j \gets p^{t}_j - \mu \gamma_j (c_j - d_j) $ \Comment*[f]{\color{CommentColor}\Cref{eqs:finalUpdateRule}} \label{alg:constrianed_w:line_over_demand_update}
        }
    $\lambda \gets \lambda \cdot (1 - \eta) $ \Comment*[f]{\color{CommentColor}Learning rate decay}
    }
    \If(\Comment*[f]{\color{CommentColor} No feasible allocation yet}){\textbf{\text{not}} \text{feasible}}
          {
           $\mu \gets \mu \cdot \nu $ \label{alg:constrianed_w:line_mu_update}
        }
    }
    \If(\Comment*[f]{\color{CommentColor}No feasible allocation found}\label{alg:constrianed_w:line_NoFeasibleAllocationFound}){\textbf{not} feasible}
          {
          $p^f_{best} \gets p_{best}$ \Comment*[f]{\color{CommentColor}\Cref{foot:NoFeasibleFound}}\label{alg:constrianed_w:line_UseUnconstraintSolutionInstead}\; 
        }
        \If(\Comment*[f]{\color{CommentColor}Do not enforce feasibility (see \Cref{rem:unconstrainedNextPrice})}){$\mu = \nu = 0$}
          {
          \Return{Prices $p_{best}$ minimizing $W(\cdot, (\mathcal{M}_i^\theta)_{i = 1}^{n})$ }  
        }
    \Return{Feasible prices $p^f_{best}$ minimizing $W(\cdot, (\mathcal{M}_i^\theta)_{i = 1}^{n})$ }
    \caption{\textsc{NextPrice}}
    \label{alg:Constrained_W_minimization}
\end{algorithm}

\begin{remark}\label{rem:unconstrainedNextPrice}
Note that by setting $\mu = \nu = 0$, \textsc{NextPrice} (\Cref{alg:Constrained_W_minimization}) performs symmetrical GD on $W$ without constraint~\eqref{eq:app:corollary_constraint} as suggested by \Cref{thm:GD_on_W} (see \Cref{subsec:app:Results_unconstrained_W} for an empirical evaluation of minimizing $W$ with $\mu = \nu = 0$, i.e., without constraint \eqref{eq:app:corollary_constraint}).
\end{remark}

\begin{remark}
    For each GD-step  we solve the inner optimization problem of \Cref{eq:indirect_utility}, i.e., $\max\limits_{x\in \X}\left(\MVNNi{x}-\iprod{p}{x}\right)$, in \Cref{line:alg:MILPxhat} for each bidder $i$, using the MILP encoding of MVNNs from \citep{weissteiner2022monotone}.\footnote{The actual MILP encoding we are using is based on the improved MILP-encoding of \citet[Section~3.2 and Appendix~F]{weissteiner2023bayesian} and slightly modified to work with mMVNNs instead of classical MVNNs.}
\end{remark}

In our experiments, we use $300$ epochs for all domains, with a good-specific learning rate of $1$\% of the price~$p_j^t$ of that good and a learning rate decay of $0.5$\%, i.e., we set 
$T = 300, \lambda = 0.01$ and $\eta = 0.005$, while we set 
$\mu = 2$ and $\nu = 1.01$.

Intuitively, this way \Cref{alg:Constrained_W_minimization} punishes over-demand at least three times as much as under-demand, which means that it can very quickly converge to a price region with no over-demand, and then it starts minimizing under-demand in the same way as suggested by \Cref{thm:GD_on_W}.
Setting $\nu$ to a number even slightly larger than $1$ ensures that \Cref{alg:Constrained_W_minimization} can converge to such a price region even in the extreme case where it starts with large amounts of over-demand. 
As shown in \Cref{fig:app_predicted_feasible_allocation}, even in the MRVM domain where no linear clearing prices exist, the modifications of \Cref{alg:Constrained_W_minimization} were sufficient for it to return a price vector with no predicted over-demand in \emph{all} ML-powered clock rounds, in  $100$\% of our instances (i.e., SATS seeds), while this number was almost $0$\% if we were to apply symmetrical GD as suggested by \Cref{thm:GD_on_W} by setting $\mu = \nu  = 0$ in our \Cref{alg:Constrained_W_minimization}. 
Those results can be found in \Cref{subsec:app:Results_unconstrained_W}.

\begin{figure}[h!]
    \begin{center}
\includegraphics[width=1\columnwidth,trim=0 0 0 0, clip]{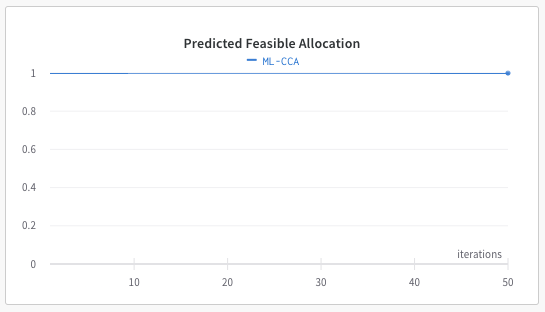}
    \caption{Fraction of instances in the MRVM domain where the price vector returned by \Cref{alg:Constrained_W_minimization} was predicted to be feasible per iteration (starting after the $\Qinit$-phase, i.e., in clock round 50). This fraction is constantly $100$\% across all rounds.}
    \label{fig:app_predicted_feasible_allocation}
    \end{center}
    \vskip -0.25cm
\end{figure}

The following \Cref{example:asymmetryAlgorihtm} shows that even in cases where the constraint~\eqref{eq:app:corollary_constraint}
is mathematically irrelevant, the asymmetry of \Cref{alg:Constrained_W_minimization} can still be very beneficial.

\begin{example}\label{example:asymmetryAlgorihtm}
    Let there be $m=1$ item with capacity $c=10$ and $n=2$ bidders with value functions $v_1(x)=6\1{x\geq6}$ and $v_2(x)=3\1{x\geq1}+2\1{x\geq5}$.\footnote{In the case of $m=1$, we do not distinguish between the 1-dimensional vector $x$ and the number $x_1$ (just as we write $p=p_1$ and $c=c_1$).} First, note that similarly as in \Cref{example:app:constrained_vs_unconstrained_W_minimization} in this example also no LCP exists.
    For every $p<0.5$ we have an over-demand of 1, and for every $p>0.5$ we have an under-demand of at least $3$ ($3$ for $p\in(0.5,1]$, $9$ for $p\in[1,3]$ or $10$ for $p\geq3$).
    For $p=0.5$, we have $\X^*_1(0.5)=\{6\}$ and $\X^*_2(0.5)=\{1,5\}$ and thus the over-demand is either $1$ or $-3$. Thus, the price $p=0.5$ is the unique minimizer of $W$ and $p=0.5$ also fulfills constraint~\eqref{eq:app:corollary_constraint} (see \Cref{foot:PrecisFormulationConstraint}). Therefore, $p=0.5$ is both the unique solution to the constraint problem~\eqref{eqs:ConstraintWmin} \emph{and} the unique solution to the unconstrained problem~\eqref{eq:clearing_objective}, which makes the two optimization problems \eqref{eqs:ConstraintWmin} and \eqref{eq:clearing_objective} mathematically equivalent in this case.
    
    However in practice, any GD-based algorithm will almost never be able to \emph{exactly} compute $p=0.5$. If we run the symmetric version of \Cref{alg:Constrained_W_minimization} (see \Cref{rem:unconstrainedNextPrice}) we would either end up with a price $p$ slightly below $0.5$ or slightly above.\footnote{%
    In this case we would even end up more likely with a price below $0.5$, because than we only have an over-demand of $1$ rather than an under-demand of $3$ items. In every step, where $p^t$ is slightly below $0.5$, GD will push up the price by only $1\gamma$, while in the steps where $p^t$ is slightly above $0.5$, GD will push down the price by $3\gamma$. The probability of \emph{exactly} reaching $p=0.5$ is zero if we initialize $p^0$ with a continuous distribution (as we do in \Cref{alg_constrianed_w:line_Initial_p}). Also \Cref{alg:constrianed_w:line_UseUnconstraintSolutionInstead} is more likely to pick a price slightly below $0.5$, since for every $\epsilon \in [0,0.5)$, $W(0.5+\epsilon,v)=W(0.5,v)+3\epsilon$ and $W(0.5-\epsilon,v)=W(0.5,v)+1\epsilon$ (see \Cref{thm:GD_on_W}). I.e., \Cref{alg:constrianed_w:line_UseUnconstraintSolutionInstead} would clearly prefer $p=0.5-\epsilon$ over $p=0.5+\epsilon$.} However, our asymmetric \Cref{alg:Constrained_W_minimization} makes sure that we do not end up with a price $p$ below $0.5$, which would result in over-demand, but rather at a price $p$ slightly above $0.5$, which results at the feasible allocation where the first bidder gets $6$ items and the second bidder gets $1$ item. In this case, this would be the efficient allocation with a SCW of $9$. If we had only asked demand queries for prices $p\in(0,0.5)$ then the WDP would end up with a SCW of only $6$ by allocating $6$ items to first bidder and $0$ items to the second bidder, since the first bidder would answer any of these demand queries with $6$ items and the second bidder would answer them all with $5$ items, which cannot be combined by the WDP for a capacity of $c=10<11$.
\end{example}

\subsection{Detailed Experimental Results}\label{subsec:Details Results}

\begin{table*}[ht]
    \renewcommand\arraystretch{1.2}
    \setlength\tabcolsep{2pt}
	\robustify\bfseries
	\centering
	\begin{sc}
	\resizebox{1\textwidth}{!}{
	\small
    \begin{tabular}{lcccccccc}
    \toprule
    &  \multicolumn{4}{c}{\textbf{GSVM}}  &\multicolumn{4}{c}{\textbf{LSVM}}\\
        \cmidrule(l{2pt}r{2pt}){2-5}
        \cmidrule(l{2pt}r{2pt}){6-9}
     \textbf{Mechanism}&\textbf{E\textsubscript{clock}} & \textbf{E\textsubscript{raise}} & \textbf{E\textsubscript{profit}} & \textbf{Clear} & \textbf{E\textsubscript{clock}} & \textbf{E\textsubscript{raise}} & \textbf{E\textsubscript{profit}} & \textbf{Clear}\\ 
    \midrule
                                  \multirow{1}{*}{\textbf{ML-CCA}} & \ccell (97.38\,,\,98.23\,,\,98.91) & \ccell(98.51\,,\,98.93\,,\,99.30) & \ccell(100.00\,,\,100.00\,,\,100.00) & \ccell56 & \ccell(89.78\,,\,91.64\,,\,93.34) &  \ccell(95.56,96.39,97.16) & \ccell (99.90\,,\,99.95\,,\,99.99) & \ccell26\\
    \midrule
                      \textbf{CCA} & (88.89\,,\,90.40\,,\,91.84) & (92.64\,,\,93.59\,,\,94.49) & \ccell(99.99\,,\,100.00\,,\,100.00) & 3 & (80.94\,,\,82.56\,,\,84.11) & (90.64\,,\,91.60\,,\,92.54) & (99.55\,,\,99.76\,,\,99.91)  & 0\\

        \midrule
                      \textbf{$p$-value}   & 4.2e-18 & 6.5e-20 & 0.14 & 3.19e-18 & 1.8e-17 & 1.6e-14 & 0.0101  & 2.57e-8\\
    \bottomrule
    \end{tabular}
}
    \end{sc}
    \vskip -0.1 in
    \caption{Detailed results for the GSVM and LSVM domains of ML-CCA vs CCA including the lower and upper 95\%-bootstrapped CI bounds over a test set of $100$ instances of the following metrics: \textsc{\tablecaptionbf{E\textsubscript{clock}}} = efficiency in \% for clock bids, \textsc{\tablecaptionbf{E\textsubscript{raise}}} = efficiency in \% for raised clock bids, \textsc{\tablecaptionbf{E\textsubscript{profit}}} = efficiency in \% for raised clock bids and $100$ profit-max demand queries, \textsc{\tablecaptionbf{Clear}} = percentage of instances where linear clearing prices were found in the clock phase. Winners based on a paired t-test with $\alpha=5\%$ are marked in grey. The $p$-value for this pairwise t-test with $\mathcal{H}_0: \mu_{\text{ML-CCA}}\leq \mu_{\text{CCA}}$ shows at which significance level we can reject the null hypothesis of CCA having a higher or equal average value in the corresponding metric than ML-CCA.}
\label{tab:details_GSVM_LSVM_efficiency_loss_mlca}
    \vskip -0.3cm
\end{table*}
\begin{table*}[ht]
    \renewcommand\arraystretch{1.2}
    \setlength\tabcolsep{2pt}
	\robustify\bfseries
	\centering
	\begin{sc}
	\resizebox{1\textwidth}{!}{
	\small
    \begin{tabular}{lcccccccc}
    \toprule
    &  \multicolumn{4}{c}{\textbf{SRVM}}  &\multicolumn{4}{c}{\textbf{MRVM}}\\
        \cmidrule(l{2pt}r{2pt}){2-5}
        \cmidrule(l{2pt}r{2pt}){6-9}
     \textbf{Mechanism}&\textbf{E\textsubscript{clock}} & \textbf{E\textsubscript{raise}} & \textbf{E\textsubscript{profit}} & \textbf{Clear} & \textbf{E\textsubscript{clock}} & \textbf{E\textsubscript{raise}} & \textbf{E\textsubscript{profit}} & \textbf{Clear}\\ 
    \midrule
                                  \multirow{1}{*}{\textbf{ML-CCA}} & \ccell (99.48\,,\,99.59\,,\,99.68) & \ccell(99.92\,,\,99.93\,,\,99.95) & \ccell(100.00\,,\,100.00\,,\,100.00) & \ccell 13 & \ccell(92.79\,,\,93.04\,,\,93.28)& \ccell(93.09\,,\,93.31\,,\,93.52) & \ccell(93.46\,,\,93.68\,,\,93.89) & \ccell0 \\
    \midrule
                      \textbf{CCA}  & \ccell(99.52\,,\,99.63\,,\,99.73) & (99.75\,,\,99.81\,,\,99.86) & \ccell(100.00\,,\,100.00\,,\,100.00) & 8 & (91.87\,,\,92.44\,,\,92.86) & 
                      (92.25\,,\,92.62\,,\,92.96) & (92.84\,,\,93.18\,,\,93.48)  & \ccell0\\
    \midrule
                      \textbf{$p$-value}  & 0.78 & 2.1e-5 & - & 0.0122 & 7.1e-3 & 1.6e-5 & 3.0e-4  & - \\
    \bottomrule
    \end{tabular}
}
    \end{sc}
    \vskip -0.1 in
    \caption{Detailed results for the SRVM and MRVM domains of ML-CCA vs CCA including the lower and upper 95\%-bootstrapped CI bounds over a test set of $100$ instances of the following metrics: \textsc{\tablecaptionbf{E\textsubscript{clock}}} = efficiency in \% for clock bids, \textsc{\tablecaptionbf{E\textsubscript{raise}}} = efficiency in \% for raised clock bids, \textsc{\tablecaptionbf{E\textsubscript{profit}}} = efficiency in \% for raised clock bids and $100$ profit-max demand queries, \textsc{\tablecaptionbf{Clear}} = percentage of instances where linear clearing prices were found in the clock phase. Winners based on a paired t-test with $\alpha=5\%$ are marked in grey. The $p$-value for this pairwise t-test with $\mathcal{H}_0: \mu_{\text{ML-CCA}}\leq \mu_{\text{CCA}}$ shows at which significance level we can reject the null hypothesis of CCA having a higher or equal average value for the corresponding metric than ML-CCA.}
\label{tab:details_SRVM_MRVM_efficiency_loss_mlca}
    \vskip -0.3cm
\end{table*}

\subsubsection{Detailed Efficiency Results}
In \Cref{tab:details_GSVM_LSVM_efficiency_loss_mlca,tab:details_SRVM_MRVM_efficiency_loss_mlca}, we provide the detailed efficiency results corresponding to \Cref{tab:efficiency_loss_mlca} including 95\%-bootstrapped CIs and $p$-values. Concretely, we now present triplets which show the lower bound of the bootstrapped 95\%-CI, the mean, and the upper bound of the bootstrapped 95\%-CI (e.g., (97.05 , 97.87 , 98.56) means that the lower bound of the bootstrapped 95\%-CI is equal to 97.05, the mean is equal to 97.87, and the upper bound of the bootstrapped 95\%-CI is equal to 98.56). Those bootstrapped CIs were created with the \href{https://docs.scipy.org/doc/scipy/reference/generated/scipy.stats.bootstrap.html}{percentile method} and 10.000 bootstrap-samples. 
It is noteworthy that in all domains other than SRVM (which is very easy, and can be solved by both mechanisms), our ML-CCA outperforms the CCA, both for the clock bids and the clock bids raised, and we can reject the null hypothesis of ML-CCA not outperforming the CCA in terms of  efficiency at great confidence levels that, based on the domain, vary from less than $2$\% all the way to less than $7\cdot10^{-18}\text{\%}=7\mathrm{e}{-20}$.

For GSVM and LSVM, for both practical bidding heuristics (clock bids and clock bids raised) the improvement of our ML-CCA over the CCA is highly significant with all $p$-values being below $2\mathrm{e}{-14}$.
For SRVM, the differences between the methods seem to be small, since both methods almost reach $100\%$ efficiency. However, for clock bids raised this improvement is clearly statistically significant with a $p$-value of $0.0021$\%, while for clock bids there is actually no statistically significant difference. Note that for 7 out of all 8 practical settings (4 domains with two practical bidding heuristics) the $p$-value is below $1$\%. For clock bids raised all four domains have a $p$-value below $0.0021$\%.
For MRVM, for all three bidding heuristics our ML-CCA is significantly better than the CCA with $p$-values $0.71$\%, $1.6\mathrm{e}{-5}$ and $3\mathrm{e}{-4}$.

In all three domains where LCPs exist, our ML-CCA found them statistically significantly more often than the CCA with $p$-values $3.2\mathrm{e}{-18}$, $2.6\mathrm{e}{-8}$ and $1.22$\%.

When reading the efficiency results in \Cref{tab:details_GSVM_LSVM_efficiency_loss_mlca,tab:details_SRVM_MRVM_efficiency_loss_mlca} one should keep in mind that the CCA has generated over $\$ 20$ Billion in revenue for spectrum allocations between $2012$ and $2014$ alone \citep{ausubel2017practical}. Since revenue is a lower bound for SCW, improving CCA's efficiency on average by $1$\% point would have improved the SCW by more than $\$200$ Million within this time range alone and CCA is still the most prominent practical mechanism for spectrum allocation. 

Instead of efficiency~$\frac{V(a)}{V(a^*)}$, one could also study the efficiency loss~$\frac{V(a^*)-V(a)}{V(a^*)}=1-\frac{V(a)}{V(a^*)}$, which corresponds to the relative cost of deviating from the efficient allocation in terms of social welfare. For GSVM with clock bids only our ML-CCA can cut down the efficiency loss of the CCA by a factor $5.4$ from $9.6$\% to $1.77$\%. Similarly for GSVM with clock bids raised our method can cut down the efficiency loss by a factor $5.9$ from $6.41$\% to $1.07$\%.

\subsubsection{Path Plots of Profit-Max Bids}
In \Cref{fig:profit_max}, we show the effect of adding up to $\QPmax=100$ bids in the supplementary round of both our ML-CCA mechanism as well as the CCA using the profit-max heuristic.
In GSVM, both mechanisms can reach $100$\% efficiency using those profit-max bids. However, ML-CCA's clock phase can do so after only $18$ profit-max bids, while the CCA requires $44$. 
In LSVM, for any number of profit max bids, our mechanism exhibits higher efficiency than the CCA, while with $100$ profit-max bids we can reach $99.95$\% efficiency as opposed to $99.76$\% for the CCA.
In SRVM, both mechanisms can reach over $99.99$\% efficiency using only $4$ profit max bids. In MRVM, we can see that for almost any number of profit-max bids, our ML-CCA outperforms the CCA and the results are statistically significant on the $95$\% CI level.\footnote{Note that for example for 100 profit-max bids, the $95$\% CIs slightly overlap, while the corresponding $p$-value of the paired test is $0.03\text{\%}=3\mathrm{e}{-4}$. This suggest that a paired test would show the statistical significance of our ML-CCA having a higher average efficiency than the CCA for probably any number of profit-max bids. Note that a paired test is the correct statistical test for such situations, since both ML-CCA and the CCA were evaluated on the \emph{same} 100 SATS-seeds.} Furthermore, for MRVM, it is interesting to note that the CCA, even with $100$ profit max bids per agent, cannot reach the clock bids raised efficiency of our ML-CCA (i.e., the efficiency with $0$ profit max bids), while it needs $38$ profit max bids to reach the efficiency that the clock phase of our ML-CCA achieves. In other words, the CCA requires up to an additional $138$ value bids per agent to achieve the same efficiency that our ML-CCA can achieve using only $100$ clock bids. 

\begin{figure}[t!]
    \begin{center}
\includegraphics[width=1\columnwidth,trim=0 0 0 0, clip]{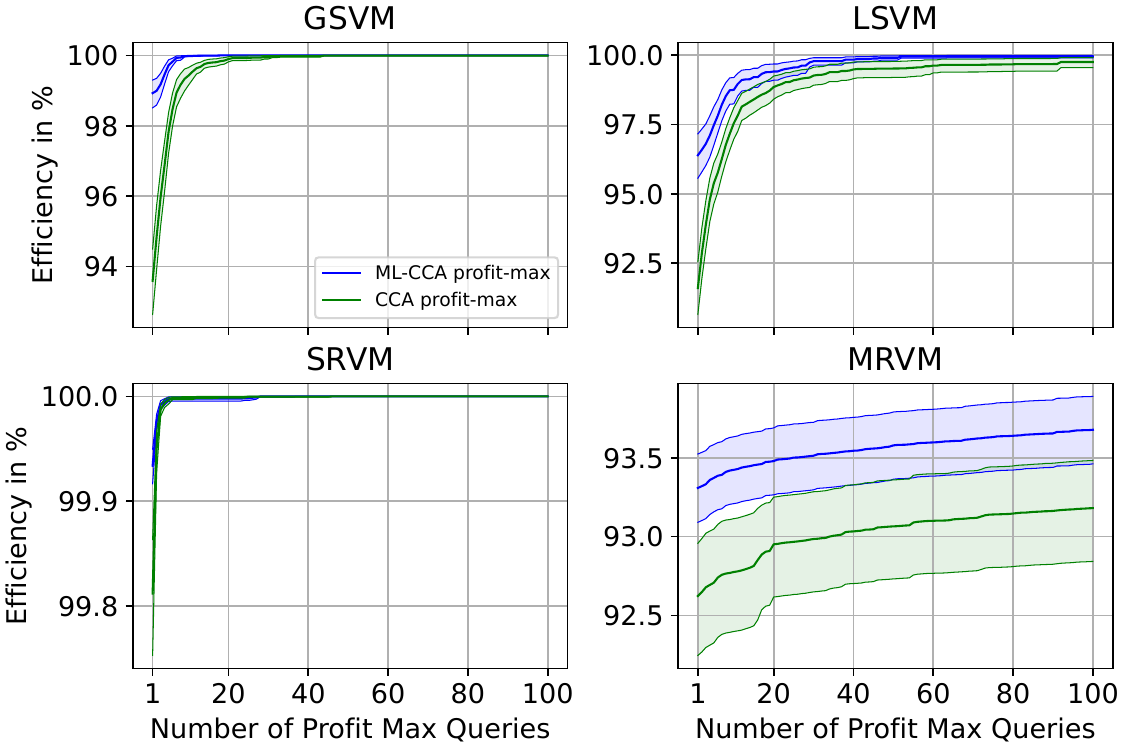}
    \caption{Efficiency of adding additionally $\QPmax=100$ profit-max bids in SATS for ML-CCA and CCA after 100 clock bids and 100 raised clock bids. Averaged over 100 runs including a bootstrapped 95\% CI.}
    \label{fig:profit_max}
    \end{center}
    \vskip -0.25cm
\end{figure}

\subsubsection{Path Plots of Clearing Error and Linear Item Prices}
In \Cref{fig:GSVM_CE_and_LPs,fig:LSVM_CE_and_LPs,fig:SRVM_CE_and_LPs_logscale,fig:MRVM_CE_and_LPs}, we present for all domains in SATS the path plots for clock rounds $r=0,\ldots,100$ of the (squared) \emph{clearing error} (CE) defined as:
\begin{align}\label{eq:clearing_error}
\sum_{j=1}^{m} \left (\left(\sum_{i=1}^{n}x_i^*(p^{r})_j - c_j \right )^2\right),
\end{align}
and the \emph{linear item prices} $p_j^r\in \Rpz$ for $j=1,\ldots,m$.
\begin{figure}[t!]
    \begin{center}
\includegraphics[width=1\columnwidth,trim=0 0 0 0, clip]{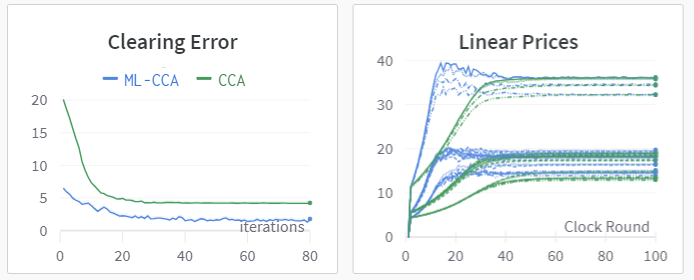}
    \caption{\textbf{Left:} CE of ML-CCA and CCA in GSVM per iteration (starting after the $\Qinit$-phase, i.e., in clock round 20) averaged over 100 runs including the standard error. \textbf{Right:} \emph{linear item  prices} $p_j^r\in \Rpz$ for $j=1,\ldots,18$ defining the demand query for each clock round $r=0,\ldots,100$ averaged over 100 runs.}
    \label{fig:GSVM_CE_and_LPs}
    \end{center}
    \vskip -0.25cm
\end{figure}

In \Cref{fig:GSVM_CE_and_LPs}, we present the results for GSVM. We observe that, for any iteration after the initial $\Qinit=20$ clock rounds, i.e., any clock round $r=20,\ldots,100$, the CE of our ML-CCA is smaller than the CE resulting from CCA. Specifically, ML-CCA has already at iteration 0 (clock round $20)$ a small CE ($\approx$ 7) whilst CCA needs approximately 20 more iterations to reach a similar CE. 

Moreover, from the path plots of the prices we can distinguish the two phases of our proposed ML-CCA: the initial CCA phase with a predefined fixed price increment in case of over-demand for the first $20$ clock rounds (or $50$ in the case of MRVM), and the ML-powered demand query generation phase starting with the $21\textsuperscript{th}$ (or $51\textsuperscript{th}$ for MRVM) clock round. We observe that our approach, for the price of each item, is immediately searching in a local neighbourhood around the plateaued CCA price of that item, and tries to clear the market by slightly increasing and decreasing certain prices. Finally, we can see that CCA properly plateaus to final CCA prices around clock rounds $80-100$, where no item is over-demanded anymore.

In \Cref{fig:LSVM_CE_and_LPs}, we present the results for LSVM. We can see that for any iteration after the initial $\Qinit=20$ clock rounds, i.e., any clock round $r=20,\ldots,100$, the CE of our ML-CCA is significantly smaller than the CE of CCA. Specifically, CCA requires $55$ iterations (i.e., 75 clock rounds) 
to reach a CE comparable to what ML-CCA can achieve in the first iterations.\footnote{Note that by increasing the price increment for CCA, one could reduce the number of iterations until it achieves a low CE, but that could result in a significant drop in the efficiency of the auction, see \Cref{subsec:app:Results_reduced_Qmax}.} 
The price path plots in LSVM display a similar picture as in GSVM. We see that ML-CCA immediately identifies the correct region after the $\Qinit=20$ initial CCA prices and then tries to locally search by increasing and decreasing prices around the plateaued CCA prices, without at the same time sacrificing efficiency, as would be the case if we were to increase the price increment of CCA to reach that price area faster.  

\begin{figure}[t!]
    \begin{center}
\includegraphics[width=1\columnwidth,trim=0 0 0 0, clip]{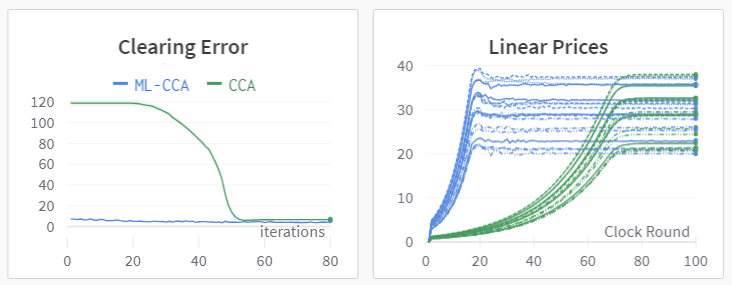}
    \caption{\textbf{Left:} CE of ML-CCA and CCA in LSVM per iteration (starting after the $\Qinit$-phase, i.e., in clock round 20) averaged over 100 runs including the standard error. \textbf{Right:} \emph{linear item  prices} $p_j^r\in \Rpz$ for $j=1,\ldots,18$ defining the demand query for each clock round $r=0,\ldots,100$ averaged over 100 runs.}
    \label{fig:LSVM_CE_and_LPs}
    \end{center}
    \vskip -0.25cm
\end{figure}

In \Cref{fig:SRVM_CE_and_LPs_logscale}, we present the results for SRVM on a log scale. We can see again that, for any iteration after the initial $\Qinit=20$ clock rounds, i.e., any clock round $r=20,\ldots,100$, the CE of our ML-CCA is smaller than that of CCA. 
Furthermore, we can see that in less than $20$ iterations, our ML-CCA is able to drop the CE down to $1.1$ (with $1$ being a lower bound on the CE when clearing prices do not exist), while the CCA can never reach those numbers.

Recall, that in SRVM there are only three distinct items (i.e., $m=3$) with quantities $c_1=6,c_2=14$, and $c_3=9$. We again see the same behaviour of ML-CCA's price discovery mechanism as in the other SATS domains: ML-CCA is able to immediately identify prices for the three items that are close to the final plateaued CCA prices.

\begin{figure}[t!]
    \begin{center}
\includegraphics[width=1\columnwidth,trim=0 0 0 0, clip]{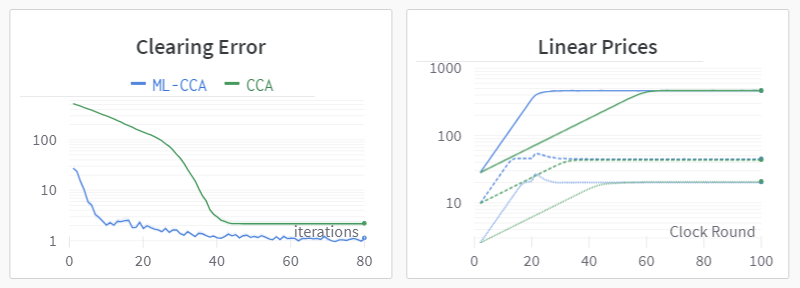}
    \caption{\textbf{Left:} CE of ML-CCA and CCA in SRVM per iteration (starting after the $\Qinit$-phase, i.e., in clock round 20) averaged over 100 runs including the standard error. \textbf{Right:} \emph{linear item  prices} $p_j^r\in \Rpz$ for $j=1,\ldots,3$ defining the demand query for each clock round $r=0,\ldots,100$ averaged over 100 runs. Both $y$-axes are on a log scale.}
    \label{fig:SRVM_CE_and_LPs_logscale}
    \end{center}
    \vskip -0.25cm
\end{figure}

In \Cref{fig:MRVM_CE_and_LPs}, we present the results for MRVM. Interestingly, in this domain the CCA reaches a lower CE than our ML-CCA. However, please note that CE, as defined in \Cref{eq:clearing_error}, penalizes all goods equally, which can be an uninformative metric in domains where the values of different goods vary significantly. 
This is most pronounced in the MRVM domain, where the prices of the most valuable items are more than $100$ times larger than the prices of the least valuable goods , as shown in the right part \Cref{fig:MRVM_CE_and_LPs}. 

\begin{figure}[h!]
    \begin{center}
\includegraphics[width=1\columnwidth,trim=0 0 0 0, clip]{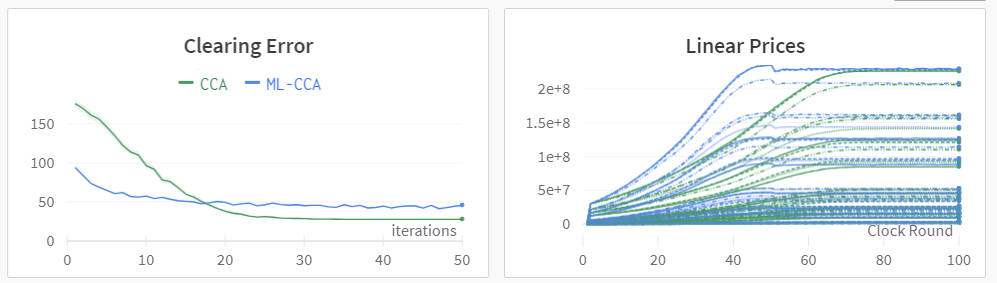}
    \caption{\textbf{Left:} CE of ML-CCA and CCA in MRVM per iteration (starting after the $\Qinit$-phase, i.e., in clock round 50) averaged over 100 runs including the standard error. \textbf{Right:} \emph{linear item  prices} $p_j^r\in \Rpz$ for $j=1,\ldots,42$ defining the demand query for each clock round $r=0,\ldots,100$ averaged over 100 runs.}
    \label{fig:MRVM_CE_and_LPs}
    \end{center}
    \vskip -0.25cm
\end{figure}

\paragraph{Compute Time}
In \Cref{tab:app:compute_time} we report, for our choice of hyperparameters given in \Cref{tab:appenedix_hpo_winners}, 
the average time per round required in each domain to train the $N$ mMVNNs of the bidders according to our \Cref{alg:train_on_dqs} and the time required to generate the next price vector, given the trained MVNNs, using our \Cref{alg:Constrained_W_minimization}. 
The sum of those two numbers is the average time per round required by our ML-CCA per domain to generate the next demand query, once the bidders have responded to the current one. 
For this experiment, we report average results over the same $100$ instances as all other experiments in this paper. 

\begin{table}[h!]
    \centering
    \begin{tabular}{rrrr}
        \toprule
Domain & Train Time & Price Gen. Time & Total  \\
\midrule 
GSVM  &  4.49 & 1.81 &  6.30\\ 
LSVM  &  2.60 & 1.19 &  3.79\\ 
SRVM  &  1.26 & 0.41 &  1.67\\ 
MRVM  &  32.22 & 11.19 &  43.41 \\ 
\bottomrule
\end{tabular}
\caption{Detailed results of the average time required (in minutes) for ML-CCA per round to train the mMVNNs of all bidders, and to generate the price vector of the next demand query, given the trained mMVNNs. Shown are average results over $100$ instances.
}
\label{tab:app:compute_time}
\end{table}

\subsection{Experimental Results for Reduced \texorpdfstring{$\Qmax=50$}{Qmax=50}}\label{subsec:app:Results_reduced_Qmax}
\begin{table*}[ht]
    \renewcommand\arraystretch{1.2}
    \setlength\tabcolsep{2pt}
	\robustify\bfseries
	\centering
	\begin{sc}
	\resizebox{1\textwidth}{!}{
	\small
    \begin{tabular}{lcccccccc}
    \toprule
    &  \multicolumn{4}{c}{\textbf{GSVM}}  &\multicolumn{4}{c}{\textbf{LSVM}}\\
        \cmidrule(l{2pt}r{2pt}){2-5}
        \cmidrule(l{2pt}r{2pt}){6-9}
     \textbf{Mechanism}&\textbf{E\textsubscript{clock}} & \textbf{E\textsubscript{raise}} & \textbf{E\textsubscript{profit}} & \textbf{Clear} & \textbf{E\textsubscript{clock}} & \textbf{E\textsubscript{raise}} & \textbf{E\textsubscript{profit}} & \textbf{Clear}\\ 
    \midrule
                                  \multirow{1}{*}{\textbf{ML-CCA} ($\Qmax = 50$)}  & \ccell (96.97\,,\,97.84\,,\,98.58) & \ccell(98.10\,,\,98.59\,,\,99.03) & \ccell(100.00\,,\,100.00\,,\,100.00) & \ccell 51 & \ccell (88.99\,,\,90.81\,,\,92.50) &  \ccell (95.07\,,\,95.91\,,\,96.70) & \ccell (99.90\,,\,99.95\,,\,99.99) & \ccell 17\\
    \midrule
    \textbf{CCA} ($\Qmax = 50$)  & (89.05\,,\,90.51\,,\,91.88) & (92.74\,,\,93.70\,,\,94.62) & \ccell(99.99\,,\,100.00\,,\,100.00) & 1 & (80.57\,,\,82.21\,,\,83.77) & (90.59\,,\,91.52\,,\,92.46) & (99.64\,,\,99.80\,,\,99.93)  & 0\\
     \textbf{CCA} ($\Qmax = 100$)  & (88.89\,,\,90.40\,,\,91.84) & (92.64\,,\,93.59\,,\,94.49) & (99.99\,,\,100.00\,,\,100.00) & 3 & (80.94\,,\,82.56\,,\,84.11) & (90.64\,,\,91.60\,,\,92.54) &  (99.55\,,\,99.76\,,\,99.91)  & 0\\  \midrule
\textbf{$p$-value} ($\Qmax = 50$)   & 4.2e-17 & 4.2e-18 & 0.1433 & 7.0e-17 & 2.5e-16 & 3.1
e-13 & 0.0145  & 9.1e-6\\ 
                      
    \bottomrule
    \end{tabular}
}
    \end{sc}
    \vskip -0.1 in
    \caption{Detailed results for the GSVM and LSVM domains of ML-CCA vs CCA for $\Qmax = 50$ including the lower and upper 95\%-bootstrapped CI bounds over a test set of $100$ instances of the following metrics: \textsc{\tablecaptionbf{E\textsubscript{clock}}} = efficiency in \% for clock bids, \textsc{\tablecaptionbf{E\textsubscript{raise}}} = efficiency in \% for raised clock bids, \textsc{\tablecaptionbf{E\textsubscript{profit}}} = efficiency in \% for raised clock bids and $100$ profit-max demand queries, \textsc{\tablecaptionbf{Clear}} = percentage of instances where clearing prices were found in the clock phase. Winners for $\Qmax=50$ based on a paired t-test with $\alpha=5\%$ are marked in grey. The $p$-value for this pairwise t-test with $\mathcal{H}_0: \mu_{\text{ML-CCA}}\leq \mu_{\text{CCA}}$ shows at which significance level we can reject the null hypothesis of CCA with $\Qmax=50$ having a higher or equal average value in the corresponding metric than ML-CCA with $\Qmax=50$. Additionally, we also reprint the CCA ($\Qmax=100$) results from \Cref{tab:details_GSVM_LSVM_efficiency_loss_mlca} without marking statistical significance.}
\label{tab:details_GSVM_LSVM_efficiency_loss_mlca_Qmax50}
    \vskip -0.3cm
\end{table*}
\begin{table*}[ht]
    \renewcommand\arraystretch{1.2}
    \setlength\tabcolsep{2pt}
	\robustify\bfseries
	\centering
	\begin{sc}
	\resizebox{1\textwidth}{!}{
	\small
    \begin{tabular}{lcccccccc}
    \toprule
    &  \multicolumn{4}{c}{\textbf{SRVM}}  &\multicolumn{4}{c}{\textbf{MRVM}}\\
        \cmidrule(l{2pt}r{2pt}){2-5}
        \cmidrule(l{2pt}r{2pt}){6-9}
     \textbf{Mechanism}&\textbf{E\textsubscript{clock}} & \textbf{E\textsubscript{raise}} & \textbf{E\textsubscript{profit}} & \textbf{Clear} & \textbf{E\textsubscript{clock}} & \textbf{E\textsubscript{raise}} & \textbf{E\textsubscript{profit}} & \textbf{Clear}\\ 
    \midrule
                                  \multirow{1}{*}{\textbf{ML-CCA} ($\Qmax = 50$)} & \ccell (99.47\,,\,99.58\,,\,99.67) & \ccell(99.92\,,\,99.93\,,\,99.95) & \ccell(100.00\,,\,100.00\,,\,100.00) & \ccell 13 & \ccell(91.63\,,\,92.11\,,\,92.50)& \ccell(91.94\,,\,92.42\,,\,92.81) & \ccell(92.73\,,\,93.04\,,\,93.34) & \ccell0 \\
    \midrule
    \textbf{CCA} ($\Qmax = 50$)  & \ccell(99.36\,,\,99.47\,,\,99.59) & (99.64\,,\,99.72\,,\,99.79) & \ccell(100.00\,,\,100.00\,,\,100.00) & 4 & (87.70\,,\,88.70\,,\,89.61) & 
                      (88.52\,,\,89.28\,,\,90.02) & (89.57\,,\,90.27\,,\,90.92)  & \ccell0\\
    \textbf{CCA} ($\Qmax = 100$)  & (99.52\,,\,99.63\,,\,99.73) & (99.75\,,\,99.81\,,\,99.86) & (100.00\,,\,100.00\,,\,100.00) & 8 & (91.87\,,\,92.44\,,\,92.86) & 
                       (92.25\,,\,92.62\,,\,92.96) & (92.84\,,\,93.18\,,\,93.48)  & 0\\
    \midrule
              \textbf{$p$-value} ($\Qmax = 50$)   & 0.088 & 2.9e-7 & - & 0.0011 & 2.4e-10 & 1.3e-12 & 1.6e-14  & -\\
                      
    \bottomrule
    \end{tabular}
}
    \end{sc}
    \vskip -0.1 in
    \caption{Detailed results for the SRVM and MRVM domains of ML-CCA vs CCA for $\Qmax = 50$ including the lower and upper 95\%-bootstrapped CI bounds over a test set of $100$ instances of the following metrics: \textsc{\tablecaptionbf{E\textsubscript{clock}}} = efficiency in \% for clock bids, \textsc{\tablecaptionbf{E\textsubscript{raise}}} = efficiency in \% for raised clock bids, \textsc{\tablecaptionbf{E\textsubscript{profit}}} = efficiency in \% for raised clock bids and $100$ profit-max demand queries, \textsc{\tablecaptionbf{Clear}} = percentage of instances where clearing prices were found in the clock phase. Winners for $\Qmax=50$ based on a paired t-test with $\alpha=5\%$ are marked in grey. The $p$-value for this pairwise t-test with $\mathcal{H}_0: \mu_{\text{ML-CCA}}\leq \mu_{\text{CCA}}$ shows at which significance level we can reject the null hypothesis of CCA with $\Qmax=50$ having a higher or equal average value in the corresponding metric than ML-CCA with $\Qmax=50$. In all domains (including MRVM) we used $\Qinit=20$ for this experiment. Additionally, we also reprint the CCA ($\Qmax=100$) results from \Cref{tab:details_SRVM_MRVM_efficiency_loss_mlca} without marking statistical significance.} 
\label{tab:details_SRVM_MRVM_efficiency_loss_mlca_Qmax50}
    \vskip -0.3cm
\end{table*}

In this section, we present all results of ML-CCA\footnote{For ML-CCA, we used the same HPs for all domains as in \Cref{sec:experiments}, other than MRVM, were we also set $\Qinit=20$.} and CCA for a reduced number of $\Qmax=50$ demand queries instead of $\Qmax=100$, which were presented in the main paper in \Cref{sec:experiments}.  For ease of exposition we also include the CCA optimized for $\Qmax=100$ as in \Cref{sec:experiments}.
We additionally present $95$\% CIs and a paired t-test with $\alpha=5\%$. The $p$-value for this pairwise t-test with $\mathcal{H}_0: \mu_{\text{ML-CCA}}\leq \mu_{\text{CCA}}$ shows at which significance level we can reject the null hypothesis of CCA with $\Qmax=50$ having a higher or equal average value in the corresponding metric than the ML-CCA with $\Qmax=50$. All results are presented in 
\Cref{tab:details_GSVM_LSVM_efficiency_loss_mlca_Qmax50,tab:details_SRVM_MRVM_efficiency_loss_mlca_Qmax50}. 

Just like for $\Qmax = 100$, our ML-CCA outperforms the CCA considerably in all domains. Specifically, the efficiency improvement after the clock phase is over $7.3$\% points for GSVM, $8.6$\% points in LSVM and $3.4$\%  points in MRVM, and all of those improvements are highly statistically significant, based on a paired t-test with $\alpha=2.4\mathrm{e}{-10}$.
The two mechanisms are statistically tied in SRVM (other than for raised clock bids, where our ML-CCA again out performs the CCA). As pointed out in \Cref{sec:experiments}, this domain is very easy, and both mechanisms can solve it even with 50 clock rounds). 
If we add the clock bids raised heuristic of the supplementary round to both mechanisms, the efficiency improvement of ML-CCA is $4.9$\% points in GSVM, $4.4$\% points in LSVM and $3.1$\% points in MRVM%
. Again all those improvements are highly significant with $\alpha=2.9\mathrm{e}{-7}$, this time also for SRVM and thus for all fur domains.
Finally, it is noteworthy that in the GSVM and LSVM domains, our ML-CCA with $50$ clock rounds can achieve significantly higher efficiency with $50$ clock rounds compared to the CCA with $100$. These results even persist if we add the clock bids raised heuristic of the supplementary round to both mechanisms, which would induce up to an additional $100$ value queries for each bidder in the CCA, and only $50$ value queries for our ML-CCA.

\begin{table*}[ht]
    \renewcommand\arraystretch{1.2}
    \setlength\tabcolsep{2pt}
	\robustify\bfseries
	\centering
	\begin{sc}
	\resizebox{1\textwidth}{!}{
	\small
    \begin{tabular}{lcccccccc}
    \toprule
    &  \multicolumn{4}{c}{\textbf{GSVM}}  &\multicolumn{4}{c}{\textbf{LSVM}}\\
        \cmidrule(l{2pt}r{2pt}){2-5}
        \cmidrule(l{2pt}r{2pt}){6-9}
     \textbf{Mechanism}&\textbf{E\textsubscript{clock}} & \textbf{E\textsubscript{raise}} & \textbf{E\textsubscript{profit}} & \textbf{Clear} & \textbf{E\textsubscript{clock}} & \textbf{E\textsubscript{raise}} & \textbf{E\textsubscript{profit}} & \textbf{Clear}\\ 
    \midrule
    \multirow{1}{*}{\textbf{ML-CCA-C}} & \ccell (97.38\,,\,98.23\,,\,98.91) & \ccell(98.51\,,\,98.93\,,\,99.30) & \ccell(100.00\,,\,100.00\,,\,100.00) & \ccell56 & \ccell(89.78\,,\,91.64\,,\,93.34) &  \ccell(95.56\,,\,96.39\,,\,97.16) & \ccell (99.90\,,\,99.95\,,\,99.99)  & \ccell26\\ \midrule
    \multirow{1}{*}{\textbf{ML-CCA-U}}  &  (97.05\,,\,97.87\,,\,98.56) & \ccell(98.22\,,\,98.67\,,\,99.07) & \ccell(100.00\,,\,100.00\,,\,100.00) & 51 & \ccell(89.84\,,\,91.60\,,\,93.28) &  (95.33\,,\,96.16\,,\,96.94) & \ccell(99.90\,,\,99.95\,,\,99.99)  &  23\\

    \midrule
              \textbf{$p$-value}   & 0.0414 & 0.0965 & - & 0.0122 & 0.3428 & 0.0189 & -  & 0.0416\\
                                  
    \bottomrule
    \end{tabular}
}
    \end{sc}
    \vskip -0.1 in
    \caption{
    ML-CCA with constrained $W$ (ML-CCA-C) and unconstrained $W$ minimization (ML-CCA-U). Shown are averages including the lower and upper 95\%-bootstrapped CI bounds over a test set of $100$ synthetic CA instances for the GSVM and LSVM domains of the following metrics: 
    efficiency in \% for clock bids (\textsc{\tablecaptionbf{E\textsubscript{clock}}}), raised clock bids (\textsc{\tablecaptionbf{E\textsubscript{raise}}}) and raised clock bids and $100$ profit-max bids (\textsc{\tablecaptionbf{E\textsubscript{profit}}}) and percentage of instances where linear clearing prices were found (\textsc{\tablecaptionbf{Clear}}).
    Winners based on a paired t-test with $\alpha=5\%$ are marked in grey. The $p$-value for this pairwise t-test with $\mathcal{H}_0: \mu_{\text{ML-CCA-C}}\leq \mu_{\text{ML-CCA-U}}$ shows at which significance level we can reject the null hypothesis of ML-CCA-U having a higher or equal average value in the corresponding metric than ML-CCA-C.}
\label{tab:details_GSVM_LSVM_efficiency_constrained_vs_unconstrained}
    \vskip -0.3cm
\end{table*}

\begin{table*}[t!]
    \renewcommand\arraystretch{1.2}
    \setlength\tabcolsep{2pt}
	\robustify\bfseries
	\centering
	\begin{sc}
	\resizebox{1\textwidth}{!}{
	\small
    \begin{tabular}{lcccccccc}
    \toprule
    &  \multicolumn{4}{c}{\textbf{SRVM}}  &\multicolumn{4}{c}{\textbf{MRVM}}\\
        \cmidrule(l{2pt}r{2pt}){2-5}
        \cmidrule(l{2pt}r{2pt}){6-9}
     \textbf{Mechanism}&\textbf{E\textsubscript{clock}} & \textbf{E\textsubscript{raise}} & \textbf{E\textsubscript{profit}} & \textbf{Clear} & \textbf{E\textsubscript{clock}} & \textbf{E\textsubscript{raise}} & \textbf{E\textsubscript{profit}} & \textbf{Clear}\\ 
    \midrule
    \multirow{1}{*}{\textbf{ML-CCA-C}} & \ccell (99.48\,,\,99.59\,,\,99.68) & \ccell(99.92\,,\,99.93\,,\,99.95) & \ccell(100.00\,,\,100.00\,,\,100.00) & \ccell13 & \ccell(92.90\,,\,93.12\,,\,93.34)& \ccell(93.03\,,\,93.25\,,\,93.47) & \ccell(93.46\,,\,93.67\,,\,93.88) & \ccell0 \\ \midrule
                                  \multirow{1}{*}{\textbf{ML-CCA-U}}  & \ccell (99.54\,,\,99.63\,,\,99.70) & (99.91\,,\,99.93\,,\,99.94) & \ccell(100.00\,,\,100.00\,,\,100.00) & \ccell13 & (91.49\,,\,92.06\,,\,92.55)& (91.88\,,\,92.31\,,\,92.70) & (92.27\,,\,92.70\,,\,93.09) & \ccell0 \\

    \midrule
              \textbf{$p$-value} & 0.7793 & 0.0431 & - & - & 9.3e-6 & 2.7e-08 & 1.0e-8  & -\\
                                  
    \bottomrule
    \end{tabular}
}
    \end{sc}
    \vskip -0.1 in
    \caption{ ML-CCA with constrained $W$ (ML-CCA-C) and unconstrained $W$ minimization (ML-CCA-U). Shown are averages including the lower and upper 95\%-bootstrapped CI bounds over a test set of $100$ synthetic CA instances for the SRVM and MRVM domains of the following metrics: 
    efficiency in \% for clock bids (\textsc{\tablecaptionbf{E\textsubscript{clock}}}), raised clock bids (\textsc{\tablecaptionbf{E\textsubscript{raise}}}) and raised clock bids plus $100$ profit-max bids (\textsc{\tablecaptionbf{E\textsubscript{profit}}}) and percentage of instances where linear clearing prices were found (\textsc{\tablecaptionbf{Clear}}).
    Winners based on a paired t-test with $\alpha=5\%$ are marked in grey. The $p$-value for this pairwise t-test with $\mathcal{H}_0: \mu_{\text{ML-CCA-C}}\leq \mu_{\text{ML-CCA-U}}$ shows at which significance level we can reject the null hypothesis of ML-CCA-U having a higher or equal average value in the corresponding metric than ML-CCA-C.}
\label{tab:details_SRVM_MRVM_efficiency_constrained_vs_unconstrained}
    \vskip -0.3cm
\end{table*}

\begin{table*}[t!]
    \renewcommand\arraystretch{1.2}
    \setlength\tabcolsep{2pt}
	\robustify\bfseries
	\centering
	\begin{sc}
	\resizebox{1\textwidth}{!}{
	\small
    \begin{tabular}{lcccccccc}
    \toprule
    &  \multicolumn{4}{c}{\textbf{SRVM}}  &\multicolumn{4}{c}{\textbf{MRVM}}\\
        \cmidrule(l{2pt}r{2pt}){2-5}
        \cmidrule(l{2pt}r{2pt}){6-9}
     \textbf{Mechanism}&\textbf{E\textsubscript{clock}} & \textbf{E\textsubscript{raise}} & \textbf{E\textsubscript{profit}} & \textbf{Clear} & \textbf{E\textsubscript{clock}} & \textbf{E\textsubscript{raise}} & \textbf{E\textsubscript{profit}} & \textbf{Clear}\\ 
    \midrule
    \multirow{1}{*}{\textbf{ML-CCA-mMVNN}} & \ccell (99.48\,,\,99.59\,,\,99.68) & \ccell(99.92\,,\,99.93\,,\,99.95) & \ccell(100.00\,,\,100.00\,,\,100.00) & \ccell13 & (92.90\,,\,93.12\,,\,93.34)& \ccell(93.03\,,\,93.25\,,\,93.47) & \ccell(93.46\,,\,93.67\,,\,93.88) & \ccell0 \\ \midrule
    \multirow{1}{*}{\textbf{ML-CCA-MVNN}}  & \ccell (99.53\,,\,99.63\,,\,99.71) & (99.89\,,\,99.91\,,\,99.93) & \ccell(100.00\,,\,100.00\,,\,100.00) & \ccell 13 & \ccell (92.98\,,\,93.18\,,\,93.37)& 
    \ccell (93.19\,,\,93.39\,,\,93.57) & \ccell (93.48\,,\,93.69\,,\,93.89) & \ccell0 \\

    \midrule
              \textbf{$p$-value} & 0.7401 & 0.0027 & - & - & 0.9700 & 0.8787 & 0.8182  & -\\
                                  
    \bottomrule
    \end{tabular}
}
    \end{sc}
    \vskip -0.1 in
    \caption{ ML-CCA with mMVNNs (ML-CCA-mMVNN) and MVNNs (ML-CCA-MVNN) as the ML model architecture. Shown are averages including the lower and upper 95\%-bootstrapped CI bounds over a test set of $100$ synthetic CA instances for the SRVM and MRVM domains of the following metrics: 
    efficiency in \% for clock bids (\textsc{\tablecaptionbf{E\textsubscript{clock}}}), raised clock bids (\textsc{\tablecaptionbf{E\textsubscript{raise}}}) and raised clock bids plus $100$ profit-max bids (\textsc{\tablecaptionbf{E\textsubscript{profit}}}) and percentage of instances where linear clearing prices were found (\textsc{\tablecaptionbf{Clear}}).
    Winners based on a paired t-test with $\alpha=5\%$ are marked in grey. The $p$-value for this pairwise t-test with $\mathcal{H}_0: \mu_{\text{ML-CCA-mMVNN}}\leq \mu_{\text{ML-CCA-MVNN}}$ shows at which significance level we can reject the null hypothesis of ML-CCA-MVNN having a higher or equal average value in the corresponding metric than ML-CCA-mMVNN.}
\label{tab:details_SRVM_MRVM_efficiency_MVNNs_vs_mMVNNs}
    \vskip -0.3cm
\end{table*}

\subsection{Experimental Results for unconstrained \texorpdfstring{$W$}{W} minimization
}
\label{subsec:app:Results_unconstrained_W}

In this section, we empirically evaluate the importance of constraint~\eqref{eq:app:corollary_constraint} in the \textsc{NextPrice}-procedure by comparing the efficiency results of ML-CCA with constraint~\eqref{eq:app:corollary_constraint} (\emph{ML-CCA-C}) and without it (\emph{ML-CCA-U}).

Constraint~\eqref{eq:app:corollary_constraint} ensures that the predicted induced demand should constitute a feasible allocation for every clock round of ML-CCA-C, as discussed in \Cref{sec:ML-powered Demand Query Generation} and \Cref{subsec:app_constrained_W_minimization}.\footnote{ML-CCA-C is our default method, which we simply call ML-CCA outside of \Cref{subsec:app:Results_unconstrained_W}. In particular, we used \textsc{NextPrice} (\Cref{alg:Constrained_W_minimization}) with default hyper-parameters as discussed in \Cref{subsec:app_constrained_W_minimization}.}
On the other side, \emph{ML-CCA-U} optimizes $W$ without any constraint (i.e., by performing unconstrained classical GD on $W$ as suggested by \Cref{thm:connection_clearing_prices_efficiency} and \Cref{thm:GD_on_W}).\footnote{For ML-CCA-U we ignore constraint~\eqref{eq:app:corollary_constraint}, which corresponds to setting the hyper-parameters $\mu=\nu=0$ in \Cref{alg:Constrained_W_minimization} as described in \Cref{rem:unconstrainedNextPrice}.}

Those results are presented in 
\Cref{tab:details_GSVM_LSVM_efficiency_constrained_vs_unconstrained,tab:details_SRVM_MRVM_efficiency_constrained_vs_unconstrained}. 
In the GSVM, LSVM and SRVM domains, the results for the two approaches are almost identical (while constrained $W$ minimization is statistically better in a few cases). A noteworthy difference is that for the GSVM and LSVM domains, the unconstrained $W$ minimization clears the market in $5$\% and $3$\% less of the cases compared to the constrained version. Overall, as suggested in \Cref{sec:ML-powered Demand Query Generation} and \Cref{subsec:app_constrained_W_minimization}, in domains where 
linear prices can achieve low clearing error (see \Cref{subsec:Details Results}, 
\Cref{fig:GSVM_CE_and_LPs,fig:LSVM_CE_and_LPs,fig:SRVM_CE_and_LPs_logscale})
minimizing $W$ by performing classical GD on it without any additional constraints suffices for significant efficiency improvements compared to the CCA. 

In the MRVM\footnote{For this test in MRVM, we used $\Qinit = 70$ for both ML-CCA-C and ML-CCA-U.} domain however, we can see that the efficiency improvement of performing constrained $W$ minimization compared to the unconstrained one is highly statistically significant across all three bidding heuristics with $p$-values ranging from $1.0\textrm{e}{-8}$ to $9.3\textrm{e}{-6}$. 
Without push bids, the efficiency improvement is approximately $1$\% point. This efficiency improvement also persists if we add the clock bids raised and profit max heuristics for the supplementary round, and is again highly statistically significant.
Thus, we can conclude that in all cases, there is no disadvantage to using the constrained $W$ minimization to generate the next price vector in ML-CCA. This clear improvements of ML-CCA-C over ML-CCA-U fits well our hypothesis stated in \Cref{rem:ConstraintImportant} that constraint~\eqref{eq:app:corollary_constraint} is especially important in cases where no LCPs exist (or when the market exhibits a high clearing error as defined in \Cref{eq:clearing_error}, 
see \Cref{fig:GSVM_CE_and_LPs,fig:LSVM_CE_and_LPs,fig:SRVM_CE_and_LPs_logscale,fig:MRVM_CE_and_LPs}).

\subsection{Experimental Results for using MVNNs on multiset domains}
\label{subsec:app:results_mvnns_multiset_domains}
In this section, we experimentally evaluate the benefits of mMVNNs over MVNNs for multiset domains  by comparing the efficiency results of ML-CCA with mMVNNs (ML-CCA-mMVNN) and MVNNs (ML-CCA-MVNN) as the neural network architecture. 
We only compare the two architectures for the two multiset domains, i.e., SRVM and MRVM, as in the other two domains, the two architectures are mathematically equivalent. 
It is important to note that, in the case of MVNNs, even though the network has not incorporated the prior knowledge that some items are identical copies of each other, the price generation algorithm \emph{does make use} of that prior information:
in the case of MVNNs, our price generation algorithm appropriately calculates the total demand of each item as the aggregate demand of that item's copies, and then sets the same price for all of those copies. 

Those results are presented in 
\Cref{tab:details_SRVM_MRVM_efficiency_MVNNs_vs_mMVNNs}. 
In the SRVM domain, the performance of the two architectures is almost identical in terms of both clearing potential and efficiency.
The only statistically significant result is that mMVNNs slightly outperform MVNNs in terms of efficiency if the raised clock bids heuristic is used in the supplementary round.
In the MRVM domain, (linear) clearing prices never exist, so no architecture ever clears the market. 
In terms of efficiency, the only statistically significant result in this domain is that MVNNs actually achieve $0.06$\% higher efficiency compared to mMVNNs after the clock phase.
We have good reason to believe that the strong performance of MVNNs is because, as explained in the previous paragraph, the price generation algorithm \textit{does make use} of the fact that some items are identical copies of each other, even though the MVNNs do not make use of that information. 
However, we are currently unable to verify this hypothesis, as the SATS simulator only supports demand queries where copies of the same item have identical prices.

\fi
\end{document}